\def\?[#1]{\textbf{[#1]}\marginpar{\Large{\textbf{??}}}}
\let\epsilon=\varepsilon 
\newcommand{\DD}{{\mathcal D}}
\newcommand{\HH}{{\mathcal H}}
\newcommand{\OO}{{\mathcal O}}
\newcommand{\RR}{{\mathbb R}}
\newcommand{\ZZ}{{\mathbb Z}}
\newcommand{\UU}{{\mathcal U}}
\newcommand{\NN}{{\mathcal N}}
\newcommand{\CC}{{\mathbb C}}
\newcommand{\cchi}{\Tilde{\chi}}
\newcommand{\CI}{{{\mathcal C}^\infty}}
\newcommand{\CIb}{{{\mathcal C}^\infty_{b}}}
\newcommand{\CIc}{{{\mathcal C}^\infty_{\rm{c}}}}
\newcommand{\In}{\textrm{in}}
\newcommand{\out}{\textrm{out}}
\let\sharp=\OO
\newtheorem{thm}{Theorem}
\newtheorem{lem}{Lemma}
\numberwithin{equation}{section}
\numberwithin{lem}{section}
\DeclareMathOperator{\Det}{det}
\DeclareMathOperator{\Diff}{Diff}
\DeclareMathOperator{\dist}{dist}
\DeclareMathOperator{\Spec}{Spec}
\DeclareMathOperator{\comp}{comp}
\DeclareMathOperator{\id}{id}
\DeclareMathOperator{\Ran}{Ran}
\DeclareMathOperator{\rank}{rank}
\let\Re=\Real
\DeclareMathOperator{\supp}{supp}
\DeclareMathOperator{\tr}{tr}
\title{Resonances as Viscosity Limits for Black Box  Perturbations}
\author{Haoren Xiong}
\email{xiong@math.berkeley.edu}
\address{Department of Mathematics, University of California,
Berkeley, CA 94720, USA}
\begin{document}

\begin{abstract}
We show that the complex absorbing potential (CAP) method for computing
scattering resonances applies to an abstractly defined class of black box perturbations of the Laplacian in $\RR^n$ which can be analytically extended from $\RR^n$ to a conic neighborhood in $\CC^n$ near infinity. The black box setting allows a unifying treatment of diverse problems ranging from obstacle scattering to scattering on finite volume surfaces. 
\end{abstract}

\maketitle

\section{Introduction and statement of results}
\label{introduction}

The complex absorbing potential (CAP) method has been used as a computational
tool for finding scattering resonances -- see Riss--Meyer \cite{RiMe} and Seideman--Miller \cite{semi} for an early treatment and Jagau et al \cite{Jag} for some recent developments. Zworski \cite{Zw-vis} showed that scattering resonances of $-\Delta+V$, $V \in L^\infty_{\comp} $, are limits of eigenvalues of $-\Delta+V-i\epsilon x^2$ as $\epsilon\to 0+$ . The situation is very different for potentials of the Wigner--von Neumann type, in which case Kameoka and Nakamura \cite{Kameoka2020} showed that the corresponding limits exist away from a discrete set of thresholds. Using an approach closer to \cite{Kameoka2020} than \cite{Zw-vis}, the author extended Zworski's result to potentials which are exponentially decaying \cite{xiong2020}. In this paper we show that the CAP method is also valid for an abstractly defined class of \textit{black box} perturbations of the Laplacian in $\RR^n$ which can be analytically extended from $\RR^n$ to a conic neighborhood in $\CC^n$ near infinity.

We formulate black box scattering using the abstract setting introduced by Sj\"ostrand and Zworski in \cite{SZ1} except that the operator $P$ is not assumed to be equal to $-\Delta$ near infinity. For that we follow Sj\"ostrand \cite{sj97} and assume that $P$ is a dilation analytic perturbation of $-\Delta$ near infinity. The black box formalism allows an abstract treatment of diverse scattering problems without addressing the details of specific situations -- see Examples 1--3 later in this section. We recall the setup as follows:  

Let $\HH$ be a complex separable Hilbert space with an orthogonal decomposition:
\begin{equation}
\label{eqn:Hilbert space}
    \HH = \HH_{R_0} \oplus L^2 (\RR^n\setminus B(0,R_0)),
\end{equation}
where $B(x,R)=\{ y\in\RR^n : |x-y|<R \}$ and $R_0$ is fixed. The corresponding orthogonal projections will be denoted by $u \mapsto u|_{B(0,R_0)}$, and $u\mapsto u|_{\RR^n\setminus B(0,R_0)}$ or simply by the characteristic function $1_L$ of the corresponding set $L$. We consider an unbounded self-adjoint operator
\begin{equation}
\label{eqn:P}
    P : \HH \to \HH \quad\textrm{with domain }\DD.
\end{equation}
We assume that 
\begin{equation}
\label{eqn:domain D}
    \DD|_{\RR^n\setminus B(0,R_0)} \subset H^2(\RR^n\setminus B(0,R_0)),
\end{equation}
and conversely, $u\in\DD$ if $u\in H^2(\RR^n\setminus B(0,R_0))$ and $u$ vanishes near $B(0,R_0)$; and that
\begin{equation}
\label{eqn:P compact}
    1_{B(0,R_0)} (P + i)^{-1}\textrm{ is compact}.
\end{equation}
We also assume that,
\begin{equation}
\label{eqn:Q defn}
\begin{gathered}
    1_{\RR^n\setminus B(0,R_0)} Pu = Q(u|_{\RR^n\setminus B(0,R_0)}),\quad\textrm{ for all }u\in\DD,  \\
    Q = -\sum_{j,k=1}^n \partial_{x_j}(g^{jk}(x)\partial_{x_k}) + c(x),\quad g^{jk},c\in \CIb(\RR^n).
\end{gathered}
\end{equation}
Here $\CIb$ denotes the space of $\CI$ functions with all derivatives bounded. Note that if $\psi\in\CIb(\RR^n)$ is constant near $B(0,R_0)$, then there is a natural way to define the multiplication: $\HH\owns u \mapsto \psi u \in\HH$, and we have $\psi u\in\DD$ if $u\in\DD$.

It is further assumed that $Q$ is formally self-adjoint, i.e. $g^{jk},\,c$ are real-valued functions on $\RR^n$ satisfying
\begin{equation}
\label{eqn:Qelliptic}
\begin{gathered}
    |\sum_{j,k=1}^n g^{jk}(x)\xi_j\xi_k| \geq C^{-1}|\xi|^2, \\
    \sum_{j,k=1}^n g^{jk}(x)\xi_j\xi_k + c(x) \to \xi^2,\ |x|\to\infty.
\end{gathered}
\end{equation}

We will use the method of complex scaling -- see \S \ref{complex scaling} to define the resonances of $P$. For that we follow \cite{sj97} to make the  following assumptions:
\begin{equation}
\label{Analytic extension}
\begin{gathered}
    \textrm{There exist }\theta_0\in [0,\pi/8],\,\delta>0,\textrm{ and }R \geq R_0,\textrm{ such that}\\
    \textrm{the coefficients } g^{jk}(x), c(x)\textrm{ of }Q \textrm{ extend analytically in }x\textrm{ to } \\
    \{ s\omega : \omega\in\CC^n,\, \dist(\omega,\mathbb{S}^{n-1})<\delta,\ s\in\CC,\, |s|>R,\, \arg s\in(-\delta,\theta_0 + \delta) \} \\
    \textrm{ and the second half of \eqref{eqn:Qelliptic} remains valid in this larger set.} 
\end{gathered}
\end{equation}
We can now define the resonances $z_j$ of $P$ in $\{z\in \CC\setminus\{0\} : \arg z > -2\theta_0\}$ as the eigenvalues of $P$ on a suitable contour in $\CC^n$, see \cite{SZ1} and \S \ref{complex scaling}.

We now introduce a \textit{regularized} operator,
\begin{equation}
\label{eqn:Peps}
    P_\epsilon := P - i\epsilon(1-\chi(x))x^2,\quad\epsilon>0,
\end{equation}
where $\chi\in\CIc(\RR^n)$ is equal to $1$ near $\overline{B(0,R_0)}$; $ x^2 := x_1^2 + \cdots + x_n^2 $. It follows from \S \ref{section:Peps} that $ P_\epsilon $
is an unbounded operator on $ \HH $ with a discrete spectrum. We have
\begin{thm}
\label{t:1}
Suppose that $ \{ z_j ( \epsilon) \}_{j=1}^\infty $ are
the eigenvalues of $ P_\epsilon $. Then, uniformly on any compact subset of the sector $ \{ z\in\CC\setminus\{0\}\,:\,-2\theta_0 < \arg z < 3\pi/2 + 2\theta_0 \} $,
\[   z_j ( \epsilon ) \to z_j ,  \ \ \epsilon \to 0 + , \]
where $ z_j $ are the resonances of $ P $.
\end{thm}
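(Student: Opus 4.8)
The plan is to follow the strategy of comparing resolvents of $P_\epsilon$ with the complex-scaled operator $P_\theta$ and showing that the eigenvalues of the former converge to those of the latter. Recall that resonances of $P$ are by definition the eigenvalues of a complex-scaled operator $P_\theta$ obtained by deforming $\RR^n$ to a contour $\Gamma_\theta \subset \CC^n$ outside a large ball, where $\Gamma_\theta$ coincides with $e^{i\theta}\RR^n$ near infinity; the dilation-analyticity assumption \eqref{Analytic extension} makes $P_\theta$ well defined, and its discrete spectrum in the sector $\arg z > -2\theta$ consists of the resonances $z_j$ (independent of $\theta$). The key point is that $P_\epsilon$ should be viewed as a particular instance of a \emph{complex deformation}, where instead of a sharp rotation one uses the smooth absorbing term $-i\epsilon(1-\chi)x^2$. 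So first I would set up, in the section referenced as \S\ref{section:Peps}, the precise mapping properties of $P_\epsilon$: self-adjointness of $P$ plus the relatively bounded negative imaginary perturbation gives that $P_\epsilon$ generates a semigroup and has discrete spectrum in a half-plane, using \eqref{eqn:P compact} to localize the essential spectrum issues to the model operator $Q_\epsilon = Q - i\epsilon(1-\chi)x^2$ on $\RR^n \setminus B(0,R_0)$.

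Next, the heart of the argument is a resolvent comparison. I would fix a reference angle $\theta \in (0,\theta_0]$ and an almost-analytic/scaling deformation interpolating between the identity near $B(0,R_0)$ and the complex dilation at infinity, then conjugate $P_\epsilon$ by a suitable (in general non-unitary) deformation operator to bring it into a form directly comparable with $P_\theta$. Concretely, following the Wigner–von Neumann-type approach of \cite{Kameoka2020} and the author's earlier work \cite{xiong2020}, one introduces a contour $\Gamma_{\epsilon}$ that agrees with $\RR^n$ on a ball of radius $\sim \epsilon^{-1/2}$ (or a slowly growing radius) and with $e^{i\theta}\RR^n$ further out, so that on the far region $-i\epsilon x^2$ and the curvature of $\Gamma_\theta$ produce comparable ellipticity in the relevant sector. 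The analytic extension hypothesis \eqref{Analytic extension} is exactly what lets us move the coefficients $g^{jk}, c$ onto this contour. Then one proves, away from the $z_j$, a uniform-in-$\epsilon$ bound $\|(P_\epsilon - z)^{-1}\| = O(1)$ together with convergence $(P_\epsilon - z)^{-1} \to (P_\theta - z)^{-1}$ in an appropriate operator topology (e.g. as operators from compactly supported to locally-$L^2$ functions, or in trace-class modulo the black box), using that both operators differ from $-\Delta$ only in a controlled region and that \eqref{eqn:P compact} reduces the spectral analysis to a finite-rank black box plus the exterior model.

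From the resolvent convergence one extracts the eigenvalue convergence by a standard contour-integral (Riesz projection) argument: for a resonance $z_j$ of multiplicity $m_j$, integrate $(P_\epsilon - z)^{-1}$ over a small circle around $z_j$ containing no other $z_k$; uniform resolvent bounds on the circle plus convergence show the spectral projections $\Pi_j(\epsilon)$ converge in norm to $\Pi_j$, hence $\rank \Pi_j(\epsilon) = m_j$ for small $\epsilon$, giving exactly $m_j$ eigenvalues $z_j(\epsilon)$ (with multiplicity) tending to $z_j$; uniformity on compact subsets of the stated sector follows by covering. Extending the sector all the way to $\arg z < 3\pi/2 + 2\theta_0$ — rather than just $\arg z > -2\theta_0$ as for a single scaling angle — requires iterating the deformation with a family of angles, or equivalently observing that the absorbing potential $-i\epsilon x^2$ behaves, after rescaling, like a rotation by angle near $-\pi/2$ in the far field, so it sees a rotated sector; this is handled by running the comparison with $P_\theta$ for $\theta$ ranging over an interval and patching the half-planes $\{\arg z > -2\theta\}$, $\theta \in (0,\theta_0]$, together with the $\pi/2$ shift coming from the viscosity term. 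The main obstacle I anticipate is the \emph{uniform} (in $\epsilon$) resolvent estimate on the exterior model operator $Q_\epsilon$ on the deformed contour: one must show no spectrum of $P_\epsilon$ accumulates at a non-resonance point as $\epsilon \to 0$, which amounts to a semiclassical-type (with $h \sim \epsilon^{1/2}$) elliptic/Carleman estimate for $Q - i\epsilon(1-\chi)x^2 - z$ in the region where the absorbing potential is not yet dominant, combined with absorption estimates where it is; carrying this out in the black box generality, only using \eqref{eqn:domain D}, \eqref{eqn:P compact} and \eqref{eqn:Qelliptic}–\eqref{Analytic extension} rather than an explicit form of $P$ near $B(0,R_0)$, is the technically delicate part.
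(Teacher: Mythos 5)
Your opening moves are sound and match the paper: complex scaling to define resonances, discrete spectrum and scaling-invariance of $\Spec(P_\epsilon)$, and a uniform-in-$\epsilon$ resolvent bound on $P_{\epsilon,\theta}$ away from $\Spec(P_\theta)$ (the paper's Lemma~\ref{lem:Pepstheta resolvent norm}) to rule out accumulation at non-resonance points, which is exactly how \eqref{eqn:convergence 1} is proved. The sector $-2\theta_0<\arg z<3\pi/2+2\theta_0$ is also handled as you suggest, by varying $\theta$.

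The gap is in the multiplicity count \eqref{eqn:convergence 2}, and it is the reason the paper introduces the whole Dirichlet-to-Neumann machinery of \S\ref{section:N operator}--\S\ref{section:deform obstacle}. Two of your proposed steps fail in the black box setting. First, the $\epsilon$-adapted contour $\Gamma_\epsilon$ and the ``conjugate $P_\epsilon$ by a deformation operator'' idea, borrowed from \cite{Kameoka2020} and \cite{xiong2020}, both require $P$ to be a concrete (pseudo)differential operator everywhere; here $P$ is only abstractly defined on $\HH_{R_0}$ and you cannot push a contour through, or conjugate across, the ball $B(0,R_0)$. The paper explicitly flags this (``Since our operator is an abstract perturbation of $-\Delta$, ... we use a different method from \cite{Zw-vis} and \cite{xiong2020}''). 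What survives of your idea is only the cutoff-resolvent agreement $\psi(P_\epsilon-z)^{-1}\psi=\psi(P_{\epsilon,\theta}-z)^{-1}\psi$ (Lemma~\ref{lem:agreement of resolvent}), proved by deforming contours only \emph{outside} the black box. Second, and more fundamentally, the Riesz-projection argument that ``spectral projections $\Pi_j(\epsilon)$ converge in norm to $\Pi_j$'' does not hold: the perturbation $-i\epsilon(1-\chi)x_\theta^2$ is unbounded on $\HH_\theta$, so $(P_{\epsilon,\theta}-z)^{-1}$ does not converge to $(P_\theta-z)^{-1}$ in operator norm, only in the compact-to-local topology you mention; and that weaker convergence does not control $\rank\Pi_j(\epsilon)$. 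Cutting off by $\psi$ destroys the projection structure, so one also cannot pass through $\psi\Pi_{\epsilon,\theta}\psi$.

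The paper's actual mechanism, which you would need to supply, is: (i) an artificial smooth obstacle $\OO\supset\overline{B(0,R_0)}$ and the Dirichlet-to-Neumann operator $\NN_{\epsilon,\theta}(z)=\NN^\out_{\epsilon,\theta}(z)-\NN^\In_P(z)$ on $\partial\OO$, whose zeros/poles count the eigenvalues of $P_{\epsilon,\theta}$ with multiplicity (Lemma~\ref{lem:DtoN}); (ii) deformations of the obstacle (Lemmas~\ref{lem:deform obstacle}, \ref{lem:deform obstacle P sharp}) so that $\NN_{\epsilon,\theta}(z)$ is well defined near each $z_j$, together with a reduction to the case where $P$ has no compactly supported embedded eigenvalues in $\Omega$; (iii) an estimate $\|\widehat\NN_{\epsilon,\theta}(w)-\widehat\NN_\theta(w)\|<1$ on small circles $\partial D(z_j,\delta)$, obtained via the exponential off-diagonal decay of the Green function of $Q_\theta^\OO$ (Lemma~\ref{lem:green function}) and a Schur test — this is precisely what tames the unbounded factor $x_\theta^2$ — and then the Gohberg--Sigal--Rouch\'e theorem. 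None of (i)--(iii) appears in your outline, and without them the multiplicity statement does not follow from the resolvent estimates you propose.
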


\noindent
{\bf Remark:} We will prove a more precise version of this theorem in \S\ref{section:poc}: it involves the multiplicities of $z_j$ and $z_j(\epsilon)$ defined in \S\ref{complex scaling} and \S\ref{section:Peps} respectively. The term viscosity is motivated by the viscosity definition of Pollicott--Ruelle resonances given in Dyatlov–Zworski \cite{dyatlov2015}.

Fixed complex absorbing potentials have already been used in mathematical literature on scattering resonances. Stefanov \cite{stefanov2005} showed that semiclassical resonances
close to the real axis can be well approximated using eigenvalues of the Hamiltonian modified by a complex absorbing potential. For applications of fixed complex absorbing potentials in generalized geometric settings see for instance Nonnenmacher--Zworski \cite{NZ1}, \cite{NZ2} and Vasy \cite{V}. The analogous results to Theorem \ref{t:1} were proved for Pollicott--Ruelle resonances in \cite{dyatlov2015}, for kinetic Brownian motion by Drouot \cite{Drouot}, for gradient flows by Dang--Rivi\`{e}re \cite{dang2017} (following earlier work of Frenkel--Losev--Nekrasov \cite{Frenkel}), and for $0$th order pseudodifferential operators, motivated by problems in fluid mechanics, by Galkowski--Zworski \cite{galkowski2019}. 

\medskip
\noindent
{\bf Example 1. Obstacle scattering.} Suppose that $\OO\subset\overline{B(0,R_0)}$ is an open set such that $\partial\OO$ is a smooth hypersurface in $\RR^n$. Let $\HH = L^2(\RR^n\setminus\OO)$, and $P=-\Delta|_{\RR^n\setminus\OO}$ on the exterior domain realized with any self-adjoint boundary
conditions on $\partial\OO$. For instance, the Dirichlet boundary condition
\[
    \DD = \{ u\in H^2(\RR^n\setminus\OO) : u|_{\partial\OO} = 0 \}
\]
or the Neumann/Robin boundary condition
\[
    \DD = \{ u\in H^2(\RR^n\setminus\OO) : \partial_\nu u + \eta u|_{\partial\OO} = 0 \}
\]
where $\partial_\nu$ is the normal derivative with respect to $\partial\OO$ and $\eta$ is a real-valued smooth function on $\partial\OO$. Theorem \ref{t:1} shows that the eigenvalues of $P-i\epsilon x^2$ converge to the resonances of $P$ (the irrelevance of the missing $i\epsilon\chi (x) x^2$ term comes from continuity of resonances under compactly supported perturbations -- see Stefanov \cite{stefanov1994}).

\medskip
\noindent
{\bf Example 2. Scattering on asymptotically Euclidean space.} Let $M$ be a real analytic manifold which is diffeomorphic to $\RR^n$ near infinity and equipped with a real analytic metric $g$ which is asymptotically Euclidean. More precisely, let $g_{ij}=\delta_{ij} + h_{ij}$ be the metric tensor then we assume that $h_{ij}(x)$ extend analytically in $x$ to 
\[
    \{ s\omega : \omega\in\CC^n,\, \dist(\omega,\mathbb{S}^{n-1})<\delta,\ s\in\CC,\, |s|>R,\, \arg s\in(-\delta,\theta_0 + \delta) \}
\]
for some $\theta_0\in [0,\pi/8]$, $\delta>0$, $R\geq R_0$, and that $h_{ij} \to 0$ in this larger set. We put $P=-\Delta_g$, the Laplace--Beltrami operator with respect to the metric $g$, then all the black box assumptions are satisfied. Suppose that $\chi\in\CIc(M;[0,1])$ is equal to $1$ near some compact set $K$ and that $M\setminus K$ is diffeomorphic to $\RR^n\setminus \overline{B(0,R_0)}$. Then the operator $-\Delta_g-i\epsilon(1-\chi(x))x^2$ has a discrete spectrum for $\epsilon>0$ and the eigenvalues converge to the resonances of $-\Delta_g$ uniformly on compact subsets of $-2\theta_0 < \arg z < 3\pi/2 + 2\theta_0$.

\medskip
\noindent
{\bf Example 3. Scattering on finite volume surfaces.} This example was already discussed in \cite{Zw-vis} but this paper provides a complete proof via the black box setting. Consider the modular surface $M = SL_2(\ZZ)\backslash \mathbb{H}^2$ (or any surfaces with cusps -- see \cite[\S 4.1, Example 3]{res}) equipped with the Poincar\'e metric $g$ and $ \Delta_M \leq 0 $
the Laplacian on $ M $. We choose the fundamental domain of $SL_2(\ZZ)$ to be $\{ x+iy\in\mathbb{H}^2 : |x|\leq 1/2, x^2 + y^2\geq 1  \}$ then $\Delta_M$ 
in the cusp $ y > 1 $ is given by $ y^{2}( \partial_x^2 + \partial_y^2 ) $. Let $r=\log y$, $\theta =2\pi x$, then $M$ in $(r,\theta)$ coordinates admits the following decomposition: 
\[
    M = M_0\cup M_1,\  (M_1,g|_{M_1})= ([0,\infty)_r\times\mathbb{S}_\theta^1,\,dr^2 + (2\pi)^{-2} e^{-2r}{d\theta} ^2),\ \mathbb{S}^1 = \RR/2\pi\ZZ.  
\]
We recall the black box setup in this case from \cite[\S 4.1, Example 3]{res}. Let
\[
    \HH = \HH_0\oplus L^2([0,\infty), dr),\quad \HH_0 = L^2(M_0)\oplus \HH_0^0,
\]
where (with $\ZZ^* :=\ZZ\setminus\{0\}$)
\[
    \HH_0^0 = \left\{ \{a_n(r)\}_{n\in\ZZ^*} : a_n\in L^2([0,\infty)),\ \sum_{n\in\ZZ^*}\int_0^\infty |a_n(r)|^2 dr < \infty\right\}.
\]
We can identify $L^2(M)$ with $\HH$ via the following isomorphism:
\[
\begin{gathered}
    \iota : L^2(M) \owns u \mapsto \big( u|_{M_0}, \{e^{-r/2}u_n(r)\}_{n\in\ZZ^*}, e^{-r/2} u_0(r) \big) \in \HH, \\
    u_n(r) := \frac{1}{2\pi} \int_{\mathbb{S}^1} u(r,\theta) e^{-in\theta} d\theta,\quad r>0.
\end{gathered}
\]
Then $P:= -\Delta_M -1/4$ is a black box Hamiltonian on $\HH$ which equals $-\partial_r^2$ on $L^2([0,\infty), dr)$ -- see \cite[\S 4.1, Example 3]{res}. In the language of Theorem \ref{t:1} and in $ ( x, y ) $ coordinates
\[
    P_\epsilon = - \Delta_M - 1/4 - i \epsilon ( 1 - \chi ( y ) ) (\log y )^2 \Pi_0,\quad \Pi_0 u(x,y) := \int_{-1/2}^{1/2} u(x',y)\,dx'.
\]
where $ \chi \in \CIc ( [0, \infty ) ) $, $ \chi ( y ) \equiv 1 $ for $ y < 2$ and $ \chi (y ) \equiv 0 $ for $ y > 3 $.
The eigenvalues of $P_\epsilon$ converge to the resonances of $P$ uniformly on compact subsets of $ \arg z > - \pi /4 $. Equivalently if we define $s ( \epsilon ) \in \Sigma_\epsilon \Leftrightarrow s( \epsilon ) ( 1 - 
s ( \epsilon ) ) - 1/4 \in \Spec ( P_\epsilon )$, 
then the limit points of $ \Sigma_\epsilon $, $ \epsilon \to 0+ $, in 
$ \Re s < 1/2 $, $ \arg(s-1/2) \neq 11\pi/8 $ are given by the nontrivial 
zeros of $ \zeta ( 2s ) $ where $\zeta$ is the Riemann zeta function -- see \cite[Example 2]{Zw-vis} and \cite[\S 4.4 Example 3]{res}.

The paper is organized as follows. In \S\ref{complex scaling} we review the method of complex scaling and define the resonances of $P$ as the eigenvalues of the complex scaled operator $P_\theta$. In \S\ref{section:Peps} we show that $P_\epsilon$ has a discrete spectrum in $\CC\setminus e^{-i\pi/4}[0,\infty)$, which is invariant under complex scaling. Since our operator is an abstract perturbation of $-\Delta$, in \S\ref{section:N operator} we use a different method from \cite{Zw-vis} and \cite{xiong2020} to characterize the eigenvalues of $P_{\epsilon,\theta}$, $\epsilon\geq 0$. More precisely, we use a reference operator reviewed in \S\ref{reference operator} to introduce the Dirichlet-to-Neumann operator ${\NN}_{\epsilon,\theta}(z)$ associated with $P_{\epsilon,\theta}$ and an artificial smooth obstacle $\OO$. The artificial obstacle problem is needed to separate the abstract black box from the differential operator outside. The operator $\NN_{\epsilon,\theta}(z)$ is well-defined for all $z$ except for a discrete set depending on the obstacle, and we show that the eigenvalues of $P_{\epsilon,\theta}$ can be identified with the poles of $z\mapsto \NN_{\epsilon,\theta}(z)^{-1}$, with agreement of multiplicities. In \S\ref{section:deform obstacle} we show that the obstacle can be chosen so that the corresponding $\NN_{\epsilon,\theta}(z)$ is well-defined near the resonances $z_j$. The proof of Theorem \ref{t:1} is completed in \S \ref{section:poc} by obtaining further estimates on $\NN_{\epsilon,\theta}(z)$.
 
\medskip
\noindent
{\bf Notation.}
We use the following notation: $ f =  O_\ell (
 g   )_H $ means that
$ \|f \|_H  \leq C_\ell  g $ where the norm (or any seminorm) is in the
space $ H$, and the constant $ C_\ell  $ depends on $ \ell $. When either
$ \ell $ or
$ H $ are absent then the constant is universal or the estimate is
scalar, respectively. When $ G = O_\ell ( g ) : {H_1\to H_2 } $ then
the operator $ G : H_1  \to H_2 $ has its norm bounded by $ C_\ell g $.
Also when no confusion is likely to result, we denote the operator
$ f \mapsto g f $ where $ g $ is a function by $ g $.

\medskip
\noindent
{\sc Acknowledgments.} The author would like to thank
Maciej Zworski for helpful discussions. This project was supported in part by
the National Science Foundation grant DMS-1901462.

\section{Preliminaries}
\label{preliminaries}

\subsection{Review of Complex Scaling}
\label{complex scaling}
Complex scaling has been a standard technique in resonance theory since the works of Aguilar--Combes \cite{AgCo}, Balslev--Combes \cite{BaCo} and Simon \cite{simon}. Here we follow rather closely the presentation in \cite{sj97} since our assumptions on the operator $P$ is weaker than \cite{SZ1}.

A smooth submanifold $\Gamma\subset\CC^n$ is said to be totally real if $T_x\Gamma\cap i T_x\Gamma = \{0\}$ for every $x\in\Gamma$, where we identify $T_x\Gamma$ with a real subspace of $T_x\CC^n\simeq \CC^n$. We say that $\Gamma$ is maximally totally real if $\Gamma$ is totally real and of maximal (real) dimension $n$, the natural example is $\Gamma=\RR^n$. Let $\Gamma\subset\CC^n$ be smooth and of real dimension $n$, then locally $\Gamma$ can be represented using real coordinates: $\RR^n \owns x \mapsto f(x)\in\Gamma$. Let $\Tilde{f}$ be an almost analytic extension of $f$ so that  $\Bar{\partial}\Tilde{f}$ vanishes to infinite order on $\RR^n$. Let $x\in\RR^n$, then since $d\Tilde{f}(x)$ is complex linear, $i T_{f(x)}\Gamma = d\Tilde{f}(x)(i T_x \RR^n)$. Hence $\Gamma$ is totally real in a neighborhood of $f(x)$ if and only if $d\Tilde{f}(x)$ is injective, i.e. $\Det d f(x)\neq 0$.

Let $\Omega\subset\CC^n$ be an open neighborhood of $\Gamma$ such that $\Gamma$ is closed in $\Omega$, and let 
\[
    A(z,D_z) = \sum_{|\alpha|\leq m} a_\alpha (z) D_z^\alpha,\quad D_{z_j}:=\frac{1}{i}\partial_{z_j},\quad D_z^\alpha = D_{z_1}^{\alpha_1}\cdots D_{z_n}^{\alpha_n},
\]
be a differential operator on $\Omega$ with holomorphic coefficients. Define $A_\Gamma:\CI(\Gamma)\to \CI(\Gamma)$ by
\begin{equation}
\label{eqn:A_Gamma}
    A_\Gamma u = (A\Tilde{u})|_{\Gamma},
\end{equation}
where $\Tilde{u}$ is an almost analytic extension of $u$, that is, a smooth extension of $u$ to a neighborhood of $\Gamma$ such that $\Bar{\partial}\Tilde{u}$ vanishes to infinite order on $\Gamma$. $A_\Gamma$ is then a differential operator on $\Gamma$ with smooth coefficients, and for the principal symbols we have 
\[    a_\Gamma = a|_{T^* \Gamma},
\]
where $a$ is the principal symbol of $A$. 

We recall a deformation result from \cite[Lemma 3.1]{SZ1}:
\begin{lem}
\label{lem:deformation}
Suppose that $W\subset\RR^n$ is open and that $F:[0,1]\times W \owns (s,x) \mapsto F(s,x)\in \CC^n$, is a smooth proper map satisfying for all $s\in[0,1]$
\[    \Det\partial_x F(s,x)\neq 0,\quad\textrm{and }\, x\mapsto F(s,x)\textrm{ is injective,}
\]
and assume that $x\in W\setminus K \implies F(s,x)=F(0,x)$  for some compact $K\subset W$.

Let $A(z,D_z)$ be a differential operator with holomorphic coefficients defined in a neighborhood of $F([0,1]\times W)$ such that for $0\leq s\leq 1$ and $\Gamma_s := F(\{s\}\times W)$,  $A_{\Gamma_s}$ is elliptic. 

If $u_0\in\CI(\Gamma_0)$ and $A_{\Gamma_0} u_0$ extends to a holomorphic function in a neighborhood of $F([0,1]\times W)$, then the same holds for $u_0$.
\end{lem}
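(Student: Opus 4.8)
The plan is to run a continuity argument in the deformation parameter $s$, reducing the global statement to a purely local holomorphic-extension property of elliptic operators with holomorphic coefficients. Write $f$ for the holomorphic function on a neighbourhood $\Omega\supset F([0,1]\times W)$ with $f|_{\Gamma_0}=A_{\Gamma_0}u_0$. Since $F(s,\cdot)\equiv F(0,\cdot)$ off the compact set $K$ and $F$ is proper, after shrinking $W$ I may assume every $\Gamma_s$ lies in $\Omega$ and that the $\Gamma_s$ coincide outside a fixed compact set; there is then nothing to do at infinity, and it suffices to extend $u_0$ holomorphically past a neighbourhood of $\overline{K}$.

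The local assertion I would isolate is: \emph{if $\Gamma\subset\CC^n$ is totally real of dimension $n$, $z_0\in\Gamma$, $A_\Gamma$ is elliptic near $z_0$, and $w\in\CI(\Gamma)$ near $z_0$ satisfies $A_\Gamma w=f|_\Gamma$ with $f$ holomorphic near $z_0$, then $w$ is the restriction to $\Gamma$ of a function holomorphic near $z_0$.} Ellipticity together with holomorphy of the coefficients of $A$ should force this as follows. Because $A_\Gamma$ is elliptic the principal symbol does not vanish identically at $z_0$, so a generic complex hyperplane through $z_0$ is noncharacteristic for $A$ and the Cauchy--Kovalevskaya theorem produces a function $v$ holomorphic near $z_0$ with $Av=f$. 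Since $v$ is its own almost analytic extension, $A_\Gamma(v|_\Gamma)=(Av)|_\Gamma=f|_\Gamma$, so $r:=w-v|_\Gamma$ is smooth with $A_\Gamma r=0$, and it remains to show any such $r$ extends holomorphically. For that I would take an almost analytic extension $\tilde r$: then $\bar\partial\tilde r$ vanishes to infinite order on $\Gamma$, and since $A$ has holomorphic coefficients, $(A\tilde r)|_\Gamma=A_\Gamma r=0$ while $\bar\partial(A\tilde r)=A(\bar\partial\tilde r)$ is flat on $\Gamma$, so $A\tilde r$ vanishes to infinite order on $\Gamma$ as well; correcting $\tilde r$ first by an elliptic solution of $A(\cdot)=-A\tilde r$ and then by a solution of a $\bar\partial$-equation, in each case by a term flat on $\Gamma$, turns $\tilde r$ into a genuine holomorphic function with the same restriction to $\Gamma$. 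Because $\Gamma$ is totally real a holomorphic function is determined near $z_0$ by its restriction to $\Gamma$, so such local extensions, where they exist, are unique and patch.

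Granting this, let $E$ be the set of $s\in[0,1]$ for which $u_0$ extends to a holomorphic function on a neighbourhood of $F([0,s]\times W)$ restricting to $u_0$ on $\Gamma_0$. The local assertion at $s=0$ and the patching remark give $0\in E$. $E$ is open, because if $v$ realizes the extension near $F([0,s_0]\times W)$ then for $s$ near $s_0$ the surface $\Gamma_s$ --- which moves only over $K$ --- still lies in the open domain of $v$, so $v$ itself works. $E$ is closed: if $[0,s_0)\subset E$, a single holomorphic $v$ with $Av=f$, $v|_{\Gamma_0}=u_0$, is defined near $F([0,s_0)\times W)$; since $A_{\Gamma_{s_0}}$ is elliptic and $f$ is holomorphic near $\Gamma_{s_0}$, the local assertion furnishes, near each point of $\Gamma_{s_0}$, a holomorphic solution of $A(\cdot)=f$ which by uniqueness on the totally real overlap agrees with $v$; covering the compact $\Gamma_{s_0}$ by finitely many such and patching extends $v$ across $\Gamma_{s_0}$, so $s_0\in E$. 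By connectedness $E=[0,1]$, and $s=1$ is the claim.

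The hard part is the local assertion, specifically the step that upgrades a smooth solution of $A_\Gamma r=0$ to a holomorphic one --- morally the statement that $A$ is analytically hypoelliptic transverse to a totally real noncharacteristic submanifold. Making this rigorous requires careful almost-analytic-extension bookkeeping and the solvability of the auxiliary elliptic and $\bar\partial$ problems by terms vanishing to infinite order on $\Gamma$. A secondary but genuine point, needed in the closedness step, is that the neighbourhoods on which these local constructions succeed have sizes bounded below uniformly for $s$ near a given $s_0$ --- which follows from uniform ellipticity of the family $\{A_{\Gamma_s}\}$ --- so that the finitely many local extensions really do patch into a neighbourhood of $\Gamma_{s_0}$.
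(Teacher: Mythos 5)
The paper does not prove this lemma; it cites it verbatim from Sj\"ostrand--Zworski \cite[Lemma 3.1]{SZ1}, so there is no in-paper proof to compare against. Your high-level architecture --- holomorphic extension is a local question near $\Gamma_s$ because the contours are frozen outside a compact set, plus an open/closed continuity argument in $s$, plus uniqueness of holomorphic extension across a totally real manifold to glue --- is the right skeleton and mirrors the structure of the argument in \cite{SZ1}. You also correctly identify the crux: that for an elliptic operator $A$ with holomorphic coefficients, a smooth solution of $A_\Gamma r = 0$ on a totally real $\Gamma$ extends holomorphically.

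The gap is precisely in your sketch of that crux, and it is not a matter of ``bookkeeping.'' You propose to take an almost analytic extension $\tilde r$, observe $A\tilde r$ is flat on $\Gamma$ (that part is fine), and then correct $\tilde r$ by a term flat on $\Gamma$ solving $A(\cdot)=-A\tilde r$, and then by a term flat on $\Gamma$ solving a $\bar\partial$-equation. Neither correction is available. Solving $Aw_1=-A\tilde r$ with $w_1$ flat on a half-dimensional submanifold is a Cauchy problem for an elliptic operator with prescribed infinite-order vanishing on a codimension-$n$ set: it is ill-posed, and there is no reason a solution exists, let alone one flat on $\Gamma$. For the $\bar\partial$-step, you need $w_2$ with $\bar\partial w_2 = g$ ($g$ flat on $\Gamma$) \emph{and} $w_2|_\Gamma = 0$; the Cauchy/Bochner--Martinelli solution does not vanish on $\Gamma$, and any two solutions differ by a holomorphic function, so forcing $w_2|_\Gamma=0$ requires exactly the holomorphic-extendability of a smooth function on $\Gamma$ that you are trying to establish --- the argument is circular. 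Said differently, the $\bar\partial$-correction alone, applied to an arbitrary smooth $r$, would ``prove'' that every smooth function on $\Gamma$ extends holomorphically, which is false; the ellipticity of $A$ has to enter in an essential quantitative way, via Morrey--Nirenberg-type analytic-hypoellipticity estimates or Sj\"ostrand's FBI-transform machinery, not merely as a source of a flat forcing term. A second, smaller, issue is in the closedness step: to apply the local assertion on $\Gamma_{s_0}$ you need a smooth trace of $v$ on $\Gamma_{s_0}$, which sits on the boundary of the domain where $v$ is so far defined; this requires uniform (in $s<s_0$) interior elliptic/Cauchy estimates to produce the boundary value, and you have not supplied them.
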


The lemma will be applied to a family of deformations of $\RR^n$ in $\CC^n$. We aim to restrict the operators $P_\epsilon,\ \epsilon\geq 0$, to the corresponding totally real submanifolds. For given $\alpha_0 > 0$ and $R_1>R_0$, we can construct a smooth function 
\[ [0,\theta_0]\times [0,\infty)\owns (\theta,t) \mapsto g_\theta(t)\in\CC, \] 
injective for every $\theta$, with the following properties: 
\begin{enumerate}
    \item $g_\theta(t)=t$ for $0\leq t \leq R_1$,
    \item $0\leq \arg g_\theta(t)\leq \theta,\quad \partial_t g_\theta(t) \neq 0$,
    \item $\arg g_\theta(t) \leq \arg\partial_t g_\theta(t) \leq \arg g_\theta(t) + \alpha_0$,
    \item $g_\theta(t)=e^{i\theta} t$ for $t\geq T_0$, where $T_0$ depends only on $\alpha_0$ and $R_1$.
\end{enumerate}
We now define the totally real submanifolds, $\Gamma_\theta$, as images of $\RR^n$ under the maps
\[    f_\theta :\RR^n\owns x=t\omega \mapsto g_\theta(t)\omega\in\CC^n,\ t=|x|.
\]
Then a dilated operator $P_\theta$ can be defined as follows. Let 
\[    \HH_\theta = \HH_{R_0} \oplus L^2(\Gamma_\theta\setminus B(0,R_0)),
\]
where $B(0,R_0)$ denotes the real ball as before. If $\chi\in \CIc(B(0,R_1))$ is equal to $1$ near $\overline{B(0,R_0)}$, we put
\[    \DD_\theta = \{ u\in\HH_\theta : \chi u\in\DD,\ (1-\chi)u\in H^2(\Gamma_\theta\setminus B(0,R_0)) \}.
\]
Let $P_\theta$ be the unbounded operator $\HH_\theta \to \HH_\theta$ with domain $\DD_\theta$, given by 
\[ P_\theta u := P(\chi u) + Q_{\theta} ((1-\chi)u),\quad Q_\theta := -\sum_{j,k=1}^n (\partial_{z_j}(g^{jk}(z)\partial_{z_k}) + c(z))|_{\Gamma_\theta}. \]
These definitions do not depend on the choice of $\chi$. 

We recall some properties of the dilated Laplacian from \cite[\S 3]{SZ1}. Let
\[
    \Delta_\theta := (\Delta_z)|_{\Gamma_\theta},\quad x_\theta := z|_{\Gamma_\theta}.
\]
Parametrizing $\Gamma_\theta$ by $[0,\infty)\times\mathbb{S}^{n-1}\owns (t,\omega)\mapsto g_\theta(t)\omega$, we obtain 
\begin{equation}
\label{eqn:Delta_Gamma_theta}
-\Delta_\theta = (g_\theta '(t)^{-1}D_t)^2 - i(n-1)(g_\theta(t)g_\theta'(t))^{-1} D_t +g_\theta(t)^{-2} D_\omega^2,
\end{equation}
where $D_t=-i\partial_t$ and $D_\omega^2 = -\Delta_{\mathbb{S}^{n-1}}$. If ${\omega^*} ^2$ denotes the principal symbol of $D_\omega^2$ and we let $\tau$ be the dual variable of $t$, then the principal symbol of $-\Delta_{\theta}$ is 
\[ \sigma(-\Delta_{\theta}) = g_\theta'(t)^{-2} \tau^2 + g_\theta(t)^{-2} {\omega^*}^2, \]
so pointwise on $\Gamma_\theta$, $-\Delta_{\theta}$ is elliptic and the principal symbol takes values in an angle of size $\leq 2\alpha_0$, while globally, $\sigma(-\Delta_{\theta})$ takes values in the sector $-2\theta-2\alpha_0 \leq \arg z\leq 0$. The basic result based on ellipticity at infinity is 
\begin{equation}
\label{eqn:Delta_theta resolv bound}
\begin{gathered}
    -2\theta+\delta < \arg z < 2\pi - 2\theta - \delta,\quad |z|>\delta \implies \\
    (-\Delta_\theta - z)^{-1} = O_\delta (|z|^{\frac{j-2}{2}}): L^2(\Gamma_\theta)\to H^j(\Gamma_\theta),\ j=0,1,2.
\end{gathered}
\end{equation}
This follows from \cite[Lemmas 3.2--3.5 and \S4]{SZ1} applied with $P=-\Delta$.

$P_\theta$, as a perturbation of $-\Delta_\theta$, is also elliptic -- see \cite[\S 5]{sj97}. More precisely, choosing $R_1$ large enough, it follows from the assumptions \eqref{eqn:Qelliptic} and \eqref{Analytic extension} that
\begin{equation}
\label{P_theta elliptic}
\begin{gathered}
    \textrm{In $\Gamma_\theta\setminus B(0,R_0)$, $P_\theta$ is an elliptic differential operator whose principal} \\
\textrm{symbol pointwise on $\Gamma_\theta$ takes its values in an angle of size $\leq 3\alpha_0$}, \\
\textrm{and globally in a sector $-2\theta-3\alpha_0 \leq \arg z\leq \alpha_0$}.
\end{gathered}
\end{equation}

\begin{equation}
\label{P_theta at infinity}
\begin{gathered}
    \textrm{The coefficients of $P_\theta - e^{-2i\theta}(-\Delta)$ tend to zero when $\Gamma_\theta\owns x\to\infty$}, \\
    \textrm{where we identify $\Gamma_\theta$ and $\RR^n$, by means of $f_\theta$.}
\end{gathered}
\end{equation}

We recall some basic results about $P_\theta$ from \cite[\S 5]{sj97}:
\begin{lem}
\label{lem:Ptheta Fredholm}
If $z\in\CC\setminus\{0\}$, $\arg z\neq -2\theta$, then $P_\theta - z: \DD_\theta \to \HH_\theta$ is a Fredholm operator of index $0$. In particular the spectrum of $P_\theta$ in $\CC\setminus e^{-2i\theta}[0,\infty)$ is discrete.
\end{lem}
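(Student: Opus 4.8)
The plan is to prove that $P_\theta-z\colon\DD_\theta\to\HH_\theta$ is Fredholm of index $0$ for every $z$ in the slit plane $\Omega_\theta:=\CC\setminus e^{-2i\theta}[0,\infty)$, and then to read off discreteness of $\Spec(P_\theta)\cap\Omega_\theta$ from the analytic Fredholm theorem once we exhibit one point at which $P_\theta-z$ is invertible. Throughout I identify $\HH_\theta$ with $\HH$ by parametrizing $\Gamma_\theta\setminus B(0,R_0)$ via $f_\theta$ and absorbing the Jacobian into a unitary, so that the cut-offs below act by ordinary multiplication; I fix cut-offs $\chi_0\prec\chi_1\prec\chi$, all equal to $1$ near $\overline{B(0,R_0)}$ and supported in $B(0,R_1)$, with $\chi$ the function from the definition of $\DD_\theta$, and I use that $g_\theta(t)=t$ for $t\le R_1$, so that $\Gamma_\theta$ coincides with $\RR^n$ there.

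For the part near infinity, I would extend the coefficients of $Q_\theta$ to a globally defined operator $\widetilde P_\theta$ on $\Gamma_\theta\cong\RR^n$ which agrees with $P_\theta$ on $\supp(1-\chi_0)$; by \eqref{P_theta elliptic}--\eqref{P_theta at infinity}, $\widetilde P_\theta$ is elliptic, the coefficients of $\widetilde P_\theta+\Delta_\theta$ tend to $0$ at infinity, and $-\Delta_\theta=e^{-2i\theta}(-\Delta)$ for $|x|\ge T_0$. Combining the model resolvent bound \eqref{eqn:Delta_theta resolv bound} with a Neumann series in a region $|x|>R_2$, $R_2\gg R_1$ (where $\widetilde P_\theta+\Delta_\theta$ is a small perturbation), and with interior elliptic estimates for $P_\theta$ on the annulus $R_0<|x|<R_2$, I get an operator $E_\infty(z)\colon\HH_\theta\to\DD_\theta$, holomorphic and locally bounded in $z\in\Omega_\theta$, with
\[
(P_\theta-z)E_\infty(z)=I-K_\infty(z),\qquad E_\infty(z)(P_\theta-z)=I-K'_\infty(z),
\]
where $K_\infty(z),K'_\infty(z)$ have Schwartz kernels supported in $B(0,R_1)\times B(0,R_1)$, hence are compact.

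For the black-box part, note that for $u\in\DD_\theta$ one has $\chi_1u=\chi_1(\chi u)\in\DD$ by the remark after \eqref{eqn:Q defn}, and, since $\chi_1$ is constant near $B(0,R_0)$,
\[
(P+i)(\chi_1u)=\chi_1(P_\theta-z)u+[Q,\chi_1]u+(z+i)\chi_1u,
\]
with $[Q,\chi_1]$ a first-order differential operator supported in $R_0<|x|<R_1$. Using the compactness of $1_{B(0,R_0)}(P+i)^{-1}$ from \eqref{eqn:P compact}, this yields an interior near-inverse $E_0(z)$ built from $(P+i)^{-1}$ with compact error, and then
\[
E(z):=\chi_1E_0(z)\chi_0+(1-\chi_0)E_\infty(z)(1-\chi_1)
\]
satisfies $(P_\theta-z)E(z)=I-K(z)$, $E(z)(P_\theta-z)=I-K'(z)$ with $K(z),K'(z)$ compact: the extra error terms produced by the gluing are commutators $[P_\theta,\chi_j]$ --- compactly supported and of order $1$ --- composed with the resolvents, which are compact $H^2\to L^2$ by Rellich outside $B(0,R_0)$ and, inside, $1_{B(0,R_0)}$-localized operators compact $\DD_\theta\to\HH$ by \eqref{eqn:P compact}. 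Hence $P_\theta-z$ is Fredholm for every $z\in\Omega_\theta$. (An equivalent route is the a priori estimate $\|u\|_{\DD_\theta}\lesssim\|(P_\theta-z)u\|_{\HH_\theta}+\|1_{B(0,R_2)}u\|_{\HH}$ with $1_{B(0,R_2)}\colon\DD_\theta\to\HH$ compact, together with the same estimate for the formal adjoint $P_\theta^*-\bar z$, which has the same structure with the conjugate sector.)

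To finish, I would run the construction at $z=iN$ and track $N$: by \eqref{eqn:Delta_theta resolv bound} the near-infinity resolvent is $O(N^{-1})\colon L^2\to L^2$ and $O(N^{-1/2})\colon L^2\to H^1$, the black-box resolvent $(P+iN)^{-1}$ has norm $N^{-1}$ since $P$ is self-adjoint, and the gluing commutators satisfy $[P_\theta,\chi_j]E_\infty(iN)=O(N^{-1/2})$; so for $N$ large $E(iN)$ is a genuine two-sided inverse modulo an operator of norm $<1$, whence $P_\theta-iN$ is invertible. In particular its index is $0$, and since $\Omega_\theta$ is connected and $z\mapsto P_\theta-z$ is a holomorphic Fredholm family there, its index is $0$ for all $z\in\Omega_\theta$; the analytic Fredholm theorem then makes $(P_\theta-z)^{-1}$ meromorphic on $\Omega_\theta$, with poles exactly at $\Spec(P_\theta)\cap\Omega_\theta$, which is thus discrete. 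The main obstacle is the gluing step: because \eqref{eqn:P compact} is a purely abstract compactness hypothesis, everything happening near $B(0,R_0)$ must be expressed through $(P+i)^{-1}$ and the identity above, and one has to verify that the resulting errors are genuinely compact --- and, for the $z=iN$ argument, quantitatively small --- which is where the interface between the black-box compactness and the elliptic estimates for the differential operator $Q_\theta$ outside $B(0,R_0)$ requires care.
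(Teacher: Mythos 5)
Your proposal follows essentially the same strategy as the paper: glue the black-box resolvent (the paper uses $\cchi_1(P-z_0)^{-1}\chi_1$ for a fixed non-real $z_0$, you use $(P+i)^{-1}$) near $B(0,R_0)$ with a dilated-Laplacian model resolvent at infinity, use \eqref{eqn:P compact} plus Rellich to get compactness of the error, and pin down the index by showing invertibility at $z=iN$ with $N$ large via \eqref{eqn:Delta_theta resolv bound}. One small point to tighten: for the $z=iN$ step, the error coming from the elliptic parametrix on the middle annulus $R_0<|x|<R_2$ does \emph{not} shrink as $N\to\infty$ unless you use estimates that are parametric in $z$; the paper sidesteps this by switching, in the invertibility step \eqref{eqn:E(z0)}, to a two-piece gluing where the entire exterior is handled by $(1-\chi_0)(-\Delta_\theta-iL)^{-1}(1-\chi_1)$ with $R_1$ chosen large so that $P_\theta+\Delta_\theta$ is uniformly small there — you should do the same (or invoke the parameter-elliptic estimate explicitly) rather than relying on the $z$-independent interior parametrix.
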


\begin{proof}
We follow closely the proof of \cite[Lemma 3.2]{SZ1} (see also \cite[Theorem 4.36]{res}) except that $P_\theta$ is more general here. We shall invert $P_\theta - z$ modulo compact operators. On the complex contour $\Gamma_\theta$ we introduce a smooth partition of unity: $1 = \chi_1 + \chi_2 + \chi_3$ with $\supp\chi_1 \subset B(0,R_1)$, $\supp\chi_3$ contained in the region where $\Gamma_\theta\owns x_\theta=e^{i\theta}x,\ x\in\RR^n$, $\supp\chi_2$ compact and disjoint from $\overline{B(0,R_0)}$. Let $\cchi_j$ have the same properties as the $\chi_j$ except that they do not form a partition of unity, satisfying $\cchi_j=1$ near $\supp\chi_j$. Now we put
\begin{equation}
\label{eqn:E(z)}
    E(z) = \cchi_1(P-z_0)^{-1}\chi_1 + S(z)\chi_2 + \cchi_3\, e^{2i\theta}(-\Delta-e^{2i\theta}z)^{-1}\chi_3,
\end{equation}    
where $z_0\in\CC\setminus\RR$ and $S(z)$ is a properly supported parametrix of the elliptic operator $P_{\theta} - z$ in $\Gamma_\theta\setminus B(0,R_0)$. Then we have
\begin{equation}
\label{eqn:right appro inverse}
    (P_{\theta} - z) E(z) = I + K(z) + K_1(z),
\end{equation}
where
\[
\begin{gathered}
    K(z) = (z_0 - z)\cchi_1(P-z_0)^{-1}\chi_1 + [P,\cchi_1](P-z_0)^{-1} \chi_1 + ((P_{\theta} - z)S(z) - I)\chi_2\\
    \qquad  + [-e^{-2i\theta}\Delta,\cchi_3]\,e^{2i\theta}(-\Delta-e^{2i\theta}z)^{-1}\chi_3, \\
    K_1(z) = (P_\theta - (-e^{-2i\theta}\Delta))\cchi_3\, e^{2i\theta}(-\Delta-e^{2i\theta}z)^{-1}\chi_3.
\end{gathered}
\]
Using \eqref{P_theta at infinity} we may assume that $\supp\chi_3\subset\{z\in\CC^n:|z|\geq T\}$ for $T$ sufficiently large such that $\|K_1(z)\|_{\HH_\theta\to\HH_\theta}\leq 1/2$, thus $I+K_1(z)$ is invertible and we get
\[    (P_{\theta} - z)E(z)(I+K_1(z))^{-1} = I + K(z)(I+K_1(z))^{-1}.
\]
It follows from the assumptions \eqref{eqn:domain D} and \eqref{eqn:P compact} that $K(z)$ is compact: $\HH_\theta\to \HH_\theta$, thus $E(z)(I+K_1(z))^{-1}$ is an approximate right inverse. The construction of an approximate left inverse is similar, we omit the details and refer to \cite[Lemma 3.2]{SZ1}. 

It remains to show that $P_\theta - z$ is invertible for some $z\in\CC\setminus e^{-2i\theta}[0,\infty)$. For $z_0 = i L,\ L\geq 1$, we can replace \eqref{eqn:E(z)} with
\begin{equation}
\label{eqn:E(z0)}
    E(z_0) = \cchi_1 (P-z_0)^{-1} \chi_1 + (1-\chi_0) (-\Delta_\theta - z_0)^{-1} (1-\chi_1),
\end{equation}
where $\chi_1\in\CIc(B(0,R_1))$ is equal to $1$ near $\supp\chi_0$ and $\chi_0 = 1$ on $B(0,R_1 - \delta)$, for some $\delta>0$ small. Then \eqref{eqn:right appro inverse} still holds with $K(z_0)$, $K_1(z_0)$ given by
\[
\begin{gathered}
    K(z_0) = [P,\cchi_1](P-z_0)^{-1}\chi_1 + [\Delta_\theta,\chi_0](-\Delta_\theta - z_0)^{-1} (1-\chi_1), \\
    K_1(z_0) = (P_\theta - (-\Delta_\theta))(1-\chi_0)(-\Delta_\theta - z_0)^{-1} (1-\chi_1).
\end{gathered}
\]
Choosing $R_1$ sufficiently large, we may assume by \eqref{eqn:Delta_theta resolv bound} and \eqref{P_theta at infinity} that
$\|K_1(z_0)\|_{\HH_\theta\to\HH_\theta} \leq 1/2$, for all $z_0=iL$,  $L\geq 1$. Then we get
\[ (P_\theta - z_0)E(z_0)(I+K_1(z_0))^{-1} = I + K(z_0)(I+K_1(z_0))^{-1}. \]
It follows from \eqref{eqn:Delta_theta resolv bound} that $K(iL)=O(L^{-1/2}):\HH_\theta\to\HH_\theta$, thus $P_\theta - z_0$ is inveritible for $z_0=iL,\ L\gg 1$ and we have 
\begin{equation}
\label{eqn:Ptheta z0inverse}
    (P_\theta-z_0)^{-1} = E(z_0)(I+K_1(z_0))^{-1} (I + K(z_0)(I+K_1(z_0))^{-1})^{-1} ,
\end{equation}
which completes the proof.
\end{proof}

\begin{lem}
\label{lem:dim_ker Ptheta}
Assume that $0\leq \theta_1 < \theta_2\leq \theta_0$ and let $z_0\in\CC\setminus e^{-2i[\theta_1,\theta_2]}[0,\infty)$. Then
\[    \dim\ker (P_{\theta_1} - z_0) = \dim\ker (P_{\theta_2} - z_0).
\]
\end{lem}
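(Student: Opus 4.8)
The plan is to adapt the classical argument for $\theta$-independence of resonances, as in \cite[Lemma 3.3]{SZ1}: from a resonant state of $P_{\theta_1}$ at $z_0$ I will produce a resonant state of $P_{\theta_2}$ at $z_0$ by holomorphically continuing the piece of the state that lives near infinity, and conversely; the two linear maps so constructed will turn out to be inverse to one another, forcing $\dim\ker(P_{\theta_1}-z_0)=\dim\ker(P_{\theta_2}-z_0)$. Note first that both kernels are finite-dimensional by Lemma \ref{lem:Ptheta Fredholm}. An observation used throughout: since $z_0\notin e^{-2i[\theta_1,\theta_2]}[0,\infty)$ is equivalent to $\arg z_0\in(-2\theta_1,2\pi-2\theta_2)$, and $(-2\theta_1,2\pi-2\theta_2)\subset(-2\theta,2\pi-2\theta)$ for every $\theta\in[\theta_1,\theta_2]$, we get $z_0\notin e^{-2i\theta}[0,\infty)$ for all such $\theta$. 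Hence, by \eqref{P_theta elliptic}, \eqref{P_theta at infinity} (and \eqref{eqn:Delta_theta resolv bound} for the flat model), $P_\theta-z_0$ is elliptic on $\Gamma_\theta\setminus B(0,R_0)$ with ellipticity at infinity for every $\theta\in[\theta_1,\theta_2]$; in particular any solution of $(P_\theta-z_0)w=0$ that is $L^2$ near infinity on $\Gamma_\theta$ is automatically in $H^2$ there and decays there.

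To build $\Phi:\ker(P_{\theta_1}-z_0)\to\ker(P_{\theta_2}-z_0)$, take $u=u_{R_0}\oplus u'\in\ker(P_{\theta_1}-z_0)$ and identify $\Gamma_{\theta_1}$ with $\RR^n$ via $f_{\theta_1}$. For $|x|>R_1$ the cutoff $\chi$ vanishes, so $u'$ solves $(Q_{\theta_1}-z_0)u'=0$, that is $((Q(z,D_z)-z_0)\Tilde{u'})|_{\Gamma_{\theta_1}}=0$, where $Q(z,D_z)-z_0$ has coefficients holomorphic in the conic set of \eqref{Analytic extension} and, by \eqref{P_theta elliptic}, restricts to an elliptic operator on every $\Gamma_\theta$, $\theta\in[\theta_1,\theta_2]$. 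Following \cite[Lemma 3.3]{SZ1}, I will apply the holomorphic-continuation Lemma \ref{lem:deformation} to $A=Q(z,D_z)-z_0$ and $u'$ (whose image $A_{\Gamma_{\theta_1}}u'=0$ trivially continues holomorphically) along a family of deformations, nontrivial only over $\{R_1<|x|<B\}$, that bend $\Gamma_{\theta_1}$ onto $\Gamma_{\theta_2}$ there and keep all intermediate contours of the admissible form $\{g_{\theta(t)}(t)\omega:\theta(t)\in[\theta_1,\theta_2]\}$, so that $A$ restricted to each stays elliptic. Letting $B\to\infty$ and using uniqueness of holomorphic continuation, I obtain a holomorphic function $\Tilde{u'}$ defined near $\Gamma_{\theta_2}\cap\{|x|>R_1\}$ with $(Q-z_0)\Tilde{u'}=0$. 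I then set $v':=\Tilde{u'}|_{\Gamma_{\theta_2}}$ for $|x|>R_1$, $v':=u'$ for $R_0<|x|<R_1$ (consistent, since the two contours agree there), and $\Phi(u):=u_{R_0}\oplus v'$; from $(Q-z_0)\Tilde{u'}=0$ and patching across $|x|=R_0$ and near $B(0,R_1)$ one checks $(P_{\theta_2}-z_0)\Phi(u)=0$, provided $\Phi(u)\in\DD_{\theta_2}$.

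The hard part will be the lone analytic input of this check: that $v'\in L^2(\Gamma_{\theta_2}\setminus B(0,R_0))$ near infinity (granting this, ellipticity at infinity of $P_{\theta_2}-z_0$ gives $(1-\chi)v'\in H^2$, hence $\Phi(u)\in\DD_{\theta_2}$). This is the statement that the outgoing (i.e.\ $L^2$-after-scaling) solution of $(Q-z_0)w=0$ near infinity does not depend on the scaling angle, and it is here that the full strength of $z_0\notin e^{-2i[\theta_1,\theta_2]}[0,\infty)$ enters: it keeps $z_0$ in the resonance sector of $P_\phi$ for \emph{every} $\phi\in[\theta_1,\theta_2]$, so that the radial rate $i\sqrt{z_0}\,e^{i\phi}$ in the model solution $\sim e^{i\sqrt{z_0}e^{i\phi}t}$ of the leading part $e^{-2i\phi}(-\Delta_\phi)-z_0$ stays strictly in the decaying regime for all such $\phi$. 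Knowing that $u'=\Tilde{u'}|_{\Gamma_{\theta_1}}$ decays at infinity, I will propagate that decay across the pie-slice $\{t e^{i\phi}\omega:t>T_0,\ \phi\in[\theta_1,\theta_2],\ \omega\in\mathbb{S}^{n-1}\}\subset\CC^n$ swept out by the straight parts of the $\Gamma_\phi$ by a Phragm\'en--Lindel\"of argument in $\phi$ (for fixed $\omega$), using the explicit decay of the model solution, the vanishing at infinity of the lower-order coefficients from \eqref{P_theta at infinity}, and Cauchy estimates on the holomorphic function $\Tilde{u'}$. If treating the whole slice at once is awkward, I will first prove the lemma for $\theta_2-\theta_1$ small -- where the slice is thin and the estimate is a perturbation off $\Gamma_{\theta_1}$ -- and then chain finitely many such steps across $[\theta_1,\theta_2]$. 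I expect this decay bound to be the only real obstacle; everything else is bookkeeping with cutoffs and Lemma \ref{lem:deformation}.

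Finally, I will run the same construction with $\theta_1$ and $\theta_2$ exchanged (now bending $\Gamma_{\theta_2}$ onto $\Gamma_{\theta_1}$ over larger and larger compact sets) to get $\Psi:\ker(P_{\theta_2}-z_0)\to\ker(P_{\theta_1}-z_0)$. Since holomorphic continuation off a maximally totally real submanifold is unique and the interior component $u_{R_0}$ is left untouched, continuing $u'$ out to $\Gamma_{\theta_2}$ and back returns $u'$, so $\Psi\circ\Phi=\mathrm{id}$ and, symmetrically, $\Phi\circ\Psi=\mathrm{id}$. Hence $\Phi$ is an isomorphism and $\dim\ker(P_{\theta_1}-z_0)=\dim\ker(P_{\theta_2}-z_0)$, as claimed.
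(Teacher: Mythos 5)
Your outline matches the structure of \cite[Lemma 3.4]{SZ1}, which is all the paper offers in the way of a proof: continue the exterior piece of a kernel element of $P_{\theta_1}-z_0$ holomorphically off $\Gamma_{\theta_1}$ via Lemma~\ref{lem:deformation}, restrict to $\Gamma_{\theta_2}$, verify membership in $L^2$ near infinity, and use uniqueness of holomorphic continuation off a maximally totally real manifold to make $\Phi$ and $\Psi$ mutually inverse. The opening reduction about $\arg z_0$ and the linear-algebra endgame are both correct, and you rightly single out the $L^2$ decay of $v'=\Tilde{u'}|_{\Gamma_{\theta_2}}$ as the crux. That step, however, is where the proposal has a genuine gap. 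A Phragm\'en--Lindel\"of argument in the angular variable is not the mechanism used in \cite{SZ1}, and it does not obviously close as sketched: you control $\Tilde{u'}$ on only one edge of the pie slice and have no a priori growth bound in its interior, and Phragm\'en--Lindel\"of needs both. The step is in fact proved -- in \cite{SZ1}, and, in a nearly identical situation, in this paper's own proof of Lemma~\ref{lem:agreement of resolvent} -- via a quantitative coercivity estimate on the interpolating contours $\Gamma_{\theta_1,\theta_2,T}$, uniform in the dyadic scale $T$: one cuts $\Tilde{u'}$ off to a dyadic annulus by a cutoff $\chi_T$ whose commutator with the operator is supported back on $\Gamma_{\theta_1}$, applies the coercivity bound to $\chi_T\Tilde{u'}$, and sums over $T=2^j$ to push $L^2(\Gamma_{\theta_1})$ decay to $L^2(\Gamma_{\theta_2})$ decay. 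Replacing your Phragm\'en--Lindel\"of step with that argument (the $\epsilon=0$ case, using that $z_0$ lies in the resolvent set of $-\Delta_\theta$ for every $\theta\in[\theta_1,\theta_2]$ to get coercivity uniformly in $T$) turns your outline into the proof the paper has in mind; the chaining over small $\theta$-increments you offer as a fallback is not needed once this estimate is in hand.
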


This is identical to \cite[Lemma 3.4]{SZ1} and the proof is the same as there using Lemma \ref{lem:deformation}.

Lemma \ref{lem:dim_ker Ptheta} shows that the spectrum in the sector  $-2\theta_0<\arg z\leq 0$ is independent of $\theta$ in the following sense: We say that $z\in \CC\setminus\{0\}$, $-2\theta_0<\arg z\leq 0$ is a resonance for $P$ if and only if $z\in\Spec(P_\theta)$ with $-2\theta <\arg z\leq 0$ for some $\theta\in (0,\theta_0]$. For such a resonance $z_0\in e^{-2i[0,\theta)}(0,\infty)$, the spectral projection 
\begin{equation}
\label{eqn:SpecProj Ptheta}
    \Pi_\theta(z_0) = \frac{1}{2\pi i} \oint_{z_0} (z-P_\theta)^{-1}dz,
\end{equation}
where the integral is over a positively oriented circle enclosing $z_0$ and containing no resonances other than $z_0$, is of finite rank. The restriction of $P_\theta - z_0$ to $\Ran \Pi_\theta(z_0)$ is nilpotent. If $\Tilde{\theta}\in [0,\theta_0]$ is a second number with $z_0\in e^{-2i[0,\Tilde{\theta})}(0,\infty)$, then since Lemma \ref{lem:dim_ker Ptheta} can be extended to $\dim \ker(P_{\theta}-z_0)^k = \dim \ker(P_{\Tilde{\theta}}-z_0)^k$ for all $k$, $\Pi_\theta(z_0)$ and $\Pi_{\Tilde{\theta}}(z_0)$ have the same rank, which by definition is the multiplicity of the resonance $z_0$:
\begin{equation}
\label{eqn:multiplicity z0}
    m(z_0):=\rank \Pi_\theta(z_0),\quad -2\theta<\arg z_0\leq 0.
\end{equation}

\subsection{A reference operator}
\label{reference operator}
As explained in \S \ref{introduction}, to separate the abstract black box from the differential operator outside we introduce a \textit{reference operator} $P^\sharp$ associated with an open set $\OO\subset\RR^n$ containing $\overline{B(0,R_0)}$. We assume that $\partial \OO$ is a smooth hypersurface in $\RR^n$. In the notation of \eqref{eqn:Hilbert space}, we put
\begin{equation}
\label{eqn:H sharp}
    \HH^\sharp := \HH_{R_0}\oplus L^2(\OO\setminus B(0,R_0)).
\end{equation}
The corresponding orthogonal projections are denoted by 
\[
    u\mapsto 1_{B(0,R_0)}\,u = u|_{B(0,R_0)},\quad u\mapsto 1_{\OO\setminus B(0,R_0)}\,u = u|_{\OO\setminus B(0,R_0)}.
\]
If $P$ is a black box Hamiltonian introduced in \S\ref{introduction} with domain $\DD$, then we define
\begin{equation}
\label{eqn:D sharp}
    \begin{gathered}
        \DD^\sharp := \{u\in\HH^\sharp : \psi\in \CIc(\OO),\ \psi=1\textrm{ near }\overline{B(0,R_0)} \Rightarrow \\
        \quad\psi u\in\DD,\ (1-\psi)u\in H^2(\OO)\cap H_0^1(\OO)\}
    \end{gathered}
\end{equation}
and, for any $\psi$ with the property \eqref{eqn:D sharp},
\begin{equation}
\label{eqn:P sharp}
    \begin{gathered}
        P^\sharp : \DD^\sharp \to \HH^\sharp, \\
        P^\sharp u := P(\psi u) + Q((1-\psi)u).
    \end{gathered}
\end{equation}
Assumptions \eqref{eqn:domain D}, \eqref{eqn:Q defn} show that this definition is independent of the choice of $\psi$.

We recall some basic properties of the reference operator from \cite[\S 7]{SZ1}:
\begin{lem}
\label{lem:reference operator}
Suppose that $\OO\subset\RR^n$ is an open set containing $\overline{B(0,R_0)}$ such that $\partial\OO$ is a smooth hypersurface in $\RR^n$. Let $P^\OO$ be the reference operator defined in \eqref{eqn:P sharp}. Then, with $\HH^\sharp$ given by \eqref{eqn:H sharp},
\[
    P^\sharp : \HH^\sharp \to \HH^\sharp,
\]
is a self-adjoint operator with domain $\DD^\sharp$ defined in \eqref{eqn:D sharp}. Furthermore, the resolvent $(P^\sharp + i)^{-1}$ is compact and thus $P^\sharp$ has discrete spectrum which is contained in $\RR$.
\end{lem}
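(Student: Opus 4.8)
The plan is to reduce everything to the known black box properties of $P$ together with standard elliptic theory for the Dirichlet problem on the bounded domain $\OO\setminus\overline{B(0,R_0)}$. First I would verify that $P^\sharp$ is symmetric on $\DD^\sharp$: for $u,v\in\DD^\sharp$, pick $\psi$ as in \eqref{eqn:D sharp} and write $\langle P^\sharp u,v\rangle = \langle P(\psi u),v\rangle + \langle Q((1-\psi)u),v\rangle$. Using the self-adjointness of $P$ on $\HH$ (after extending $(1-\psi)u$ by zero, which lies in $\DD$ by \eqref{eqn:domain D}), the formal self-adjointness of $Q$ from \eqref{eqn:Qelliptic}, and the fact that $(1-\psi)u\in H^2(\OO)\cap H_0^1(\OO)$ kills the boundary terms in Green's formula on $\OO$, one gets $\langle P^\sharp u,v\rangle=\langle u,P^\sharp v\rangle$; the choice of $\psi$ drops out by the remark after \eqref{eqn:P sharp}.

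Next I would establish essential self-adjointness, or rather directly that $P^\sharp\pm i:\DD^\sharp\to\HH^\sharp$ is bijective, which gives self-adjointness by the basic criterion. The construction mirrors the parametrix argument in the proof of Lemma \ref{lem:Ptheta Fredholm}: glue the resolvent $(P+i)^{-1}$ near $\overline{B(0,R_0)}$ (cut off by $\psi$) to the resolvent of the Dirichlet realization of $Q$ on $\OO$ away from $B(0,R_0)$ (cut off by $1-\psi$), using commutator terms supported in the overlap region. The commutators $[P,\cchi]$, $[Q,\cchi]$ composed with a resolvent are compact by \eqref{eqn:domain D}, \eqref{eqn:P compact} and Rellich on the bounded set $\OO$; the error is compact, so $P^\sharp+i$ is Fredholm of index $0$, and symmetry forces injectivity, hence bijectivity. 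Compactness of $(P^\sharp+i)^{-1}$ then follows because the approximate inverse maps into $\DD^\sharp\hookrightarrow H^2_{\loc}$, which embeds compactly into $\HH^\sharp$ (the black box factor via \eqref{eqn:P compact}, the $L^2(\OO\setminus B(0,R_0))$ factor via Rellich on the \emph{bounded} domain $\OO$ — this is exactly where the finite-volume/compactness is used, and it is the crucial difference from the full-space operator). Discreteness of the spectrum, and its containment in $\RR$, are then immediate from self-adjointness plus compact resolvent.

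I expect the main obstacle to be purely bookkeeping rather than conceptual: one must be careful that the gluing respects the domain condition \eqref{eqn:D sharp}, i.e. that the constructed approximate resolvent actually lands in $H^2(\OO)\cap H^1_0(\OO)$ on the outside and in $\DD$ on the inside simultaneously, which requires choosing the cutoffs $\psi,\cchi$ compatibly and invoking elliptic regularity up to the boundary $\partial\OO$ (smooth by hypothesis) for the Dirichlet problem for $Q$. Since $Q$ is uniformly elliptic by \eqref{eqn:Qelliptic} with smooth bounded coefficients, this boundary regularity is standard. Everything else is a transcription of \cite[\S 7]{SZ1}, so I would mostly refer there and only indicate the modifications needed because here $Q\neq-\Delta$ near infinity; these modifications are confined to replacing $-\Delta$ by $Q$ in the exterior parametrix and using \eqref{eqn:Qelliptic} in place of the flat ellipticity, with no change to the structure of the argument.
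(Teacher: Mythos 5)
Your proposal matches the paper's approach: the paper simply cites \cite[Lemma 4.11]{res} (Dyatlov--Zworski's reference-operator lemma, following \cite[\S 7]{SZ1}) and observes that the argument survives replacing $-\Delta$ by $Q$, and your sketch correctly reproduces the needed ingredients --- symmetry from Green's formula with the Dirichlet condition, a parametrix gluing $(P\pm i)^{-1}$ to the Dirichlet resolvent of $Q$ on $\OO$, and compactness from \eqref{eqn:P compact} together with Rellich on the bounded set $\OO\setminus B(0,R_0)$. One small correction in the symmetry step: the zero extension of $(1-\psi)u \in H^2(\OO)\cap H^1_0(\OO)$ across $\partial\OO$ is only $H^1(\RR^n\setminus B(0,R_0))$, not $H^2$, since the normal derivative need not vanish, so it does \emph{not} lie in $\DD$; instead one should insert a further cutoff $\psi_1\in\CIc(\OO)$ with $\psi_1\equiv 1$ on $\supp\psi$ so as to pair $P(\psi u)$ against $\psi_1 v\in\DD$ via self-adjointness of $P$ in $\HH$, and treat the $Q$-terms directly by Green's formula on $\OO$ with the $H^1_0$ condition killing the boundary contribution.
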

For the proof we refer to Dyatlov--Zworski \cite[Lemma 4.11]{res} and we remark that the arguments there is still valid if we replace the assumption there: $P=-\Delta$ in $\RR^n\setminus B(0,R_0)$, by the assumption \eqref{eqn:Q defn}.

\section{The regularized operator}
\label{section:Peps}

In this section we show that the spectrum of $P_\epsilon$ is invariant under complex scaling. Choosing $R_1$ such that $\supp\chi\subset B(0,R_1)$ when we construct the complex contours $\Gamma_\theta$, the complex absorbing potential $-i\epsilon(1-\chi(x))x^2$ can be analytically extended to $\Gamma_\theta$, thus it defines a multiplication on the following subspace of $\HH_\theta$:
\[ \widehat{\HH}_\theta := \HH_{R_0}\oplus |x_\theta|^{-2} L^2(\Gamma_\theta\setminus B(0,R_0)),\]
where $x_\theta := f_\theta(x)$ denotes the parametrization of $\Gamma_\theta$.
We now introduce the deformation of $P_\epsilon$ on $\Gamma_\theta,\ \theta\in[0,\theta_0)$:
\begin{equation}
\label{eqn:Pepstheta}
    P_{\epsilon,\theta} := P_\theta - i\epsilon (1-\chi(x_\theta)) x_\theta^2,\quad\textrm{with the domain }\widehat{\DD}_\theta := \DD_\theta\cap \widehat{\HH}_\theta.
\end{equation}
It follows from \eqref{P_theta at infinity} that $P_{\epsilon,\theta}$ near infinity is close to the operator 
\begin{equation}
\label{eqn:Hepstheta}
    H_{\epsilon,\theta}:=-e^{-2i\theta}\Delta - i\epsilon e^{2i\theta} x^2,
\end{equation} 
which was considered by Davies \cite{Dav} as an interesting example of a non-normal differential operator. We recall the following basic result:
\begin{lem}
\label{lem:Davies spectrum}
For $\epsilon>0$, $0 \leq \theta < \pi/8$, $H_{\epsilon,\theta}$ is a closed densely defined operator on $L^2(\RR^n)$ equipped with the domain $H^2(\RR^n)\cap \langle x \rangle^{-2}L^2(\RR^n)$. The spectrum is given by
\begin{equation}
\label{eqn:Spectrum Hepstheta}
    \Spec(H_{\epsilon,\theta}) = \{ e^{-i\pi/4} \sqrt{\epsilon}(2|\alpha|+n) : \alpha\in\mathbb{N}_0^n\},\quad |\alpha| := \alpha_1 + \cdots + \alpha_n.
\end{equation}
In addition for any $\delta>0$ we have uniformly in $\epsilon>0$,
\begin{equation}
\label{eqn:Hepstheta resolv bounds}
\begin{gathered}
     (H_{\epsilon,\theta}-z)^{-1}=O_\delta(|z|^{\frac{j-2}{2}}):L^2(\RR^n)\to H^j(\RR^n),\quad j=0,1,2, \\
     \textrm{for }-2\theta+\delta<\arg z<3\pi/2+2\theta-\delta,\quad |z|>\delta.
\end{gathered}
\end{equation}
\end{lem}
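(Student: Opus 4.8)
The operator $H_{\epsilon,\theta}$ is, up to the overall constant $e^{-2i\theta}$, a complex harmonic oscillator: $H_{\epsilon,\theta}=e^{-2i\theta}\bigl(-\Delta - i\epsilon e^{4i\theta}x^2\bigr)$. The natural approach is therefore to reduce everything to the one-dimensional rotated oscillator and to exploit the rescaling $x\mapsto \lambda x$, which turns $-\partial^2 + \mu x^2$ into $\lambda^{-2}(-\partial^2) + \lambda^2\mu x^2$; choosing $\lambda$ so that the two coefficients become equal reduces the computation of the spectrum to that of the standard (complex) Hermite operator. Concretely, write $-\Delta - i\epsilon e^{4i\theta}x^2 = \bigl(-i\epsilon e^{4i\theta}\bigr)^{1/2}\bigl(-\Delta/(-i\epsilon e^{4i\theta})^{1/2} + (-i\epsilon e^{4i\theta})^{1/2}x^2\bigr)$ and absorb $(-i\epsilon e^{4i\theta})^{1/4}$ into the variable; since $0\le\theta<\pi/8$ we have $\arg(-i\epsilon e^{4i\theta})=4\theta-\pi/2\in(-\pi/2,0)$, so the relevant fourth root lies in a small sector around the negative imaginary axis and the rescaling $x\mapsto (-i\epsilon e^{4i\theta})^{1/4}x$, understood as a rotation in $\CC$ composed with a dilation, maps $\RR^n$ to a totally real contour on which the operator becomes $e^{-2i\theta}(-i\epsilon e^{4i\theta})^{1/2}\bigl(-\Delta + x^2\bigr)$. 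Tracking the phases, $e^{-2i\theta}(-i\epsilon e^{4i\theta})^{1/2}=e^{-2i\theta}e^{i(2\theta-\pi/4)}\sqrt{\epsilon}=e^{-i\pi/4}\sqrt{\epsilon}$, which is exactly the prefactor appearing in \eqref{eqn:Spectrum Hepstheta}. So the eigenvalues are $e^{-i\pi/4}\sqrt{\epsilon}$ times those of the standard harmonic oscillator $-\Delta+x^2$ on $\RR^n$, namely $2|\alpha|+n$, $\alpha\in\mathbb{N}_0^n$.

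To make this rigorous I would first treat $\epsilon=1$ and a fixed $\theta\in[0,\pi/8)$: define the Davies oscillator $D_\theta:=-\Delta - i e^{4i\theta}x^2$ with domain $H^2\cap\langle x\rangle^{-2}L^2$, show it is closed and densely defined (the quadratic form or the explicit parametrix construction below both suffice), and identify its spectrum via the contour rotation, citing Davies \cite{Dav} or Sj\"ostrand for the fact that the complex-scaled harmonic oscillator has the same discrete spectrum as the self-adjoint one with a complete system of generalized eigenfunctions. Then scale back in $\epsilon$: the unitary-modulo-weights dilation $U_\epsilon u(x)=\epsilon^{n/8}u(\epsilon^{1/4}x)$ conjugates $-\Delta - i\epsilon e^{4i\theta}x^2$ to $\sqrt{\epsilon}\,D_\theta$, which gives \eqref{eqn:Spectrum Hepstheta} and, crucially, makes the resolvent bounds $\epsilon$-uniform once they are known for $D_\theta$, because $(\sqrt{\epsilon}D_\theta - z)^{-1} = \epsilon^{-1/2}(D_\theta - \epsilon^{-1/2}z)^{-1}$ and the region $-2\theta+\delta<\arg z<3\pi/2+2\theta-\delta$, $|z|>\delta$ is invariant under $z\mapsto \epsilon^{-1/2}z$; the Sobolev-norm gains $|z|^{(j-2)/2}$ are homogeneous of the right degree under this scaling (one checks the powers of $\epsilon$ match because differentiation costs a factor $\epsilon^{1/4}$ each).

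For the resolvent estimate \eqref{eqn:Hepstheta resolv bounds} itself I would argue by ellipticity plus a global a priori estimate rather than by explicit kernels. The principal symbol of $e^{2i\theta}D_1$ is $e^{2i\theta}\xi^2 + e^{6i\theta} x^2$ wait — more simply, the full symbol of $H_{1,\theta}=-e^{-2i\theta}\Delta - ie^{2i\theta}x^2$ is $p(x,\xi)=e^{-2i\theta}\xi^2 - ie^{2i\theta}x^2 = e^{-2i\theta}\xi^2 + e^{i(2\theta-\pi/2)}x^2$; both terms have arguments in $(-\pi/2, \pi/2)$ when $\theta<\pi/8$, so $\Re(e^{i\pi/4}p)\gtrsim \xi^2 + x^2$, i.e. $e^{i\pi/4}H_{1,\theta}$ is a nonnegative-ish elliptic operator in the Shubin calculus with symbol $\sim\langle(x,\xi)\rangle^2$. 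Standard Shubin–Weyl calculus then gives: $H_{1,\theta}-w$ is invertible for $w$ outside a sector around $e^{-i\pi/4}[0,\infty)$, with $(H_{1,\theta}-w)^{-1}=O(|w|^{-1}):L^2\to L^2$ and the mapping property into $H^j$ for $j=1,2$ coming from elliptic regularity in the Shubin metric (controlling $\|u\|_{H^2}\lesssim \|H_{1,\theta}u\|_{L^2}+\|u\|_{L^2}$ and interpolating). I expect the \textbf{main obstacle} to be the uniformity and the precise power $|z|^{(j-2)/2}$ — i.e. getting the $z$-dependence of the $L^2\to H^j$ bound exactly right rather than just $O(1)$ near a fixed point of the resolvent set. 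The clean way is to do the rescaling argument above so that all $z$-uniformity for $H_{1,\theta}$ follows from a single semiclassical-type estimate, and then the $\epsilon$-uniformity is automatic; the analogous statement for $-\Delta_\theta$ is \eqref{eqn:Delta_theta resolv bound}, cited from \cite{SZ1}, and the harmonic term only improves ellipticity at infinity, so one can also simply run the parametrix construction of \cite[\S4]{SZ1} with the extra confining potential included, which is the route I would actually write up since it parallels the proof of Lemma \ref{lem:Ptheta Fredholm} already in the paper.
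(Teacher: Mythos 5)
Your proposal is correct and takes essentially the same route as the paper's proof. The paper cites Hitrik--Sj\"ostrand--Viola for the spectrum, then shows the domain identification $\DD(H_{\epsilon,\theta})=H^2\cap\langle x\rangle^{-2}L^2$ via ellipticity $|q-i|\ge C\,m$ with $m=1+x^2+\xi^2$, and obtains the resolvent bounds by rescaling $y=\sqrt{\epsilon}x$ to the semiclassical quadratic operator $-e^{-2i\theta}\epsilon\Delta_y-ie^{2i\theta}y^2$ with $h=\sqrt{\epsilon}$ and invoking semiclassical ellipticity; your reduction to the fixed operator via $U_\epsilon u(x)=\epsilon^{n/8}u(\epsilon^{1/4}x)$ (which factors as the composition of the paper's $\sqrt\epsilon$-rescaling and an $\epsilon^{1/4}$-dilation), your contour-rotation computation of the spectrum, and your power-counting check $\epsilon^{-1/2}\cdot\epsilon^{j/4}\cdot\epsilon^{-(j-2)/4}=1$ are all equivalent to this and correct.
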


\begin{proof}
For every $\epsilon>0$ and $0\leq\theta\leq\theta_0$, $H_{\epsilon,\theta}$ can be viewed as the quantization of the complex-valued quadratic form $q:\RR_x^n\times\RR_\xi^n \to \CC,\quad(x,\xi)\mapsto e^{-2i\theta}\xi^2 - i\epsilon e^{2i\theta}x^2$, which shall be viewed as a closed densely defined operator on $L^2(\RR^n)$ equipped with the domain $\DD(H_{\epsilon,\theta}):=\{u\in L^2(\RR^n):H_{\epsilon,\theta}u\in L^2(\RR^n)\}$. For the analysis of the spectrum for general quadratic operators see Hitrik--Sj\"ostrand--Viola \cite{HSV} and references given there, in particular we obtain \eqref{eqn:Spectrum Hepstheta}. Noticing that the numerical range of $q$ is the sector $\{z\in\CC:3\pi/2 + 2\theta \leq \arg z \leq 2\pi-2\theta\}$, $H_{\epsilon,\theta} - i$ is elliptic with respect to the order function $m=1+x^2+\xi^2$ in the sense that $|q-i|\geq C m$ for some $C=C(\epsilon)>0$. Since $H_{\epsilon,\theta} - i$ is invertible by \eqref{eqn:Spectrum Hepstheta}, we conclude that \[(H_{\epsilon,\theta}-i)^{-1} : L^2(\RR^n)\to m^{-1}(x,D_x) L^2(\RR^n) = H^2(\RR^n)\cap \langle x \rangle^{-2}L^2(\RR^n).\]
Hence $u\in\DD(H_{\epsilon,\theta})\Rightarrow u=(H_{\epsilon,\theta}-i)^{-1} (H_{\epsilon,\theta}u - iu)\in H^2(\RR^n)\cap \langle x \rangle^{-2}L^2(\RR^n) $. Now we rescale $y=\sqrt{\epsilon}x$, then $H_{\epsilon,\theta}$ is unitary equivalent to  $-e^{-2i\theta}\epsilon \Delta_y - i e^{2i\theta}y^2$, that is a semiclassical quadratic operator with $h=\sqrt{\epsilon}$. The bounds \eqref{eqn:Hepstheta resolv bounds} follow from semiclassical ellipticity of $-e^{-2i\theta}\epsilon \Delta_y - i e^{2i\theta}y^2 - z$ for $-2\theta+\delta<\arg z<3\pi/2+2\theta-\delta$, $|z|>\delta$.
\end{proof}

Then we show that $P_{\epsilon,\theta}$ is a Fredholm  operator for $z\notin e^{-i\pi/4}[0,\infty)$.
\begin{lem}
\label{lem:Pepstheta Fredholm}
If $z\in\CC\setminus\{0\}$, $\arg z\neq -\pi/4$, then for each $\epsilon>0$ and $0\leq\theta < \theta_0$, $P_{\epsilon,\theta} - z: \widehat{\DD}_\theta\to \HH_\theta$ is a Fredholm operator of index $0$. In particular the spectrum of $P_{\epsilon,\theta}$ in $\CC\setminus e^{-i\pi/4}[0,\infty)$ is discrete.
\end{lem}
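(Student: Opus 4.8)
The plan is to mimic the parametrix construction of Lemma~\ref{lem:Ptheta Fredholm}, replacing the model operator at infinity, $-e^{-2i\theta}\Delta$, by the Davies-type operator $H_{\epsilon,\theta}$ of \eqref{eqn:Hepstheta}, whose resolvent bounds \eqref{eqn:Hepstheta resolv bounds} are available from Lemma~\ref{lem:Davies spectrum}. The key point is that by \eqref{P_theta at infinity} the coefficients of $P_\theta - e^{-2i\theta}(-\Delta)$ tend to zero at infinity along $\Gamma_\theta$, while $-i\epsilon(1-\chi(x_\theta))x_\theta^2$ agrees exactly with $-i\epsilon e^{2i\theta}x^2$ for $|x|$ large (once $R_1$ is chosen with $\supp\chi\subset B(0,R_1)$ and $g_\theta(t)=e^{i\theta}t$ for $t\geq T_0$); hence $P_{\epsilon,\theta}$ is, near infinity, a small perturbation of $H_{\epsilon,\theta}$ in the sense that the difference has coefficients decaying to zero.

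Concretely, I would introduce the same partition of unity $1=\chi_1+\chi_2+\chi_3$ on $\Gamma_\theta$ as before, with $\supp\chi_1\subset B(0,R_1)$, $\supp\chi_2$ compact and disjoint from $\overline{B(0,R_0)}$, and $\supp\chi_3\subset\{|x_\theta|\geq T\}$ contained in the region where $\Gamma_\theta$ is the straight dilation $e^{i\theta}\RR^n$ and $\chi\equiv 0$; pick $\cchi_j=1$ near $\supp\chi_j$ with the analogous support properties. Set
\[
E(z) = \cchi_1 (P-z_0)^{-1}\chi_1 + S(z)\chi_2 + \cchi_3\, e^{2i\theta}\big(H_{\epsilon,\theta}-e^{2i\theta}z\big)^{-1}\chi_3,
\]
where $z_0\in\CC\setminus\RR$ is fixed, $S(z)$ is a properly supported parametrix of the elliptic operator $P_{\epsilon,\theta}-z$ on $\Gamma_\theta\setminus B(0,R_0)$ (ellipticity coming from \eqref{P_theta elliptic}, noting that the zeroth-order term $-i\epsilon(1-\chi)x_\theta^2$ does not affect the principal symbol), and the conjugation factor $e^{2i\theta}$ on the last term reflects the identity $P_{\epsilon,\theta}=e^{-2i\theta}H_{\epsilon,\theta}$ on $\supp\chi_3$ modulo coefficients that vanish at infinity (here one rescales $H_{\epsilon,\theta}-z \mapsto e^{2i\theta}(H_{\epsilon,\theta}-e^{2i\theta}z)$ so that the argument constraint $\arg z\neq-\pi/4$ lands inside the good sector $-2\theta+\delta<\arg(e^{2i\theta}z)<3\pi/2+2\theta-\delta$ of \eqref{eqn:Hepstheta resolv bounds}). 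Composing $P_{\epsilon,\theta}-z$ with $E(z)$ yields $I+K(z)+K_1(z)$, where $K_1(z)=(P_{\epsilon,\theta}-e^{-2i\theta}H_{\epsilon,\theta})\cchi_3\,e^{2i\theta}(H_{\epsilon,\theta}-e^{2i\theta}z)^{-1}\chi_3$ has small norm once $T$ is large, by \eqref{P_theta at infinity} combined with the $H^j$-mapping bounds of \eqref{eqn:Hepstheta resolv bounds}; and $K(z)$, consisting of commutator terms times resolvents and $((P_{\epsilon,\theta}-z)S(z)-I)\chi_2$, is compact on $\HH_\theta$ by \eqref{eqn:domain D} and \eqref{eqn:P compact} exactly as in Lemma~\ref{lem:Ptheta Fredholm}. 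Thus $(I+K_1(z))^{-1}$ exists and $E(z)(I+K_1(z))^{-1}$ is an approximate right inverse modulo compacts; the approximate left inverse is built symmetrically. This gives that $P_{\epsilon,\theta}-z$ is Fredholm, and it is of index $0$ because the index is constant in $z$ over the connected set $\{\arg z\neq-\pi/4\}$ and vanishes at some point (see below). Discreteness of the spectrum in $\CC\setminus e^{-i\pi/4}[0,\infty)$ follows from analytic Fredholm theory once we exhibit one point where $P_{\epsilon,\theta}-z$ is invertible.

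For the invertibility at one point I would take $z_0=iL$ with $L\gg 1$ and use the simpler two-term parametrix $E(z_0)=\cchi_1(P-z_0)^{-1}\chi_1+(1-\chi_0)(H_{\epsilon,\theta}-z_0)^{-1}(1-\chi_1)$ patterned on \eqref{eqn:E(z0)}, where now the uniform-in-$\epsilon$ resolvent bound $(H_{\epsilon,\theta}-iL)^{-1}=O(L^{-1/2}):\HH_\theta\to\HH_\theta$ from \eqref{eqn:Hepstheta resolv bounds} makes both $K(iL)$ and $K_1(iL)$ small for $L$ large; inverting $I+K_1$ and $I+K$ by Neumann series then produces $(P_{\epsilon,\theta}-iL)^{-1}$. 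I expect the main obstacle to be bookkeeping rather than conceptual: one must be careful that the rescaling/conjugation by $e^{\pm 2i\theta}$ is applied consistently so that the sector hypothesis of \eqref{eqn:Hepstheta resolv bounds} is genuinely met for all $z$ with $\arg z\neq-\pi/4$ (in particular checking the endpoints of the sector after the $e^{2i\theta}$-twist), and that the perturbation $P_{\epsilon,\theta}-e^{-2i\theta}H_{\epsilon,\theta}$ is truly lower order with coefficients decaying at infinity — i.e.\ that the quadratic absorbing term cancels exactly on $\supp\chi_3$ and contributes only to $S(z)$ elsewhere, where it is harmless because it is of order zero relative to the elliptic second-order part. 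Everything else is a line-by-line transcription of the proof of Lemma~\ref{lem:Ptheta Fredholm}.
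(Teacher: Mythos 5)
Your overall plan---replace the flat Laplacian model at infinity by the Davies operator $H_{\epsilon,\theta}$ and rerun the gluing argument of Lemma~\ref{lem:Ptheta Fredholm}---is the one the paper uses, and your remarks about compact error terms, the separate two-term parametrix at $z_0=iL$, and the argument constraint on $z$ are all pointing in the right direction. However, your third parametrix term contains a concrete error that would break the proof as written: you take $\cchi_3\,e^{2i\theta}\bigl(H_{\epsilon,\theta}-e^{2i\theta}z\bigr)^{-1}\chi_3$ and justify it by the identity ``$P_{\epsilon,\theta}=e^{-2i\theta}H_{\epsilon,\theta}$ on $\supp\chi_3$ modulo decaying coefficients.'' This identity is false. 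In Lemma~\ref{lem:Ptheta Fredholm} the twist $e^{2i\theta}(-\Delta-e^{2i\theta}z)^{-1}$ appears because the model operator at infinity, after identifying $\Gamma_\theta$ with $\RR^n$ via $x\mapsto e^{i\theta}x$, is $e^{-2i\theta}(-\Delta)$, and the resolvent of the \emph{unrotated} Laplacian is being pulled back. But $H_{\epsilon,\theta}=-e^{-2i\theta}\Delta-i\epsilon e^{2i\theta}x^2$ is, by construction, \emph{already} the pulled-back operator: the $e^{-2i\theta}$ on $\Delta$ and $e^{2i\theta}$ on $x^2$ are exactly the Jacobian factors coming from $x\mapsto e^{i\theta}x$. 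So $P_{\epsilon,\theta}$ agrees with $H_{\epsilon,\theta}$ itself---not with $e^{-2i\theta}H_{\epsilon,\theta}$---near infinity (compare \eqref{P_theta at infinity} with \eqref{eqn:Hepstheta}). With your extra twist, the remainder $K_1(z)=(P_{\epsilon,\theta}-e^{-2i\theta}H_{\epsilon,\theta})\cchi_3\,e^{2i\theta}(H_{\epsilon,\theta}-e^{2i\theta}z)^{-1}\chi_3$ contains the $O(1)$ second-order piece $(1-e^{-2i\theta})H_{\epsilon,\theta}$; it does not become small when $\supp\chi_3$ is pushed to infinity, and the Neumann-series inversion of $I+K_1$ fails. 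The correction is simply to drop the twist and use $(1-\chi_0)(H_{\epsilon,\theta}-z)^{-1}(1-\chi_1)$ verbatim, as the paper does.

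Two further remarks. First, for the Fredholm property at a \emph{fixed} $z$ with $\arg z\neq-\pi/4$ you do not need the uniform sector estimate \eqref{eqn:Hepstheta resolv bounds} at all---only that $(H_{\epsilon,\theta}-z)^{-1}:L^2\to H^2\cap\langle x\rangle^{-2}L^2$ exists and is bounded, which follows from \eqref{eqn:Spectrum Hepstheta} together with the domain description in Lemma~\ref{lem:Davies spectrum}; the quantitative bound \eqref{eqn:Hepstheta resolv bounds} is only invoked at $z_0=iL$ for the invertibility step. Second, a smaller stylistic difference: the paper uses a two-piece parametrix $E(z)=\chi_2(P_\theta-i\epsilon(\chi_3-\chi)x_\theta^2-z_0)^{-1}\chi_1+(1-\chi_0)(H_{\epsilon,\theta}-z)^{-1}(1-\chi_1)$ throughout, where the inner resolvent already incorporates a compactly supported CAP and is constructed from $(P_\theta-iL)^{-1}$ by perturbation. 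Your three-piece scheme with a local parametrix $S(z)$ and the raw $(P-z_0)^{-1}$ can also be made to work once the conjugation is removed (the absorbing term multiplied by $\cchi_1$ lands in the compact remainder $K(z)$), but the paper's choice avoids the bookkeeping in the middle region.
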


\begin{proof}
We choose $\chi_j\in\CIc(\Gamma_\theta)$, $j=0,1,2,3$, such that $\chi_j=1$ near $\supp\chi_{j-1}$ and that $\chi_0(g_\theta(t)\omega)=1$ for any $t\leq T_0$, thus $1-\chi_j$ are supported in the region where $\Gamma_\theta\owns x_\theta=e^{i\theta}x,\ x\in\RR^n$. Lemma \ref{lem:Davies spectrum} then shows that if $\arg z\neq -\pi/4$, 
\[
    (1-\chi_0)(H_{\epsilon,\theta}-z)^{-1}(1-\chi_1):\HH_\theta \to \widehat{\DD}_\theta.
\]
Now we fix $z\in\CC\setminus\{0\}$ with $\arg z\neq -\pi/4$. Using \eqref{P_theta at infinity} we may assume  that $\supp\chi_0$ is large enough so that 
$\| (P_{\epsilon,\theta}-H_{\epsilon,\theta})(1-\chi_0) (H_{\epsilon,\theta} - z)^{-1} (1-\chi_1)\|_{\HH_\theta\to\HH_\theta} \leq 1/2 $.
Then we choose $z_0=iL$, $L\gg 1$ using \eqref{eqn:Ptheta z0inverse} such that $\| \epsilon(\chi_3 -\chi)x_\theta^2 (P_\theta - z_0)^{-1}\|_{\HH_\theta\to\HH_\theta} \leq 1/2$, thus 
\[
(P_\theta - i\epsilon(\chi_3 -\chi)x_\theta^2 -z_0)^{-1} =  (P_\theta - z_0)^{-1} (I - i\epsilon(\chi_3 -\chi)x_\theta^2 (P_\theta - z_0)^{-1})^{-1}
\] 
exists. We put
\[
    E(z) = \chi_2 (P_\theta -i\epsilon(\chi_3-\chi)x_\theta^2 -z_0)^{-1}\chi_1 + (1-\chi_0)(H_{\epsilon,\theta}-z)^{-1}(1-\chi_1).
\]
Then we get
\[ 
    (P_{\epsilon,\theta} - z) E(z) = I + K(z) + K_1(z), 
\]
where
\[
\begin{gathered}
    K(z) = ((z_0 - z)\chi_2 + [P_\theta,\chi_2]) (P_\theta -i\epsilon(\chi_3-\chi)x_\theta^2 -z_0)^{-1}\chi_1 \\
     + [e^{-2i\theta}\Delta,\chi_0](H_{\epsilon,\theta}-z)^{-1} (1-\chi_1) \\
    K_1(z) = (P_{\epsilon,\theta} - H_{\epsilon,\theta})(1-\chi_0)(H_{\epsilon,\theta} - z)^{-1} (1-\chi_1).
\end{gathered}
\]
Recalling that $\|K_1(z)\|_{\HH_\theta\to\HH_\theta}\leq 1/2$, we obtain that $I+K_1(z)$ is invertible thus
\[
    (P_{\epsilon,\theta} - z)E(z)(I+K_1(z))^{-1} = I + K(z)(I+K_1(z))^{-1}.
\]
Since $(P_\theta - z_0)^{-1}:\HH_\theta\to \DD_\theta$, we conclude that $K(z)$ is compact: $\HH_\theta\to\HH_\theta$. Hence $E(z)(I+K_1(z))$ is an approximate right inverse of $P_{\epsilon,\theta} - z$. 

As an approximate left inverse, we put
\[
    F(z) = \chi_1 (P_\theta -i\epsilon(\chi_3-\chi)x_\theta^2 -z_0)^{-1}\chi_2 + (1-\chi_1)(H_{\epsilon,\theta}-z)^{-1}(1-\chi_0).
\]
Then
\[
    F(z) (P_{\epsilon,\theta} - z) = I + L(z) + L_1(z),
\]
where 
\[
\begin{gathered}
    L(z) = \chi_1 (P_\theta -i\epsilon(\chi_3-\chi)x_\theta^2 -z_0)^{-1} ((z_0 - z)\chi_2 - [P_\theta,\chi_2]) \\
     - (1-\chi_1)(H_{\epsilon,\theta}-z)^{-1}[e^{-2i\theta}\Delta,\chi_0]  \\
    L_1(z) = (1-\chi_1)(H_{\epsilon,\theta} - z)^{-1} (1-\chi_0)(P_{\epsilon,\theta} - H_{\epsilon,\theta}).
\end{gathered}
\]
We may assume again by \eqref{P_theta at infinity} that $\|L_1(z)\|_{\widehat{\DD}_\theta\to\widehat{\DD}_\theta} \leq 1/2$, then 
\[
    (I+L_1(z))^{-1} F(z)(P_{\epsilon,\theta}-z) = I + (I+L_1(z))^{-1} L(z).
\]
Using \eqref{eqn:domain D}, we see that $[e^{-2i\theta}\Delta,\chi_0]$ is compact: $\widehat{\DD}_\theta \to \HH_\theta$, thus $L(z)$ is compact: $\widehat{\DD}_\theta \to \widehat{\DD}_\theta$, $(I+L_1(z))^{-1} F(z)$ is an approximate left inverse.

Since $\|K(z_0)\|_{\HH_\theta\to\HH_\theta}$ and $\|L(z_0)\|_{\widehat{\DD}_\theta \to \widehat{\DD}_\theta}$ can be controlled by the operator norms of $(P-z_0)^{-1}$, $(-\Delta_\theta-z_0)^{-1}$ and $(H_{\epsilon,\theta}-z_0)^{-1}$.
It then follows from \eqref{eqn:Delta_theta resolv bound} and \eqref{eqn:Hepstheta resolv bounds} that $\|K(z_0\|_{\HH_\theta\to\HH_\theta}$, $\|L(z_0)\|_{\widehat{\DD}_\theta \to \widehat{\DD}_\theta} \ll 1$ provided $z_0=iL,\ L\gg 1$, thus $P_{\epsilon,\theta} - iL$ is invertible for $L$ sufficiently large, which implies that $P_{\epsilon,\theta}$ has a discrete spectrum in $\CC\setminus e^{-i\pi/4}[0,\infty)$.
\end{proof}

\begin{lem}
\label{lem:agreement of resolvent}
For each $0\leq \theta < \theta_0$ and $\epsilon>0$, let $\psi\in\CIc(B(0,R_1);[0,1])$ be equal to $1$ near $\overline{B(0,R_0)}$ so that $\psi$ is a function on $\Gamma_\theta$ and defines a multiplication on $\HH_\theta$. Then we have, meromorphically in the region $-\pi/4<\arg z<7\pi/4$,
\begin{equation}
\label{eqn:Pepscutoff resolvent}
    \psi(P_\epsilon - z)^{-1} \psi = \psi (P_{\epsilon,\theta} - z)^{-1}\psi .
\end{equation}    
\end{lem}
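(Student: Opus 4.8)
The plan is to compare the genuine operators $(P_\epsilon-z)^{-1}$ and $(P_{\epsilon,\theta}-z)^{-1}$ and then invoke analyticity. Both are meromorphic families on $\{\,-\pi/4<\arg z<7\pi/4\,\}=\CC\setminus e^{-i\pi/4}[0,\infty)$: the left-hand side because $\Spec(P_\epsilon)$ is discrete there (this is Lemma \ref{lem:Pepstheta Fredholm} with $\theta=0$, since $P_{\epsilon,0}=P_\epsilon$; if $\theta_0=0$ the statement is vacuous), the right-hand side by Lemma \ref{lem:Pepstheta Fredholm}. Hence it suffices to prove \eqref{eqn:Pepscutoff resolvent} for $z$ in the common resolvent set $\Omega$ of $P_\epsilon$ and $P_{\epsilon,\theta}$, which is an open dense subset of the connected set $\CC\setminus e^{-i\pi/4}[0,\infty)$; the identity then propagates to all of it by analytic continuation.

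Fix $z_0\in\Omega$ (for instance $z_0=iL$ with $L\gg1$, as in Lemmas \ref{lem:Ptheta Fredholm} and \ref{lem:Pepstheta Fredholm}), let $g\in\HH$, and set $f:=\psi g$ and $u:=(P_\epsilon-z_0)^{-1}f$. Choosing the contours $\Gamma_\theta$ with $R_1$ so large that $\supp\psi\cup\supp\chi\subset B(0,R_1)$, the exterior part of $f$ is supported in $B(0,R_1)$ and $P_\epsilon$ acts as $Q-i\epsilon x^2$ on $\RR^n\setminus\overline{B(0,R_1)}$; hence the $L^2$ function $u$ solves the elliptic equation $(Q-i\epsilon x^2-z_0)u=0$ on $\RR^n\setminus\overline{B(0,R_1)}$. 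The idea is to continue $u$ holomorphically into the conic set of \eqref{Analytic extension} and restrict the continuation to $\Gamma_\theta$, thereby producing $v\in\widehat{\DD}_\theta$ with $(P_{\epsilon,\theta}-z_0)v=f$ and $v=u$ on $\supp\psi$.

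The analytic input that makes this work is the following: since the coefficients of $Q$ extend holomorphically to the conic set in \eqref{Analytic extension}, $x^2$ is entire, and the absorbing term $-i\epsilon x^2$ forces Gaussian-type decay of $u$ near infinity, $u$ continues to a holomorphic function $\widetilde u$ on that set whose restriction to each contour $\Gamma_{\theta'}$, $\theta'\in[0,\theta]$, lies in $|x_{\theta'}|^{-2}L^2$ near infinity. This is exactly the analytic ellipticity underlying the complex scaling construction, carried out in \cite{sj97} (compare \cite[\S3--5]{SZ1}); the presence of the analytic term $-i\epsilon x^2$ only helps. Granting it, set $v:=u$ on $\{|x|\le R_1\}$ and $v:=\widetilde u|_{\Gamma_\theta}$ on $\{|x|>R_1\}$; since $g_\theta(t)=t$ for $t\le R_1$, so that $g_\theta(t)-t$ vanishes to infinite order at $t=R_1$, the two prescriptions agree to infinite order and $v$ is a well-defined smooth element of $\widehat{\HH}_\theta$, with $v=u\in\DD$ near $\overline{B(0,R_0)}$, whence $v\in\widehat{\DD}_\theta$. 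By uniqueness of analytic continuation $\widetilde u$ solves the holomorphic continuation of $(Q-i\epsilon x^2-z_0)u=0$; restricting it via the formalism \eqref{eqn:A_Gamma} and using $\chi(x_\theta)=0$ for $|x|>R_1$, together with the fact that $P_{\epsilon,\theta}$ coincides with $P_\epsilon$ on functions supported in $B(0,R_1)$ (where $\Gamma_\theta=\RR^n$), one finds $(P_{\epsilon,\theta}-z_0)v=f$. Therefore $v=(P_{\epsilon,\theta}-z_0)^{-1}f$, and since $\supp\psi\subset B(0,R_1)$ we conclude $\psi(P_{\epsilon,\theta}-z_0)^{-1}\psi g=\psi v=\psi u=\psi(P_\epsilon-z_0)^{-1}\psi g$, which is \eqref{eqn:Pepscutoff resolvent} at $z_0$.

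The main obstacle will be the analytic-continuation-with-decay statement of the previous paragraph --- that an $L^2$ solution of $(Q-i\epsilon x^2-z_0)u=0$ near infinity extends holomorphically to the conic region of \eqref{Analytic extension} with restrictions in the weighted spaces $|x_{\theta'}|^{-2}L^2(\Gamma_{\theta'})$. All the remaining steps (the meromorphy used in the first paragraph, the gluing across $\{|x|=R_1\}$, the membership $v\in\widehat{\DD}_\theta$, and the identity $(P_{\epsilon,\theta}-z_0)v=f$) are routine bookkeeping with the black box assumptions \eqref{eqn:Q defn}, \eqref{Analytic extension} and the restriction formalism \eqref{eqn:A_Gamma}.
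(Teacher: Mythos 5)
Your overall strategy aligns with the paper's --- compare the two resolvents at a fixed $z_0$, produce a holomorphic continuation of the solution across the contours, restrict it to $\Gamma_\theta$, and then analytically continue in $z$. But the step you flag at the end as ``the main obstacle'' is precisely the substantive content of the lemma, and it does not follow from Lemma~\ref{lem:deformation} plus the remark that the analytic term $-i\epsilon x^2$ ``only helps.'' Two concrete problems. First, Lemma~\ref{lem:deformation} applies only to \emph{compactly supported} deformations ($F(s,x)=F(0,x)$ off a compact set), while $\Gamma_0\rightsquigarrow\Gamma_\theta$ alters the contour all the way out to infinity. The paper therefore interpolates through the auxiliary contours $\Gamma_{\theta_1,\theta_2,T}$, each coinciding with $\Gamma_{\theta_1}$ outside the dyadic annulus $\{T\le |z|\le 6T\}$, applies the lemma to each, and patches the resulting local continuations $U^T$ by varying $T$; a one-shot invocation of the deformation lemma, as in your proposal, does not produce the global continuation.

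Second, and more seriously, even once the holomorphic extension $U$ exists, nothing in your argument establishes that $U|_{\Gamma_\theta}$ lies in $\widehat{\HH}_\theta$, and hence that $v\in\widehat{\DD}_\theta$. The asserted ``Gaussian-type decay forced by $-i\epsilon x^2$'' is unjustified (the solution lies in $|x_\theta|^{-2}L^2$, not better), and in any case holomorphic continuation is compatible with arbitrary growth along the deformed contour. The paper proves the needed $L^2$ membership quantitatively: it restricts to small increments $|\theta_1-\theta_2|\ll1$ and to $z\in e^{i(\pi/2-2\theta_1)}(1,\infty)$ so that the coercivity estimate from \cite{Zw-vis},
\[
|\langle(-\Delta|_{\Gamma_{\theta_1,\theta_2,T}}-i\epsilon(x|_{\Gamma_{\theta_1,\theta_2,T}})^2-ie^{-2i\theta_1}\tau)u,u\rangle|\ge(\|u\|_{L^2}^2+\|Du\|_{L^2}^2)/C
\]
holds uniformly in $T$; pairing against dyadic cutoffs $\chi_T$ then controls $\|\chi_T U\|_{L^2}$ by the commutator term $\|[P_{\epsilon,\theta_1,\theta_2,T},\chi_T]U\|_{L^2}$, which is supported on $\Gamma_{\theta_1}$, giving $\|1_{2T\le|z|\le4T}u_2\|^2_{L^2(\Gamma_{\theta_2})}\le C\|1_{T/2\le|z|\le8T}u_1\|^2_{H^1(\Gamma_{\theta_1})}$, and summation over $T=2^j$ yields $u_2\in\HH_{\theta_2}$. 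Without this chain of estimates you cannot identify $v$ with $(P_{\epsilon,\theta}-z_0)^{-1}f$, so the proposal as written has a genuine gap rather than ``routine bookkeeping.''
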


\begin{proof}
We modify the proof of \cite[Lemma 2]{Zw-vis}. It is sufficient to show that for $0\leq \theta_1 < \theta_2 < \theta_0$, $|\theta_1 - \theta_2|\ll 1$, 
\begin{equation}
\label{eqn2:Pepscutoff}
    \psi (P_{\epsilon,\theta_1} - z)^{-1}\psi = \psi (P_{\epsilon,\theta_2} - z)^{-1}\psi.
\end{equation}
It is also enough to establish this for $z\in e^{i(-2\theta_1 + \pi/2)}(1,\infty)$ as then the result follows by analytic continuation. For that we show that for $f\in \HH_{R_0}\oplus L^2(B(0,R_1)\setminus B(0,R_0))\subset \HH_{\theta_j}$ there exists $U$ holomorphic in a neighborhood $\Omega_{\theta_1,\theta_2}$ of 
\[ \bigcup_{\theta_1\leq\theta\leq\theta_2} (\Gamma_\theta \setminus B(0,R_0)) \subset \CC^n \]
such that 
\begin{equation}
\label{eqn:holomorphic U}
    U|_{\Gamma_{\theta_j}} (x) = [(P_{\epsilon,\theta_j}-z)^{-1} \psi f](x),\quad\forall\,x\in\Gamma_{\theta_j}\setminus B(0,R_0).
\end{equation}

To show the existence of $U$ such that \eqref{eqn:holomorphic U} holds we apply Lemma \ref{lem:deformation} to a modified family of deformations, which is obtained as follows. Let $\rho\in\CIc((1,6);[0,1])$ be equal to $1$ near $[2,4]$, and put for $T\geq 1$,
\[    
\begin{gathered}
    g_{\theta_1,\theta_2,T}(t) := g_{\theta_1}(t) + \rho(t/T) (g_{\theta_2}(t) - g_{\theta_1}(t)), \\
    \Gamma_{\theta_1,\theta_2,T} := \{ g_{\theta_1,\theta_2,T}(t)\omega : t\in [0,\infty),\ \omega\in\mathbb{S}^{n-1} \}\subset\CC^n . 
\end{gathered}
\]
We can apply Lemma \ref{lem:deformation} to the family of totally real submanifolds interpolating between $\Gamma_{\theta_1}$ and $\Gamma_{\theta_1,\theta_2,T}$, $[0,1]\owns s\mapsto \Gamma_{\theta_1,(1-s)\theta_1 + s\theta_2,T}$. It follows that there exists a holomorphic function $U^T$ defined in a neighborhood of the union of these submanifolds which restricts to $ u_1 := (P_{\epsilon,\theta_1}-z)^{-1} \psi f \in\HH_{\theta_1}$. Varying $T$ we obtain a family of functions agreeing on the intersections of their domains and that gives a holomorphic function $U$ defined in the neighborhood $\Omega_{\theta_1,\theta_2}$.

It remains to show that $U$ restricts to $u_2 \in \HH_{\theta_2}$ (the equation $(P_{\epsilon,\theta_2} - z)u_2 = \psi f$ is automatically satisfied). For $T$ large we put
\[
\begin{gathered}
    \Omega_1(T) = \{ z\in\CC^n : T\leq |z| \leq 6T \}\cap \Gamma_{\theta_1,\theta_2,T} \supset \Gamma_{\theta_1,\theta_2,T}\setminus\Gamma_{\theta_1}, \\
    \Omega_2(T) = \{ z\in\CC^n : T/2 \leq |z| \leq 8T \}\cap \Gamma_{\theta_1,\theta_2,T} ,\quad \Omega_2(T)\setminus\Omega_1(T) \subset e^{i\theta_1}\RR^n,
\end{gathered}
\]
and choose $\chi_T\in\CI(\Omega_2(T);[0,1])$ such that $\chi_T=1$ on $\Omega_1(T)$ with derivative bounds independent of $T$. We recall the following estimate from the proof of \cite[Lemma 3]{Zw-vis}: for $u\in\CI(\Gamma_{\theta_1,\theta_2,T})$, $\tau>1$,
\[
    |\,\langle (-\Delta|_{\Gamma_{\theta_1,\theta_2,T}}-i\epsilon (x|_{\Gamma_{\theta_1,\theta_2,T}})^2 - i e^{-2i\theta_1}\tau )u , u \rangle\,| \geq (\|u\|_{L^2}^2 + \|Du\|_{L^2}^2)/C,
\]
with $C>0$ independent of $\tau,T$, here $\langle\cdot,\cdot\rangle$ is the $L^2$ inner product on $\Gamma_{\theta_1,\theta_2,T}$. Writing 
\[
    P_{\epsilon,\theta_1,\theta_2,T} := P|_{\Gamma_{\theta_1,\theta_2,T}} - i\epsilon (x|_{\Gamma_{\theta_1,\theta_2,T}})^2,
\]
it then follows from \eqref{eqn:Q defn} that
\[ 
    \langle (P_{\epsilon,\theta_1,\theta_2,T} - (-\Delta|_{\Gamma_{\theta_1,\theta_2,T}}-i\epsilon (x|_{\Gamma_{\theta_1,\theta_2,T}})^2 ))u , u \rangle 
    = \int_{\Gamma_{\theta_1,\theta_2,T}} (g^{jk}-\delta^{jk})\partial_k u \partial_j \Bar{u} +c|u|^2.
\]
In view of \eqref{eqn:Qelliptic} and \eqref{Analytic extension}, we obtain that for $T$ sufficiently large,
\[
    |\,\langle (P_{\epsilon,\theta_1,\theta_2,T} - i e^{-2i\theta_1}\tau )\chi_T U , \chi_T U \rangle\,| \geq (\|\chi_T U\|_{L^2}^2 + \|D(\chi_T U)\|_{L^2}^2)/C,
\]
thus $\|\chi_T U\|_{L^2}\leq C \|(P_{\epsilon,\theta_1,\theta_2,T} - i e^{-2i\theta_1}\tau)\chi_T U \|_{L^2}$. We note that 
\[
    (P_{\epsilon,\theta_1,\theta_2,T} - i e^{-2i\theta_1}\tau) U^T = 0 \implies (P_{\epsilon,\theta_1,\theta_2,T} - i e^{-2i\theta_1}\tau)\chi_T U =  [P_{\epsilon,\theta_1,\theta_2,T},\chi_T] U , 
\]
which is supported on $\Omega_2(T)\setminus\Omega_1(T)\subset\Gamma_{\theta_1}$. Hence,
\[
    \|1_{2T\leq |z|\leq 4T}\,u_2\|^2_{L^2(\Gamma_{\theta_2})} \leq  C\|[P_{\epsilon,\theta_1,\theta_2,T},\chi_T] U\|^2_{L^2}\leq C\|1_{T/2 \leq |z|\leq 8T}\,u_1\|^2_{H^1(\Gamma_{\theta_1})}.
\]
We now take $T=2^j$ and sum over $j$, it follows that  $u_2\in\HH_{\theta_2}$.
\end{proof}

\begin{lem}
\label{lem:Pepstheta mult}
For $0\leq \theta < \theta_0$, $\epsilon>0$, the spectrum of $P_{\epsilon,\theta}$ is independent of $\theta$. More precisely, for any $z_0\in\CC\setminus e^{-i\pi/4}[0,\infty)$ we have
\begin{equation}
\label{eqn:Pepstheta mult}
m_{\epsilon,\theta}(z_0) := \rank\oint_{z_0} (P_{\epsilon,\theta} - z)^{-1} dz = \rank \oint_{z_0} \ (P_{\epsilon} - z)^{-1} \, dz,
\end{equation}
where the integral is over a positively oriented circle enclosing $z_0$ and containing no poles other than possibly $z_0$.  
\end{lem}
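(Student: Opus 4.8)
The plan is to reduce the statement, by means of Lemma~\ref{lem:agreement of resolvent}, to an assertion about the cutoff resolvent at $z_0$. Fix $z_0\in\CC\setminus e^{-i\pi/4}[0,\infty)$ and let $\psi\in\CIc(B(0,R_1);[0,1])$ be equal to $1$ near $\overline{B(0,R_0)}$, as in Lemma~\ref{lem:agreement of resolvent}, so that $\psi$ defines a multiplication on every $\HH_\theta$, $0\le\theta<\theta_0$. By Lemma~\ref{lem:Pepstheta Fredholm} the Riesz projection $\Pi_{\epsilon,\theta}:=\tfrac{1}{2\pi i}\oint_{z_0}(z-P_{\epsilon,\theta})^{-1}\,dz$ has finite rank $m_{\epsilon,\theta}(z_0)$; its range is the generalized eigenspace $V_{\epsilon,\theta}=\bigcup_k\ker(P_{\epsilon,\theta}-z_0)^k$, and $\Ran\Pi_{\epsilon,\theta}^*$ is the corresponding generalized eigenspace $V_{\epsilon,\theta}^*$ of $P_{\epsilon,\theta}^*$ at $\bar z_0$. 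The heart of the matter is the identity
\begin{equation}
\label{eqn:plan-cutoff-rank}
    \rank\big(\psi\,\Pi_{\epsilon,\theta}\,\psi\big)=m_{\epsilon,\theta}(z_0),\qquad 0\le\theta<\theta_0 .
\end{equation}
Once \eqref{eqn:plan-cutoff-rank} is known, the lemma follows at once: $\psi\,\Pi_{\epsilon,\theta}\,\psi$ and $\psi\,\Pi_{\epsilon,0}\,\psi$ are, up to a constant, the residues at $z_0$ of the two sides of the identity in Lemma~\ref{lem:agreement of resolvent} (recall $P_{\epsilon,0}=P_\epsilon$), hence equal, and therefore $m_{\epsilon,\theta}(z_0)=\rank(\psi\,\Pi_{\epsilon,\theta}\,\psi)=\rank(\psi\,\Pi_{\epsilon,0}\,\psi)=m_{\epsilon,0}(z_0)$, which is \eqref{eqn:Pepstheta mult}; in particular $\Spec(P_{\epsilon,\theta})$ is independent of $\theta$ in $\CC\setminus e^{-i\pi/4}[0,\infty)$.

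To prove \eqref{eqn:plan-cutoff-rank} I would argue that it is a matter of linear algebra once one knows that $\psi$ is injective on both $V_{\epsilon,\theta}$ and $V_{\epsilon,\theta}^*$. Indeed, since $\psi^*=\psi$, injectivity on $V_{\epsilon,\theta}^*$ forces $\Pi_{\epsilon,\theta}(\psi\,\HH_\theta)=V_{\epsilon,\theta}$ (any functional on $V_{\epsilon,\theta}$ vanishing on $\Pi_{\epsilon,\theta}(\psi\,\HH_\theta)$ is represented by some $w\in V_{\epsilon,\theta}^*$ with $\psi w=0$), hence $\rank(\Pi_{\epsilon,\theta}\psi)=m_{\epsilon,\theta}(z_0)$; composing on the left with $\psi$, which is injective on $\Ran(\Pi_{\epsilon,\theta}\psi)\subseteq V_{\epsilon,\theta}$, then yields \eqref{eqn:plan-cutoff-rank}. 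So it remains to show: if $w$ is a generalized eigenfunction of $P_{\epsilon,\theta}$ at $z_0$, or of $P_{\epsilon,\theta}^*$ at $\bar z_0$, with $\psi w=0$, then $w=0$.

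This I would prove by unique continuation. If $\psi w=0$ then $w$ vanishes on a neighborhood of $\overline{B(0,R_0)}$; in particular $1_{B(0,R_0)}w=0$ and $w$ vanishes on a nonempty open subset of the exterior region $\Gamma_\theta\setminus\overline{B(0,R_0)}$. On that region $P_{\epsilon,\theta}$ acts — by \eqref{eqn:Q defn}, its dilated analogue and \eqref{eqn:Peps}, independently of the cutoff entering the definition of $P_\theta$ — as the scalar second-order operator $L_\theta:=Q_\theta-i\epsilon(1-\chi(x_\theta))x_\theta^2$, which is elliptic with smooth coefficients by \eqref{P_theta elliptic}; similarly $P_{\epsilon,\theta}^*$ acts there as the formal adjoint $L_\theta^*$, again second-order elliptic with smooth coefficients. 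Since a generalized eigenspace lies in $\Dom(P_{\epsilon,\theta}^k)$ for all $k$, elliptic regularity makes $w|_{\Gamma_\theta\setminus\overline{B(0,R_0)}}$ smooth and it solves $(L_\theta-z_0)^N w=0$ (resp.\ $(L_\theta^*-\bar z_0)^N w=0$) for some $N$. Peeling off the factors one at a time and applying the weak unique continuation property for the second-order operator $L_\theta-z_0$ (resp.\ $L_\theta^*-\bar z_0$) on each connected component of $\{|x|>R_0\}$ gives $w\equiv0$ on $\Gamma_\theta\setminus\overline{B(0,R_0)}$, and together with $1_{B(0,R_0)}w=0$ this gives $w=0$ in $\HH_\theta$.

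I expect the unique continuation step to be the main obstacle. Because $\Gamma_\theta$ is only a smooth contour, the principal coefficients of $L_\theta$ are genuinely complex away from the flat part $\{|x|<R_1\}$, so one cannot simply invoke Aronszajn's theorem. On the flat part $\{R_0<|x|<R_1\}$ the principal part of $L_\theta$ is the real operator $Q$, and after moving the complex zeroth-order terms of $L_\theta-z_0$ to the right-hand side Aronszajn's theorem applies; on the dilated part one must use a unique continuation theorem allowing complex principal coefficients, for instance Calderón's, whose simple-characteristics hypothesis holds across the spheres $\{|x|=r\}$ thanks to the ellipticity and angular bounds built into the construction of $g_\theta$ and recorded in \eqref{P_theta elliptic}. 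Everything else — the reduction to \eqref{eqn:plan-cutoff-rank}, the linear-algebra step and the use of Lemma~\ref{lem:agreement of resolvent} — is routine.
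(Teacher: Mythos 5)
Your proposal is correct and follows essentially the same route as the paper. The paper also reduces via Lemma~\ref{lem:agreement of resolvent} to the identity $\rank(\psi\,\Pi_{\epsilon,\theta}(z_0)\,\psi)=\rank\Pi_{\epsilon,\theta}(z_0)$, proves it in the two steps $\rank\Pi=\rank\Pi\psi$ (equivalently injectivity of $\psi$ on $\Ran\Pi^*$) and $\rank\Pi\psi=\rank\psi\Pi\psi$ (injectivity of $\psi$ on $\Ran\Pi$), and establishes the injectivity by the same unique-continuation argument for the elliptic operator $Q_{\epsilon,\theta}^{(*)}$, citing H{\"o}rmander, Chapter 17, for the version allowing complex (principal) coefficients — the concern you flag about Aronszajn not directly applying on the dilated contour is real, and the paper's citation is precisely to a UCP result that covers that case.
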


\begin{proof}
Lemma \ref{lem:Pepstheta Fredholm} shows that 
\begin{equation}
\label{eqn:Proj_epstheta}
    \Pi_{\epsilon,\theta}(z_0) := -\frac{1}{2\pi i}\oint_{z_0} (P_{\epsilon,\theta} - z)^{-1} dz,
\end{equation}
is a finite rank projection which maps $\HH_\theta$ to the generalized eigenspace of $P_{\epsilon,\theta}$ at $z_0$. In view of Lemma \ref{lem:agreement of resolvent}, it suffices to show that for each $0\leq \theta < \theta_0$,
\[
    \rank \Pi_{\epsilon,\theta}(z_0) = \rank \psi \Pi_{\epsilon,\theta}(z_0) \psi.
\]

First we show that $\rank \Pi_{\epsilon,\theta}(z_0) = \rank \Pi_{\epsilon,\theta}(z_0)\psi$, which is equivalent to show that $\rank \psi \Pi_{\epsilon,\theta}(z_0)^* = \rank \Pi_{\epsilon,\theta}(z_0)^*$, since the adjoint of a finite rank operator is of finite rank with the same rank. For that we shall argue by contradiction. Suppose that $\rank \psi \Pi_{\epsilon,\theta}(z_0)^* < \rank \Pi_{\epsilon,\theta}(z_0)^*$, there would exist $0\neq\Tilde{v}\in\Ran \Pi_{\epsilon,\theta}(z_0)^*$ satisfying $\psi \Tilde{v}=0$. Note that $\Pi_{\epsilon,\theta}(z_0)^*$ is also a projection of the form \eqref{eqn:Proj_epstheta} except that  $P_{\epsilon,\theta}^*$ and $\bar{z}_0$ replace $P_{\epsilon,\theta}$ and $z_0$ there, we may assume
\[
    (P_{\epsilon,\theta}^* - \bar{z}_0)^k \Tilde{v} = 0,\quad \Tilde{u}:= (P_{\epsilon,\theta}^* - \bar{z}_0)^{k-1} \Tilde{v}\neq 0,\quad\textrm{for some } k\geq 1.
\]
But that would mean that $\Tilde{u}$ can be identified with an element of $H^2(\Gamma_\theta)$ satisfying
\[
    (Q_{\epsilon,\theta}^* - \bar{z}_0 )\Tilde{u} = 0,\quad \Tilde{u}|_{B(0,R_0)} \equiv 0,\quad Q_{\epsilon,\theta}:=Q_\theta - i\epsilon(1-\chi(x_\theta))x_\theta^2.
\]
Since $Q_{\epsilon,\theta}^*$ is elliptic, unique continuation results
for second order elliptic differential equations -- see H{\"o}rmander \cite[Chapter 17]{hormander} show that $\Tilde{u}\equiv 0$, thus a contradiction.

It remains to show that $\rank \psi\Pi_{\epsilon,\theta}(z_0) \psi= \rank \Pi_{\epsilon,\theta}(z_0)\psi$. Otherwise there would exist solutions $v\in \widehat{\DD}_\theta$ to $(P_{\epsilon,\theta} - z_0)^\ell v =0$, $u:=(P_{\epsilon,\theta} - z_0)^{\ell -1} v \neq 0$ with $\psi v = 0$. It follows that $u$ can be identified with an element of $H^2(\Gamma_\theta)$ satisfying
\[    (Q_{\epsilon,\theta} - z_0)u = 0,\quad u|_{B(0,R_0)} \equiv 0.
\]
Again by the unique continuation results for second order elliptic differential equations, we obtain that $u\equiv 0$, thus a contradiction.
\end{proof}

The next lemma shows that the spectrum of $P_{\epsilon,\theta}$ must stay close to the spectrum of $P_\theta$ when $\epsilon$ is sufficiently small:
\begin{lem}
\label{lem:Pepstheta resolvent norm}
Suppose that $0\leq \theta < \theta_0$ and that $\Omega\Subset \{ z :\,-2\theta < \arg z < 3\pi/2 + 2\theta \}$ is disjoint with $\Spec (P_\theta)$, then there exist $\epsilon_0 = \epsilon_0(\Omega)$ and $C=C(\Omega)$ such that, uniformly in $0<\epsilon<\epsilon_0$, $\Spec(P_{\epsilon,\theta})\cap\Omega=\emptyset$ and
\[    \|(P_{\epsilon,\theta} - z)^{-1}\|_{\HH_\theta\to \DD_\theta} \leq C,\quad z\in\Omega.
\]
\end{lem}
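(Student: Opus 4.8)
The plan is to argue by contradiction, after first isolating the only point where the unboundedness of the complex absorbing potential is delicate. The first step is a uniform \emph{coercive estimate}: there are $\epsilon_0=\epsilon_0(\Omega)$ and $C=C(\Omega)$ with
\[
\|u\|_{\DD_\theta}+\epsilon\|(1-\chi(x_\theta))x_\theta^2 u\|_{\HH_\theta}\le C\big(\|u\|_{\HH_\theta}+\|(P_{\epsilon,\theta}-z)u\|_{\HH_\theta}\big),\qquad u\in\widehat{\DD}_\theta,\ z\in\overline\Omega,\ 0<\epsilon<\epsilon_0.
\]
To prove it, fix $\chi_0$ equal to $1$ on a large ball, with $\supp\nabla\chi_0$ a compact subset of $\Gamma_\theta\setminus B(0,R_0)$ contained in the region where $x_\theta=e^{i\theta}x$. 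On a neighbourhood of $\supp\nabla\chi_0$, ellipticity of $P_\theta$ (see \eqref{P_theta elliptic}) together with $\epsilon(1-\chi(x_\theta))x_\theta^2$ being $O(\epsilon_0)$ there gives $\|u\|_{H^2}\le C(\|(P_{\epsilon,\theta}-z)u\|_{\HH_\theta}+\|u\|_{\HH_\theta})$, uniformly in $\epsilon<\epsilon_0$. Since the CAP term is a multiplication operator it commutes with $\chi_0$, so $(1-\chi_0)u$ solves
\[
(H_{\epsilon,\theta}-z)\big((1-\chi_0)u\big)=(1-\chi_0)(P_{\epsilon,\theta}-z)u-[P_\theta,\chi_0]u-R\big((1-\chi_0)u\big),
\]
where $R:=P_\theta-(-e^{-2i\theta}\Delta)$ has coefficients tending to $0$ at infinity by \eqref{P_theta at infinity}; applying the $\epsilon$-uniform bounds \eqref{eqn:Hepstheta resolv bounds} (and, from $-i\epsilon e^{2i\theta}x^2=(H_{\epsilon,\theta}-z)+e^{-2i\theta}\Delta+z$, the consequence $\epsilon\|x_\theta^2 v\|_{L^2}\le C\|(H_{\epsilon,\theta}-z)v\|_{L^2}$), and enlarging the ball in the definition of $\chi_0$ (depending only on $\Omega$, via the $\delta>0$ for which $\overline\Omega\subset\{-2\theta+\delta<\arg z<3\pi/2+2\theta-\delta,\ |z|>\delta\}$) so as to absorb $R(1-\chi_0)$, one obtains $\|(1-\chi_0)u\|_{H^2}+\epsilon\|x_\theta^2(1-\chi_0)u\|_{L^2}\le C(\|(P_{\epsilon,\theta}-z)u\|_{\HH_\theta}+\|u\|_{\HH_\theta})$. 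Feeding this into $P_\theta u=(P_{\epsilon,\theta}-z)u+zu+i\epsilon(1-\chi(x_\theta))x_\theta^2 u$ bounds $\|P_\theta u\|_{\HH_\theta}$, hence $\|u\|_{\DD_\theta}$.

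By the coercive estimate it now suffices to produce $\epsilon_0,C$ such that $\Spec(P_{\epsilon,\theta})\cap\Omega=\emptyset$ and $\|(P_{\epsilon,\theta}-z)^{-1}\|_{\HH_\theta\to\HH_\theta}\le C$ for $0<\epsilon<\epsilon_0$, $z\in\Omega$. If this fails there are $\epsilon_k\to0+$, $z_k\in\overline\Omega$ and $u_k\in\widehat{\DD}_\theta$ with $\|u_k\|_{\HH_\theta}=1$ and $f_k:=(P_{\epsilon_k,\theta}-z_k)u_k\to0$ in $\HH_\theta$ (with $f_k=0$ if $z_k\in\Spec(P_{\epsilon_k,\theta})$, which is a genuine eigenvalue since $\Omega$ avoids $e^{-i\pi/4}[0,\infty)$ by Lemma~\ref{lem:Pepstheta Fredholm}; otherwise $u_k$ is a normalized near-null vector). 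The coercive estimate gives $\|u_k\|_{\DD_\theta}\le C$ and $\epsilon_k\|(1-\chi(x_\theta))x_\theta^2 u_k\|_{\HH_\theta}\le C$. Passing to a subsequence, $z_k\to z_\infty\in\overline\Omega$ and $u_k\rightharpoonup u_\infty$ weakly in $\DD_\theta$; by \eqref{eqn:P compact} and the Rellich lemma, $1_{B(0,M)}u_k\to 1_{B(0,M)}u_\infty$ in $\HH_\theta$ for every $M$. Since $i\epsilon_k(1-\chi(x_\theta))x_\theta^2 u_k$ is bounded in $\HH_\theta$ and has $\HH_\theta$-mass $O(\epsilon_k M^2)$ on each $B(0,M)$, it tends to $0$ weakly; passing to the limit in $P_\theta u_k=f_k+z_k u_k+i\epsilon_k(1-\chi(x_\theta))x_\theta^2 u_k$ (using that $P_\theta:\DD_\theta\to\HH_\theta$ is bounded for the graph norm, hence weakly continuous) yields $(P_\theta-z_\infty)u_\infty=0$, and therefore $u_\infty=0$ because $\overline\Omega$ is disjoint from the discrete set $\Spec(P_\theta)$.

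It remains to contradict $u_\infty=0$. With $\chi_0$ as above, $1_{B(0,M)}u_k\to0$ in $\HH_\theta$ for all $M$, so local elliptic regularity for $P_\theta$ on a fixed neighbourhood of $\supp\nabla\chi_0$, together with $\|(P_{\epsilon_k,\theta}-z_k)u_k\|_{L^2}+\|z_k u_k\|_{L^2}+\epsilon_k\|x_\theta^2(1-\chi)u_k\|_{L^2}\to0$ there, gives $\|u_k\|_{H^2(\supp\nabla\chi_0)}\to0$, hence $\|[P_\theta,\chi_0]u_k\|_{L^2}\to0$. Plugging this, $\|f_k\|\to0$, and the absorption of $R(1-\chi_0)u_k$ into the identity for $(1-\chi_0)u_k$ and using \eqref{eqn:Hepstheta resolv bounds}, we get $\|(1-\chi_0)u_k\|_{L^2}\le\|(1-\chi_0)u_k\|_{H^2}\to0$. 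But $1-\chi_0\equiv1$ off a fixed ball $B(0,M_0)$, so $\|(1-\chi_0)u_k\|_{L^2}^2\ge 1-\|1_{B(0,M_0)}u_k\|_{\HH_\theta}^2\to1$, a contradiction. The main obstacle is precisely this last dichotomy: because the CAP term is unbounded, $(P_{\epsilon,\theta}-z)^{-1}$ does not converge in operator norm as $\epsilon\to0$ and no Neumann-series argument is available, so one must rule out mass of $u_k$ escaping to infinity; it is the uniformity in $\epsilon$ of the Davies resolvent bounds \eqref{eqn:Hepstheta resolv bounds} — reflecting that $\Spec(H_{\epsilon,\theta})$ collapses to $0$ rather than accumulating at a fixed $z\ne0$ — that closes the argument.
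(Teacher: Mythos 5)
Your proof is correct, but it is structured quite differently from the paper's. The paper proceeds constructively: it builds an approximate inverse
\[
E_{\epsilon,\theta}^h(z) = \chi_2^h (P_\theta - z)^{-1}\chi_1^h + (1-\chi_0^h)(H_{\epsilon,\theta}-z)^{-1}(1-\chi_1^h)
\]
using cutoffs $\chi_j^h(g_\theta(t)\omega)=\chi_j(th)$ whose supports grow as $h\to0$, shows the error $K_{\epsilon,\theta}^h(z)$ is small by first choosing $h$ small (to handle the $[P_\theta,\chi_j^h]=O(h)$ commutators and the $R(1-\chi_0^h)$ tail) and then $\epsilon<\epsilon_0(h,\Omega)$ small (since the dangerous piece $-i\epsilon(1-\chi)x_\theta^2\chi_2^h(P_\theta-z)^{-1}\chi_1^h$ is $O(\epsilon h^{-2})$), and inverts the Neumann series. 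You instead prove a uniform coercive a priori estimate and then run a compactness/contradiction argument, using the coercive bound for weak compactness, the discreteness of $\Spec(P_\theta)$ to kill the weak limit, and a ``no mass escapes to infinity'' step — with the same uniform Davies bounds \eqref{eqn:Hepstheta resolv bounds} and decay \eqref{P_theta at infinity} as the key inputs in both arguments. The paper's route is more quantitative (it exhibits the approximate resolvent and in principle tracks how $\epsilon_0$ depends on $\Omega$); yours is softer but isolates conceptually why the conclusion holds (local concentration is forbidden because $\Omega$ avoids $\Spec(P_\theta)$, escape to infinity is forbidden by the $\epsilon$-uniform Davies estimates). One stylistic point worth making explicit: your coercivity step and final contradiction both implicitly use that $\overline\Omega$ (not just $\Omega$) is disjoint from the discrete set $\Spec(P_\theta)$, which is how $\Omega\Subset\{\cdots\}$ together with ``disjoint'' is meant to be read; a one-line remark to that effect would tighten the write-up.
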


\begin{proof}
We follow closely the proof of \cite[Lemma 5]{Zw-vis} except that $P_\theta$ replaces $-\Delta_\theta$ there. Let $\chi_j\in\CIc([0,\infty);[0,1])$ be equal to 1 on $[0,R_0]$ and satisfy $\chi_j=1$ near $\supp\chi_{j-1}$, $j=1,2$. Parametrizing  $\Gamma_\theta$ by $f_\theta:[0,\infty)\times\mathbb{S}^{n-1}\owns (t,\omega)\mapsto g_\theta(t)\omega\in\Gamma_\theta$, we define functions $\chi_j^h\in\CIc(\Gamma_\theta)$ by
\[
    \chi_j^h(g_\theta(t)\omega) := \chi_j(th),\quad 0<h\leq 1.
\]
For $z\in\Omega$ we put
\[
    E_{\epsilon,\theta}^h(z) := \chi_2^h (P_\theta - z)^{-1} \chi_1^h + (1-\chi_0^h)(H_{\epsilon,\theta} -z)^{-1}(1-\chi_1^h),
\]
so that $(P_{\epsilon,\theta}-z)E_{\epsilon,\theta}^h(z) = I + K_{\epsilon,\theta}^h(z)$, where
\begin{align*}
    K_{\epsilon,\theta}^h(z) := &- i\epsilon (1-\chi)x_\theta^2\chi_2^h (P_\theta - z)^{-1} \chi_1^h + [P_\theta,\chi_2^h] (P_\theta - z)^{-1} \chi_1^h \\ 
    & + (P_{\epsilon,\theta}-H_{\epsilon,\theta})(1-\chi_0^h)(H_{\epsilon,\theta} -z)^{-1}(1-\chi_1^h) \\
    & -[P_\theta,\chi_0^h](1-\chi_0^h)(H_{\epsilon,\theta} -z)^{-1}(1-\chi_1^h).
\end{align*}
Using \eqref{P_theta at infinity} and \eqref{eqn:Hepstheta resolv bounds} we see that for $h$ small enough,
\[ \|(P_{\epsilon,\theta}-H_{\epsilon,\theta})(1-\chi_0^h)(H_{\epsilon,\theta} -z)^{-1}(1-\chi_1^h)\|_{L^2(\Gamma_\theta)\to L^2(\Gamma_\theta)} < 1/4. \]
Since $[Q_\theta,\chi_j^h]=O(h):H^1(\Gamma_\theta)\to L^2(\Gamma_\theta)$ and $x_\theta^2 \chi_2^h=O(h^{-2}):L^2(\Gamma_\theta)\to L^2(\Gamma_\theta)$, we can first choose $h$ sufficiently small then there exists $\epsilon_0=\epsilon_0 (h,\Omega)$ such that for all $\epsilon<\epsilon_0(h,\Omega)$ and $z\in\Omega$,  $\|K_{\epsilon,\theta}^h(z)\|_{\HH_\theta\to\HH_\theta}<1/2$, thus $I+K_{\epsilon,\theta}^h(z)$ has a uniformly bounded inverse and $(P_{\epsilon,\theta}-z)^{-1} = E_{\epsilon,\theta}^h (z) (I+K_{\epsilon,\theta}^h (z))^{-1}$ exists. It follows from \eqref{eqn:Hepstheta resolv bounds} that there exists $C=C(\Omega)$ independent of $\epsilon$ such that for $z\in\Omega$, $\|E_{\epsilon,\theta}^h (z)\|_{\HH_\theta\to\DD_\theta} \leq C$, which completes the proof.
\end{proof}

\section{The obstacle problem and the Dirichlet-to-Neumann operator}
\label{section:N operator}
In the black box case we cannot use the strategy of \cite{Zw-vis} which covers the case $P=-\Delta+V$, $V\in L_{\comp}^\infty$. Instead we introduce an artificial obstacle to separate the abstract black box from the differential operator outside. By an {\em obstacle} we mean an open set $\OO$ with smooth boundary as in \S\ref{reference operator}. Suppose that $\OO$ contains $\overline{B(0,R_0)}$ and that $\chi$ in \eqref{eqn:Peps} be equal to 1 near $\overline{\OO}$. Let $\nu(x)$ be the Euclidean normal vector of $\partial\OO$ at $x$ pointing into $\OO$, we put
\begin{equation}
\label{eqn:nu_g}
    \nu_g(x) := (g^{jk}(x))_{n\times n} \cdot \nu(x),\quad x\in\partial\OO.
\end{equation}

First we introduce the interior Dirichlet-to-Neumann operator of $P$:
\begin{equation}
\label{eqn:N in}
    \NN_P^\In (z) \varphi := \frac{\partial u}{\partial\nu_g},\quad\textrm{where $u$ solves the problem }
    \begin{gathered}
        (P-z)u = 0\textrm{ in } \OO \\
        u = \varphi\textrm{ on }\partial\OO
    \end{gathered}.
\end{equation}
$\NN_P^\In(z)$ is well-defined once we establish the existence and uniqueness of the solution $u$ to the boundary-value problem in \eqref{eqn:N in}. This can be done if $z$ is not an eigenvalue of the operator $P^\OO$ introduced in \S \ref{reference operator}. Indeed, we set $E^{\textrm{in}}:H^{3/2}(\partial\OO)\to H^2(\OO)$ as a linear bounded extension operator such that $E^{\textrm{in}}\varphi |_{\partial\OO}=\varphi$ and $\supp E^{\textrm{in}}\varphi\subset\overline{\OO}\setminus B(0,R_0)$ for any $\varphi$. Then for $z\notin\Spec(P^\OO)$, $u= E^{\textrm{in}}\varphi - (P^\sharp - z)^{-1}(Q-z)E^{\textrm{in}}\varphi$ is the unique solution to \eqref{eqn:N in}, we obtain that
\begin{equation}
\label{eqn:N in with extension}
    \NN_P^{\textrm{in}}(z) \varphi = \partial_{\nu_g} (E^{\textrm{in}}\varphi - (P^\sharp - z)^{-1}(Q-z)E^{\textrm{in}}\varphi),
\end{equation}
Hence $z\mapsto \NN_P^\In (z):H^{3/2}(\partial\OO) \to H^{1/2}(\partial\OO)$ is a meromorphic family of operators on $\CC$ with poles contained in $\Spec(P^\OO)$.

Similarly, we can define the exterior Dirichlet-to-Neumann operator of $P_{\epsilon,\theta}$ for every $0\leq \theta < \theta_0$ and $\epsilon\geq 0$:
\begin{equation}
\label{eqn:Nepstheta out}
    \NN_{\epsilon,\theta}^{\textrm{out}}(z) \varphi := \frac{\partial u}{\partial\nu_g},\quad\textrm{where $u$ solves the problem }
    \begin{gathered}
        (Q_{\epsilon,\theta}-z)u = 0\textrm{ in } \Gamma_\theta\setminus\OO \\
        u = \varphi\textrm{ on }\partial\OO
    \end{gathered}.
\end{equation}
To show the well-definedness of $\NN_{\epsilon,\theta}^\out(z)$, we introduce the restriction of $Q_{\epsilon,\theta}$ to $\Gamma_\theta\setminus\OO$ with Dirichlet boundary condition as follows:
\begin{equation}
\label{eqn:Qepstheta Obstacle}
\begin{gathered}
    Q_\theta^\OO: H^2(\Gamma_\theta\setminus\OO)\cap H_0^1(\Gamma_\theta\setminus\OO) \to L^2(\Gamma_\theta\setminus\OO),\quad Q_\theta^\OO u := Q_\theta u, \\
    Q_{\epsilon,\theta}^\OO:=Q_\theta^\OO - i\epsilon(1-\chi)x_\theta^2\quad\textrm{with domain }\DD(Q_\theta^\OO)\cap |x_\theta|^{-2}L^2(\Gamma_\theta\setminus\OO).
\end{gathered}
\end{equation}
Since $Q_\theta^\OO$ and $Q_{\epsilon,\theta}^\OO$ can also be viewed as black box perturbations of $-\Delta_\theta$ and $H_{\epsilon,\theta}$ respectively, we conclude from Lemma \ref{lem:Ptheta Fredholm} and Lemma \ref{lem:Pepstheta Fredholm} that
$Q_{\epsilon,\theta}^\OO - z,\ \epsilon\geq 0$ is a Fredholm operator of index $0$ for $-2\theta<\arg z<3\pi/2 +2\theta$. We claim that $\NN_{\epsilon,\theta}^\out(z)$ is well defined if $z\notin\Spec(Q_{\epsilon,\theta}^\OO)$. For that let $E^\out: H^{3/2}(\partial\OO)\to H^2(\Gamma_\theta\setminus \OO)$ be a linear bounded extension operator with $E^\out \varphi |_{\partial\OO} = \varphi$ and $\supp E^\out \varphi\Subset \Gamma_\theta\setminus\OO$, then 
\begin{equation}
\label{eqn:N out with extension}
    \NN_{\epsilon,\theta}^\out (z) \varphi = \partial_{\nu_g} (E^\out \varphi - (Q_{\epsilon,\theta}^\OO - z)^{-1}(Q_{\epsilon,\theta}-z)E^\out \varphi).    
\end{equation}
It follows that $z\mapsto \NN_{\epsilon,\theta}^\out (z):H^{3/2}(\partial\OO) \to H^{1/2}(\partial\OO)$ is a meromorphic family of operators in the region $-2\theta<\arg z<3\pi/2 +2\theta$, with poles contained in $\Spec(Q_{\epsilon,\theta}^\OO)$.

Now we put
\begin{equation}
\label{eqn:Nepstheta}    
    \NN_{\epsilon,\theta}(z) := \NN_{\epsilon,\theta}^{\textrm{out}}(z) - \NN_P^{\textrm{in}}(z).
\end{equation}

\begin{lem}
\label{lem:Nepstheta Fredholm}
Suppose that $0\leq \theta < \theta_0$, $\epsilon\geq 0$ and that $-2\theta<\arg z<3\pi/2 +2\theta$ with $z\notin \Spec(P^\OO)\cup\Spec(Q_{\epsilon,\theta}^\OO)$, then $\NN_{\epsilon,\theta}(z):H^{3/2}(\partial\OO) \to H^{1/2}(\partial\OO)$ is a Fredholm operator of index $0$.
\end{lem}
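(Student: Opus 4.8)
The plan is to show that $\NN_{\epsilon,\theta}(z)$ is, up to a smoothing error, an elliptic classical pseudodifferential operator of order $1$ on the compact hypersurface $\partial\OO$, and then to read the Fredholm property and the vanishing of the index off its principal symbol. The first step is to note that near $\partial\OO$ all operators in play are ordinary second order elliptic differential operators: since $\chi\equiv1$ near $\overline{\OO}$ and $\supp\chi\subset B(0,R_1)$ we have $\overline{\OO}\subset B(0,R_1)$, so on a neighbourhood of $\partial\OO$ the contour $\Gamma_\theta$ coincides with $\RR^n$, the absorbing term $-i\epsilon(1-\chi(x_\theta))x_\theta^2$ vanishes there, and by \eqref{eqn:Q defn} both the interior problem defining $\NN_P^\In(z)$ and the exterior problem defining $\NN_{\epsilon,\theta}^\out(z)$ are governed near $\partial\OO$ by one and the same operator $Q-z=-\sum_{j,k}\partial_j(g^{jk}\partial_k)+c-z$ with real coefficients. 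By \eqref{eqn:Qelliptic}, for $x\in\RR^n\setminus\overline{B(0,R_0)}$ the principal symbol $p(x,\zeta)=\sum g^{jk}(x)\zeta_j\zeta_k$ of $Q$ is a nondegenerate real quadratic form which, since $p(x,\zeta)\to|\zeta|^2$ as $|x|\to\infty$, is positive definite; hence in particular near $\partial\OO$.

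Granting this, I would invoke the standard factorisation of an elliptic operator near a hypersurface --- equivalently, the construction of the Calder\'on projector, see H\"ormander \cite{hormander} and Dyatlov--Zworski \cite{res} --- to conclude that, away from the discrete exceptional sets $\Spec(P^\OO)$ and $\Spec(Q^\OO_{\epsilon,\theta})$, the operators $\NN_P^\In(z)$ and $\NN_{\epsilon,\theta}^\out(z)$ are classical $\Psi$DOs of order $1$ on $\partial\OO$, with principal symbols depending neither on $z$, nor on $\epsilon$, nor on $\theta$; the global structure of $P$, and of $Q_{\epsilon,\theta}$ far from $\partial\OO$, enters only through a smoothing correction, as long as the corresponding boundary value problems are solvable.

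The crux is the principal symbol of the difference. Fix $x\in\partial\OO$ and write a covector as $\zeta=\xi'+\sigma\,dr$, where $r$ is a boundary defining function for $\OO$ and $\xi'\in T^*_x\partial\OO$, so that $p(x,\zeta)=A\sigma^2+B(x,\xi')\sigma+C(x,\xi')$ with $A=p(x,dr)>0$, $B,C$ real, and $4AC-B^2>0$ for $\xi'\neq0$ by the Cauchy--Schwarz inequality for the positive form $p$; let $\sigma_\pm$ be the two (complex conjugate) roots, $\pm\Im\sigma_\pm>0$. A microlocal factorisation of $Q-z$ near $\partial\OO$ expresses $\NN_P^\In(z)$ and $\NN_{\epsilon,\theta}^\out(z)$ through the two \emph{different} roots $\sigma_\mp$, and with the normalisation $\partial_{\nu_g}$, $\nu_g=(g^{jk})\nu$, the two contributions \emph{add} rather than cancel: up to a common nonvanishing smooth factor one finds
\[
  \big|\,\sigma_1\big(\NN_{\epsilon,\theta}(z)\big)(x,\xi')\,\big|=\sqrt{4AC-B^2}\;\geq\;c\,|\xi'| ,\qquad c>0 ,
\]
with $\sigma_1(\NN_{\epsilon,\theta}(z))$ a real, $1$-homogeneous, fixed-sign symbol. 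The signs here I would check against the flat half-space model with constant coefficients, which is the one place where it is easy to slip.

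It follows that $\NN_{\epsilon,\theta}(z)$ is an elliptic classical $\Psi$DO of order $1$ on the closed manifold $\partial\OO$, hence a Fredholm operator $H^{3/2}(\partial\OO)\to H^{1/2}(\partial\OO)$. For the index: since $\sigma_1(\NN_{\epsilon,\theta}(z))$ is real and of a fixed sign, $t\mapsto(1-t)\NN_{\epsilon,\theta}(z)\pm t\,(1-\Delta_{\partial\OO})^{1/2}$, $t\in[0,1]$, is a norm continuous path of elliptic first order $\Psi$DOs (a fixed-sign combination of fixed-sign symbols is nonvanishing) joining $\NN_{\epsilon,\theta}(z)$ to the invertible operator $\pm(1-\Delta_{\partial\OO})^{1/2}$, so by homotopy invariance of the Fredholm index $\operatorname{ind}\NN_{\epsilon,\theta}(z)=0$. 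Alternatively, one may use that $z\mapsto\NN_{\epsilon,\theta}(z)$ is a meromorphic family of Fredholm operators on the connected open set $\{-2\theta<\arg z<3\pi/2+2\theta\}\setminus\big(\Spec(P^\OO)\cup\Spec(Q^\OO_{\epsilon,\theta})\big)$, so that its index is constant there, and evaluate it at $z=iL$ with $L\gg1$, where the estimates of \S\ref{section:Peps} show that $\NN_{\epsilon,\theta}(iL)$ is invertible. The main obstacle in this plan is the sign in the principal symbol computation --- the non-cancellation of the interior and exterior Dirichlet-to-Neumann symbols; once that is secured, the rest is routine elliptic theory.
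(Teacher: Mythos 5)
Your route is genuinely different from the paper's. The paper avoids pseudodifferential calculus entirely: it introduces a reference interior DtN map $\NN_Q^{\textrm{in}}(z_0)$ for the \emph{differential} operator $Q$ (not the black box $P$), proves by hand that $\NN_{\epsilon,\theta}^{\textrm{out}}(z_0)-\NN_Q^{\textrm{in}}(z_0)$ is invertible --- injectivity by gluing the interior and exterior solutions across $\partial\OO$ into a single $H^2$ solution on $\Gamma_\theta$ and invoking unique continuation, surjectivity by an explicit construction using a bounded right inverse $L_g$ of $\partial_{\nu_g}$ --- and then shows $\NN_{\epsilon,\theta}^{\textrm{out}}(z)-\NN_{\epsilon,\theta}^{\textrm{out}}(z_0)$ and $\NN_P^{\textrm{in}}(z)-\NN_Q^{\textrm{in}}(z_0)$ are compact because, by interior/exterior elliptic regularity, they gain one Sobolev order. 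Your route --- factorizing $Q-z$ microlocally near $\partial\OO$, reading the interior and exterior DtN symbols off the two complex-conjugate roots $\sigma_\mp$, checking in the flat model that the two contributions reinforce so that $\NN_{\epsilon,\theta}(z)$ is elliptic of order one with a real, fixed-sign principal symbol, and closing the index by homotopy to $(1-\Delta_{\partial\OO})^{1/2}$ --- is precisely the $\Psi$DO alternative the paper gestures at in the remark following the proof (citing Lee--Uhlmann), carried through to completion. Both arguments rest on the same structural observations: near $\partial\OO$ the black box acts as $Q$, $\Gamma_\theta$ coincides with $\RR^n$, and the absorbing term vanishes, so the interior and exterior DtN maps have identical principal parts governed by the real elliptic form $\sum g^{jk}\xi_j\xi_k$. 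What the paper's route buys is self-containedness; what yours buys is a cleaner conceptual picture and no auxiliary comparison point $z_0$.

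Two caveats worth flagging. First, to justify that $\NN_P^{\textrm{in}}(z)$ is a $\Psi$DO up to a smoothing remainder in the black-box case you should note explicitly that near $\partial\OO$ the reference resolvent $(P^\sharp-z)^{-1}$ inherits interior elliptic regularity from $Q$, since $P$ itself is abstract and carries no a priori Sobolev mapping properties; the argument is the local elliptic bootstrap the paper carries out by hand. Second, your alternative index closure (constancy of the index for the meromorphic family $z\mapsto \NN_{\epsilon,\theta}(z)$ together with invertibility at $z=iL$) is not independent of the lemma: passing from invertibility of $P_{\epsilon,\theta}-iL$ (Lemma~\ref{lem:Pepstheta Fredholm}) to invertibility of $\NN_{\epsilon,\theta}(iL)$ uses exactly the gluing and surjectivity construction that the paper employs, and which is also step~1 of Lemma~\ref{lem:DtoN}. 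Your homotopy argument is the self-contained closure; I would lead with it.
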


\begin{proof}
Let $Q_\In^\OO$ and $\NN_Q^\In (z)$ be the the reference operator and the interior Dirichlet-to-Neumann operator associated with $Q$,  defined as in \eqref{eqn:P sharp} and \eqref{eqn:N in} respectively except that $Q$ replaces $P$ there. Choosing $z_0\notin\Spec(Q_\In^\OO)\cup\Spec(Q_{\epsilon,\theta}^\OO)\cup\Spec(Q_{\epsilon,\theta})$, we claim that 
\begin{equation}
\label{eqn:N Qepstheta invertible}
    \NN_{\epsilon,\theta}^{\textrm{out}}(z_0) - \NN_Q^{\textrm{in}}(z_0) : H^{3/2}(\partial\OO) \to H^{1/2}(\partial\OO)\quad\textrm{is invertible.}
\end{equation}
   
To show injectivity, we argue by contradiction. Suppose that $0\neq\varphi\in H^{3/2}(\partial\OO)$ satisfies $\NN_{\epsilon,\theta}^{\textrm{out}}(z_0)\varphi = \NN_Q^{\textrm{in}}(z_0)\varphi$, it follows from \eqref{eqn:N in} and \eqref{eqn:Nepstheta out} that there exist $u_1\in H^2(\OO)$, $u_2\in H^2(\Gamma_\theta\setminus\OO)$ ($|x_\theta|^2 u_2\in L^2(\Gamma_\theta\setminus\OO)$ when $\epsilon>0$) such that
\begin{equation}
\label{eqn:u1u2 for Qepstheta}
\textrm{$u_1$ solves }
        \begin{gathered}
        (Q-z_0)u_1 = 0\textrm{ in } \OO \\
        u_1 = \varphi\textrm{ on }\partial\OO
        \end{gathered},\textrm{ and $u_2$ solves }
        \begin{gathered}
        (Q_{\epsilon,\theta}-z_0)u_2 = 0\textrm{ in } \Gamma_\theta\setminus\OO \\
        u_2 = \varphi\textrm{ on }\partial\OO
        \end{gathered},
\end{equation}
and that $\partial_{\nu_g}u_1 = \partial_{\nu_g}u_2$.
Let $u=1_{\OO}\, u_1 + 1_{\Gamma_\theta \setminus\OO}\,u_2$, we aim to show that $u\in H^2(\Gamma_\theta)$. For that it suffices to show the regularity of $u$ near $\partial\OO$. For any $x_0\in\partial\OO$, we choose $B_{x_0} := B(x_0,r)\subset B(0,R_1)$ such that $Q_{\epsilon,\theta}=Q$ in $B_{x_0}$ and put $v\in\CIc(B_{x_0})$. Then we integrate by parts to obtain:
\[
\begin{split}
    &{\qquad}\int_{B_{x_0}} \left(\sum_{j,k=1}^n g^{jk} \partial_{x_k}u\, \partial_{x_j} v + cuv\right)dx \\
    &= \int_{B_{x_0}\cap\OO}\left(\sum_{j,k=1}^n g^{jk} \partial_{x_k}u_1 \partial_{x_j} v + cu_1 v\right)dx + \int_{B_{x_0}\setminus\OO}\left(\sum_{j,k=1}^n g^{jk} \partial_{x_k}u_2 \partial_{x_j} v + cu_2 v\right)dx \\
    & = \int_{B_{x_0}\cap\OO} v\,Q u_1 dx - \int_{\partial\OO\cap B_{x_0}}v\,\partial_{\nu_g}u_1 dS(x) +\int_{B_{x_0}\setminus\OO} v\,Q u_2 dx + \int_{\partial\OO\cap B_{x_0}} v\,\partial_{\nu_g}u_1 dS(x) \\
    & = \int_{B_{x_0}\cap\OO} z_0 u_1 v \,dx + \int_{B_{x_0}\setminus\OO} z_0 u_2 v \,dx = \int_{B_{x_0}} z_0 uv \,dx.
\end{split}
\]
Hence $u$ is a weak solution of $(Q-z_0)u=0$ in $B_{x_0}$, the regularity results
for second order elliptic differential equations show that $u$ is $H^2$ near $x_0$, thus $u\in H^2(\Gamma_\theta)$. It then follows from \eqref{eqn:u1u2 for Qepstheta} that $u$ solves the equation $(Q_{\epsilon,\theta}-z_0)u=0$, thus $z_0\in\Spec(Q_{\epsilon,\theta})$, which gives a contradiction.

To show surjectivity, we first choose a linear bounded operator $L_g:H^{1/2}(\partial\OO)\to H^2(\OO)$ satisfying the following:
\begin{equation}
\label{eqn:operator Lg}
\begin{gathered}
    L_g \Tilde{\varphi} := v,\quad\textrm{where }v\in H^2(\OO)\cap H_0^1(\OO)\  \textrm{satisfies} \\
    \supp v\subset \overline{\OO}\setminus B(0,R_0)\ \textrm{and }\partial_{\nu_g}v = \Tilde{\varphi},\quad\Tilde{\varphi}\in H^{1/2}(\partial\OO).
\end{gathered}
\end{equation}
For any $\Tilde{\varphi}\in H^{1/2}(\partial\OO)$, let $v:=L_g \Tilde{\varphi}$, $f:=(Q_\In^\OO -z_0)v\in L^2(\OO)$ and we put 
\[
u:= (Q_{\epsilon,\theta} - z_0)^{-1} \imath f\quad\textrm{and}\quad  \varphi:=u|_{\partial\OO}\in H^{3/2}(\OO),
\]
where $\imath:L^2(\OO)\hookrightarrow L^2(\Gamma_\theta)$ denotes the extension by zero. Then $u_1:=1_\OO u - v$ solves the boundary value problem $(Q-z_0)u_1 = 0$ in $\OO$, $u_1=\varphi$ on $\partial\OO$; $u_2:= 1_{\Gamma_\theta\setminus\OO}\,u$ solves $(Q_{\epsilon,\theta}-z_0)u_2 = 0$ in $\Gamma_\theta\setminus\OO$, $u_2=\varphi$ on $\partial\OO$. Hence we have
\[
\NN_{\epsilon,\theta}^{\textrm{out}}(z_0)\varphi - \NN_Q^{\textrm{in}}(z_0)\varphi = \partial_{\nu_g}1_{\Gamma_\theta\setminus\OO}\,u - \partial_{\nu_g}(1_\OO u - v) = \partial_{\nu_g}v = \Tilde{\varphi}.
\]

In view of \eqref{eqn:N Qepstheta invertible}, it now suffices to show that
$\NN_{\epsilon,\theta}^{\textrm{out}}(z)-\NN_{\epsilon,\theta}^{\textrm{out}}(z_0)$ and $\NN_P^{\textrm{in}}(z) - \NN_Q^{\textrm{in}}(z_0)$ are compact: $H^{3/2}(\partial\OO)\to H^{1/2}(\partial\OO)$. Using \eqref{eqn:N out with extension} we have for any $\varphi\in H^{3/2}(\OO)$,
\[
\begin{split}
    &{\ \quad} \NN_{\epsilon,\theta}^\out (z)\varphi - \NN_{\epsilon,\theta}^\out (z_0)\varphi \\
    & = \partial_{\nu_g}((Q_{\epsilon,\theta}^\OO - z_0)^{-1} (Q_{\epsilon,\theta}-z_0) - (Q_{\epsilon,\theta}^\OO - z)^{-1} (Q_{\epsilon,\theta}-z)) E^\out \varphi \\
    & = (z-z_0) \partial_{\nu_g} (Q_{\epsilon,\theta}^\OO - z_0)^{-1} (I - (Q_{\epsilon,\theta}^\OO - z)^{-1}(Q_{\epsilon,\theta}-z)) E^\out \varphi \in H^{5/2}(\partial\OO),
\end{split}
\]
thus $\NN_{\epsilon,\theta}^{\textrm{out}}(z)-\NN_{\epsilon,\theta}^{\textrm{out}}(z_0): H^{3/2}(\partial\OO)\to H^{5/2}(\partial\OO) \subset H^{1/2}(\partial\OO)$ is compact since the last inclusion map is compact. It remains to show that $\NN_P^{\textrm{in}}(z) - \NN_Q^{\textrm{in}}(z_0)$ is compact: $H^{3/2}(\partial\OO)\to H^{1/2}(\partial\OO)$. Let $\psi\in\CIc(\OO)$ be equal to $1$ near $\overline{B(0,R_0)}$, $\varphi\in H^{1/2}(\OO)$, there exist $u$ and $v$ satisfying:
\[
    \begin{gathered}
        (P-z)u = 0\textrm{ in } \OO \\
        u = \varphi\textrm{ on }\partial\OO
    \end{gathered}\quad\textrm{and}\quad
    \begin{gathered}
        (Q-z_0)v = 0\textrm{ in } \OO \\
        v = \varphi\textrm{ on }\partial\OO
    \end{gathered},
\]
recalling \eqref{eqn:D sharp} that $(1-\psi)u\in H^2(\OO)$, thus we have 
\[
    (\NN_P^{\textrm{in}}(z) - \NN_Q^{\textrm{in}}(z_0))\varphi = \partial_{\nu_g}((1-\psi)u - (1-\psi)v).
\]
Using \eqref{eqn:Q defn} we can show that $(1-\psi)u-(1-\psi)v\in H^2(\OO)\cap H_0^1(\OO)$ satisfies:
\[
\begin{split}
    Q((1-\psi)u - (1-\psi)v) & = (1-\psi)Pu - [P,\psi]u - (1-\psi)Qv + [Q,\psi]v \\
    & = z(1-\psi)u - z_0(1-\psi)v - [P,\psi]u + [Q,\psi]v \in H^1(\OO),
\end{split}
\]
then we conclude from the regularity results for second order elliptic differential equations that $(1-\psi)u-(1-\psi)v\in H^3(\OO)$, thus $(\NN_P^{\textrm{in}}(z) - \NN_Q^{\textrm{in}}(z_0))\varphi\in H^{3/2}(\partial\OO)$. Therefore, $\NN_P^{\textrm{in}}(z) - \NN_Q^{\textrm{in}}(z_0): H^{3/2}(\partial\OO)\to H^{3/2}(\partial\OO)\subset H^{1/2}(\partial\OO)$ is compact, which completes the proof.
\end{proof}

\noindent
{\bf Remark:} The compactness of $\NN_{\epsilon,\theta}^{\textrm{out}}(z)-\NN_{\epsilon,\theta}^{\textrm{out}}(z_0)$ and $\NN_P^{\textrm{in}}(z) - \NN_Q^{\textrm{in}}(z_0)$ can also be proved using the facts that the principal symbols of
$\NN_{\epsilon,\theta}^{\textrm{out}}(z)$ and $\NN_{\epsilon,\theta}^{\textrm{out}}(z_0)$ are identical, same for $\NN_P^{\textrm{in}}(z)$ and $\NN_Q^{\textrm{in}}(z_0)$ -- see for instance Lee--Uhlmann \cite{LeUh1989} for a detailed account.

In order to work on a single Hilbert space, we introduce 
\begin{equation}
\label{eqn:hat NNepstheta}
    \widehat{\NN}_{\epsilon,\theta}(z) := \langle D_{\partial\OO} \rangle^{-1} \NN_{\epsilon,\theta}(z) : H^{3/2}(\partial\OO)\to H^{3/2}(\partial\OO),
\end{equation}
where $\langle D_{\partial\OO} \rangle = (1-\Delta_{\partial\OO})^{1/2}$ is the standard isomorphism between Sobolev spaces $H^s(\partial\OO)$ and $H^{s-1}(\partial\OO)$. Now we are ready to state the main results of this section:
\begin{lem}
\label{lem:DtoN}
Suppose that $0\leq \theta < \theta_0$, $\epsilon\geq 0$ and that $\Omega\Subset \{ z : -2\theta < \arg z < 3\pi/2 + 2\theta \}$ is  disjoint from $\Spec(P^\sharp)\cup\Spec(Q_{\epsilon,\theta}^\OO)$, 
\[
    z \mapsto \widehat{\NN}_{\epsilon,\theta}(z)^{-1},\quad z\in\Omega,
\]
is a meromorphic family of operators on $H^{3/2}(\partial\OO)$ with poles of finite rank. Moreover,
\begin{equation}
\label{eqn:charaterize eigenvalues}
    n_{\epsilon,\theta}(z) := \frac{1}{2\pi i} \tr\oint_z \widehat{\NN}_{\epsilon,\theta}(w)^{-1}\partial_w \widehat{\NN}_{\epsilon,\theta}(w)\,dw = m_{\epsilon,\theta}(z),
\end{equation}    
where the integral is over a positively oriented circle enclosing $z$ and containing no
poles other than possibly $z$ and $m_{\epsilon,\theta}(z)$ is given by \eqref{eqn:Pepstheta mult} (and by \eqref{eqn:multiplicity z0} when $\epsilon=0$). 
\end{lem}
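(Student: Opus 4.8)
The plan is to reduce the identification $n_{\epsilon,\theta}(z) = m_{\epsilon,\theta}(z)$ to a Grushin-type problem relating the resolvent of $P_{\epsilon,\theta}$ to the Dirichlet-to-Neumann operator $\NN_{\epsilon,\theta}(z)$, following the standard strategy for such problems (as in Dyatlov--Zworski \cite{res} for the case $P = -\Delta$ outside a ball). First I would establish meromorphy and the finite-rank pole structure: Lemma \ref{lem:Nepstheta Fredholm} already gives that $\NN_{\epsilon,\theta}(z)$ is Fredholm of index $0$ on the relevant region, so $\widehat{\NN}_{\epsilon,\theta}(z) = \langle D_{\partial\OO}\rangle^{-1}\NN_{\epsilon,\theta}(z)$ is a holomorphic family of Fredholm operators of index $0$ on $H^{3/2}(\partial\OO)$ (after subtracting the analytic family, the difference is a holomorphic family of compact operators by the proof of Lemma \ref{lem:Nepstheta Fredholm}). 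By analytic Fredholm theory, $z \mapsto \widehat{\NN}_{\epsilon,\theta}(z)^{-1}$ is meromorphic with finite-rank principal parts, provided it is invertible at one point of $\Omega$ --- and this follows from the argument in the proof of Lemma \ref{lem:Nepstheta Fredholm}: if $\widehat{\NN}_{\epsilon,\theta}(z)\varphi = 0$ then gluing the interior solution $u_1$ of $(P-z)u_1 = 0$ and the exterior solution $u_2$ of $(Q_{\epsilon,\theta}-z)u_2 = 0$ across $\partial\OO$ (their Cauchy data match) produces, by the same weak-solution/elliptic-regularity computation, an element of $\widehat{\DD}_\theta$ in $\ker(P_{\epsilon,\theta} - z)$; conversely a generalized eigenfunction restricts to matched Cauchy data. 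So poles of $\widehat{\NN}_{\epsilon,\theta}(z)^{-1}$ in $\Omega$ occur exactly at $\Spec(P_{\epsilon,\theta})\cap\Omega$.

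For the multiplicity count \eqref{eqn:charaterize eigenvalues}, the plan is to build a holomorphic Grushin problem. Near a putative eigenvalue $z_0$, fix the obstacle so that $z_0 \notin \Spec(P^\sharp)\cup\Spec(Q_{\epsilon,\theta}^\OO)$ (possible since those spectra are discrete). Using the interior and exterior extension operators $E^{\In}$, $E^{\out}$ and the reference-operator resolvents $(P^\sharp - z)^{-1}$, $(Q_{\epsilon,\theta}^\OO - z)^{-1}$, define, for $(v,\varphi_-) \in \HH_\theta \oplus H^{3/2}(\partial\OO)$ and $\varphi_+ \in H^{3/2}(\partial\OO)$, a block operator
\[
    \mathscr{P}(z) = \begin{pmatrix} P_{\epsilon,\theta} - z & R_-(z) \\ R_+(z) & 0 \end{pmatrix} : \widehat{\DD}_\theta \oplus H^{3/2}(\partial\OO) \to \HH_\theta \oplus H^{3/2}(\partial\OO),
\]
where $R_+(z)u = u|_{\partial\OO}$ (trace on the artificial obstacle, after cutting off to the region where $P_{\epsilon,\theta}$ equals $Q_{\epsilon,\theta}$) and $R_-(z)$ encodes a single- or double-layer-type potential supported near $\partial\OO$ that fills in the prescribed boundary data. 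One checks by the gluing computation above that $\mathscr{P}(z)$ is invertible precisely when $\widehat{\NN}_{\epsilon,\theta}(z)$ is invertible, with Schur complement $\mathscr{P}_{-+}(z)$ (the effective Hamiltonian) equal, up to multiplication by holomorphic invertible factors (in particular $\langle D_{\partial\OO}\rangle^{-1}$ and terms coming from the layer potentials), to $\NN_{\epsilon,\theta}(z)$ --- equivalently to $\widehat{\NN}_{\epsilon,\theta}(z)$. The key algebraic fact is the Schur-complement identity $\mathscr{P}(z)^{-1} = \begin{pmatrix} * & * \\ * & -\mathscr{P}_{-+}(z)^{-1}\end{pmatrix}$ with the top-left block $= (P_{\epsilon,\theta}-z)^{-1} + (\text{something})\,\mathscr{P}_{-+}(z)^{-1}(\text{something})$, and invertibility of the factors relating $\mathscr{P}_{-+}(z)$ to $\widehat{\NN}_{\epsilon,\theta}(z)$.

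Granting this setup, \eqref{eqn:charaterize eigenvalues} follows from Gohberg--Sigal theory: the logarithmic-derivative trace integral $\frac{1}{2\pi i}\tr\oint_z \widehat{\NN}_{\epsilon,\theta}^{-1}\partial_w\widehat{\NN}_{\epsilon,\theta}\,dw$ counts, with multiplicity, the "characteristic values" of the holomorphic Fredholm family $\widehat{\NN}_{\epsilon,\theta}$; since $\widehat{\NN}_{\epsilon,\theta}(w)$ differs from the effective Hamiltonian $\mathscr{P}_{-+}(w)$ by holomorphic invertible operator-valued factors, this equals the corresponding count for $\mathscr{P}_{-+}$, which by the Schur-complement formula for $\mathscr{P}(z)^{-1}$ equals $\rank$ of the spectral projection $-\frac{1}{2\pi i}\oint_z (P_{\epsilon,\theta}-z)^{-1}\,dz$, i.e. $m_{\epsilon,\theta}(z)$ as defined in \eqref{eqn:Pepstheta mult}. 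When $\epsilon = 0$ the same chain of identities applies with $P_\theta$ in place of $P_{\epsilon,\theta}$, giving the multiplicity $m(z_0)$ from \eqref{eqn:multiplicity z0}. I expect the main obstacle to be the bookkeeping in the Grushin reduction: producing layer-potential operators $R_\pm(z)$ that are both holomorphic in $z$ on all of $\Omega$ and make the Schur complement exactly (a holomorphic invertible multiple of) $\NN_{\epsilon,\theta}(z)$, while keeping everything supported in the region where $P_{\epsilon,\theta} = Q_{\epsilon,\theta}$ so that the elliptic gluing/regularity argument from Lemma \ref{lem:Nepstheta Fredholm} applies verbatim; the equality of multiplicities then reduces to the invertibility of these auxiliary factors, which must be checked with the same unique-continuation input used in Lemma \ref{lem:Pepstheta mult}.
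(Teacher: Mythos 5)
Your first part (meromorphy and the equivalence $z\in\Spec(P_{\epsilon,\theta}) \Leftrightarrow \ker\widehat{\NN}_{\epsilon,\theta}(z)\neq\{0\}$, via gluing and elliptic regularity) matches the paper's step 1 closely. For the multiplicity count \eqref{eqn:charaterize eigenvalues}, however, you take a genuinely different route and it contains a gap. The paper does not build a Grushin problem. Instead it (i) dispatches $m_{\epsilon,\theta}(z)=0$ directly from the invertibility equivalence; (ii) reduces the general case to the simple eigenvalue case by perturbing $P$ with a small real potential $V$ supported in $\OO\setminus B(0,R_0)$, invoking Klopp--Zworski generic simplicity to split the eigenvalue into $M$ simple ones, and Gohberg--Sigal--Rouch\'e to see that the log-derivative count $n_{\epsilon,\theta}(z)$ is stable under this perturbation; and then (iii) verifies the identity when $m_{\epsilon,\theta}(z)=1$ by writing an explicit formula $\widehat{\NN}_{\epsilon,\theta}(w)^{-1}\Tilde\varphi = \big((P_{\epsilon,\theta}-w)^{-1}\,\imath\,(P^\sharp-w)L_g\langle D_{\partial\OO}\rangle\Tilde\varphi\big)\big|_{\partial\OO}$, from which the simple pole of the left side at $z$ is read off from the simple pole of $(P_{\epsilon,\theta}-w)^{-1}$.

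Your Grushin plan, as written, has a conceptual error that would break the Gohberg--Sigal step. In a well-posed Grushin problem the block operator $\mathscr{P}(z)$ must be invertible for \emph{all} $z$ in the region of interest; it is then $(P_{\epsilon,\theta}-z)$ that is invertible exactly when the effective Hamiltonian $E_{-+}(z)$ (the bottom-right block of $\mathscr{P}(z)^{-1}$) is invertible, and Gohberg--Sigal gives $m_{\epsilon,\theta}(z)=\frac{1}{2\pi i}\tr\oint E_{-+}^{-1}\partial_w E_{-+}\,dw$. You instead assert that $\mathscr{P}(z)$ "is invertible precisely when $\widehat{\NN}_{\epsilon,\theta}(z)$ is invertible" --- that is backwards, and with it the Schur-complement identity you write (with the bottom-right entry $-\mathscr{P}_{-+}(z)^{-1}$) does not mean what you need it to. Moreover you never construct $R_-(z)$: "single- or double-layer potential that fills in the boundary data" is not a definition, and in the black box setting one must check both that $R_\pm$ are supported where $P_{\epsilon,\theta}$ is genuinely the concrete operator $Q_{\epsilon,\theta}$, and that the resulting $\mathscr{P}(z)$ is invertible on all of $\Omega$ (the disjointness of $\Omega$ from $\Spec(P^\sharp)\cup\Spec(Q_{\epsilon,\theta}^\OO)$ is exactly the kind of hypothesis one would need here, but you would have to use it). You flag this as "the main obstacle"; I agree, and it is not resolved. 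A correct Grushin reduction may well exist, but the paper's perturbation-to-simple-eigenvalue argument is more economical and avoids constructing auxiliary layer-potential operators entirely.
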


\begin{proof}
1. Suppose that $z\in\Omega$ is an eigenvalue of $P_{\epsilon,\theta}$, we choose $u\in\ker(P_{\epsilon,\theta} -z)$ and let $\varphi=u|_{\partial\OO}$, then $\NN_{\epsilon,\theta}^{\textrm{out}}(z)\varphi - \NN_P^{\textrm{in}}(z)\varphi = \partial_{\nu_g} u - \partial_{\nu_g}u = 0$. Note that $\varphi\neq 0$ since $z\notin\Spec(P^\sharp)$, thus $\ker \widehat{\NN}_{\epsilon,\theta}(z)\neq\{0\}$. On the other hand, suppose that $0\neq \varphi\in\ker\widehat{\NN}_{\epsilon,\theta}(z)$, the same arguments as in the proof of Lemma \ref{lem:Nepstheta Fredholm} show that $z\in\Spec(P_{\epsilon,\theta})$. Hence
\begin{equation}
\label{eqn:eigenvalues as zeros}
    z\in\Spec(P_{\epsilon,\theta}) \Longleftrightarrow \ker \widehat{\NN}_{\epsilon,\theta}(z)\neq \{0\},
\end{equation}
and we conclude from Lemma \ref{lem:Nepstheta Fredholm} that $\widehat{\NN}_{\epsilon,\theta}(z)$ is invertible for $z\in\Omega\setminus\Spec(P_{\epsilon,\theta})$. Analytic Fredholm theory then shows that $\Omega\owns z\mapsto \widehat{\NN}_{\epsilon,\theta}(z)^{-1}$ is a meromorphic family of operators on $H^{3/2}(\partial\OO)$ with poles of finite rank.

2. Since \eqref{eqn:eigenvalues as zeros} proves \eqref{eqn:charaterize eigenvalues} in the case $m_{\epsilon,\theta}(z)=0$, we now assume that $m_{\epsilon,\theta}(z)=M\geq 1$, and that $P_{\epsilon,\theta}$ has exactly one eigenvalue $z$ in $D(z,2r):=\{\zeta\in\CC,|\zeta-z|<2r\}$. Since $\Omega\cap\Spec(P^\sharp)=\emptyset$, $z$ is not a compactly supported embedded eigenvalue of $P$, that is, there does not exist $0\neq u\in \DD$ with $\supp u\subset B(0,R_0)$ such that $(P-z)u=0$. We claim that for any $\delta>0$ there exists $V\in \CI(\OO\setminus B(0,R_0);\RR)$ with $\|V\|_\infty < \delta$ such that 
\[ 
    \rank\int_{\partial D(z,r)}(P_{\epsilon,\theta}+V-w)^{-1} dw = M,  
\]
and that the eigenvalues of $P_{\epsilon,\theta}+V$ in $D(z,r)$ are all simple. This follows from the results of Klopp--Zworski \cite{klopp} (see also \cite[Theorem 4.39]{res}) and we omit the proof here.
Replacing $P$ by $P+V$ in \eqref{eqn:N in}, we can define $\widehat{\NN}_{\epsilon,\theta}^V$ for $P_{\epsilon,\theta}+V$ as in \eqref{eqn:Nepstheta} and \eqref{eqn:hat NNepstheta}.
Note that $\widehat{\NN}_{\epsilon,\theta}$ has no kernel except at $z$ in $D(z,2r)$ by \eqref{eqn:eigenvalues as zeros}, using \eqref{eqn:N in with extension} we can choose $\delta$ small enough such that for $\|V\|_\infty < \delta$,
\[
    \| \widehat{\NN}_{\epsilon,\theta}(w)^{-1} (\widehat{\NN}_{\epsilon,\theta}(w) - \widehat{\NN}_{\epsilon,\theta}^V(w)) \|_{H^{3/2}(\OO)\to H^{3/2}(\OO)} < 1,\quad \forall w\in\partial D(z,r).
\]
It then follows from the Gohberg--Sigal--Rouch\'e theorem (see Gohberg--Sigal \cite{gohberg1971operator} and \cite[Appendix C]{res}) that
\[
    \frac{1}{2\pi i} \tr\int_{\partial D(z,r)} \NN_{\epsilon,\theta}^V (w)^{-1}\partial_w \NN_{\epsilon,\theta}^V (w)\,dw = n_{\epsilon,\theta}(z).
\]
Hence it is enough to prove \eqref{eqn:charaterize eigenvalues} in the case $m_{\epsilon,\theta}(z)=1$ with $P_{\epsilon,\theta}$ replaced by $P_{\epsilon,\theta}+V$.

3. Now we assume that $m_{\epsilon,\theta}(z)=1$. In view of \eqref{eqn:eigenvalues as zeros}, $\widehat{\NN}_{\epsilon,\theta}(w)^{-1}$ has a pole at $z$, it remains to show that $z$ is a simple pole. For any $w$ near $z$ and $\Tilde{\varphi}\in H^{1/2}(\partial\OO)$, we recall \eqref{eqn:operator Lg} that $L_g \Tilde{\varphi}\in \DD^\sharp$, then $(P^\sharp - w)L_g \Tilde{\varphi}\in\HH^\sharp$. Now we put
\[
    u:=(P_{\epsilon,\theta}-w)^{-1} \imath (P^\sharp - w)L_g \Tilde{\varphi},\quad \varphi:=u|_{\partial\OO},
\]
where $\imath:\HH^\sharp\hookrightarrow \HH_\theta$ is the extension by zero. Following the arguments in the proof of Lemma \ref{lem:Nepstheta Fredholm} while $P$ replacing $Q$ there, we can show that $\NN_{\epsilon,\theta} (w)\varphi = \Tilde{\varphi}$, thus
\[
    \widehat{\NN}_{\epsilon,\theta} (w)^{-1}\Tilde{\varphi} = ((P_{\epsilon,\theta}-w)^{-1} \imath (P^\sharp - w)L_g (\langle D_{\partial\OO}\rangle \Tilde{\varphi}))|_{\partial\OO},\quad \forall\Tilde{\varphi}\in H^{3/2}(\partial\OO).
\]
Since $z$ is a simple pole of $w\mapsto (P_{\epsilon,\theta}-w)^{-1}$ by our assumptions, it follows from the expression above that $z$ must be a simple pole of $w\mapsto\widehat{\NN}_{\epsilon,\theta} (w)^{-1}$.
\end{proof}

\section{Deformation of obstacles}
\label{section:deform obstacle}
We have shown that the eigenvalues of $P_{\epsilon,\theta},\ \epsilon\geq 0$, can be identified with the poles of $z\mapsto \NN_{\epsilon,\theta}(z)^{-1}$. One problem of this characterization is that $\NN_{\epsilon,\theta}(z)$ can only be defined away from $\Spec(P^\sharp)$ and $\Spec(Q_{\epsilon,\theta}^\OO)$. In this section we will show that the spectrum of $P^\sharp$ and $Q_{\theta}^\OO$ can be moved by deforming the obstacle $\OO$. Hence for any resonance $z_0$ of $P$, we can always assume that $\NN_\theta(z)$ is well-defined in some neighborhood of $z_0$ by selecting a proper obstacle.  

To describe the deformations of obstacles, we follow Pereira \cite{pereira} and introduce a set of smooth mappings which deforms the obstacle $\OO$:
\begin{equation}
\label{def:Diffeomorphism}
    \Diff(\OO) := \left\{ 
        \begin{gathered}
        \Phi\in \CI(\RR^n;\RR^n) \textrm{ is a diffeomorphism}:\ \Phi(\partial\OO) = \partial\,\Phi(\OO), \\ 
        \Phi(x)=x,\quad\textrm{for all } |x|\leq R_0\ \textrm{or } |x|\geq R_1.   
        \end{gathered}
    \right\}
\end{equation}
We note that $\Phi\in\Diff(\OO)$ only deforms the region $\{x\in\RR^n:R_0 < |x| < R_1\}$, then it also defines a diffeomorphism of $\Gamma_\theta$, $0\leq \theta < \theta_0$. The pullback $\Phi^*$ gives an isomorphism between $L^2(\Gamma_\theta\setminus\Phi(\OO))$ and $L^2(\Gamma_\theta\setminus\OO)$, which also restricts to an isomorphism between $\DD(Q_\theta^{\Phi(\OO)})$ and $\DD(Q_\theta^\OO)$ given in \eqref{eqn:Qepstheta Obstacle} since it preserves the Dirichlet boundary condition. Hence we can define the deformed operator of $Q_\theta^\OO$ associated with the deformation $\Phi$ as follows:
\begin{equation}
\label{eqn:QthetaPhiOO}
    Q_{\theta,\Phi}^\OO := \Phi^* Q_\theta^{\Phi(\OO)} (\Phi^*)^{-1},\quad\textrm{with }\DD(Q_{\theta,\Phi}^\OO)=\DD(Q_\theta^\OO).
\end{equation}
The Fredholm properties of $Q_\theta^{\Phi(\OO)}-z$ immediately show that $Q_{\theta,\Phi}^\OO - z $ is a Fredholem operator of index $0$ for $-2\theta < \arg z < 3\pi/2 + 2\theta$, and \eqref{eqn:QthetaPhiOO} implies that the spectrum of $Q_{\theta,\Phi}^\OO$ in this region is identical to the spectrum of $Q_\theta^{\Phi(\OO)}$. Moreover, $Q_{\theta,\Phi}^\OO$ can be viewed as a restriction of $Q_{\theta,\Phi}:=\Phi^* Q_\theta (\Phi^*)^{-1}$ to $\Gamma_\theta\setminus\OO$ with Dirichlet boundary condition. A direct calculation shows that
\begin{equation}
\label{eqn:A Phi}
    A_\Phi := \Phi^* Q_\theta (\Phi^*)^{-1} - Q_\theta = \Phi^* Q (\Phi^*)^{-1} - Q = \sum_{|\alpha|\leq 2} a_\alpha(x) \partial_x^\alpha,
\end{equation}
where the coefficients $a_\alpha$ are supported in $B(0,R_1)\setminus\overline{B(0,R_0)} \subset \Gamma_\theta$. We note that $\|a_\alpha\|_\infty \leq C \|\Phi-\id\|_{C^2}$, thus $A_\Phi = \OO(\|\Phi-\id\|_{C^2}):H^2(\Gamma_\theta) \to L^2(\Gamma_\theta)$. 

Now we show that $\Spec(Q_\theta^\OO)$ can be moved by deforming the obstacle:
\begin{lem} 
\label{lem:deform obstacle}
Suppose that the obstacle $\OO\subset B(0,R_1)$ contains $\overline{B(0,R_0)}$ and that $-2\theta < \arg z_0 < 3\pi/2 + 2\theta$, then for any $\delta>0$ there exists $\Phi\in\Diff(\OO)$ with $\|\Phi-\id\|_{C^2} < \delta$ such that $z_0 \notin \Spec(Q_{\theta}^{\Phi(\OO)})$.
\end{lem}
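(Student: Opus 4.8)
The plan is to perturb the operator $Q_\theta^{\Phi(\OO)}$, via the conjugated family $Q_{\theta,\Phi}^\OO$ on the fixed Hilbert space $L^2(\Gamma_\theta\setminus\OO)$, and argue that for a generic small diffeomorphism $\Phi$ the point $z_0$ is not an eigenvalue. The key reduction is that, by \eqref{eqn:QthetaPhiOO}, $\Spec(Q_\theta^{\Phi(\OO)})\cap\{-2\theta<\arg z<3\pi/2+2\theta\}=\Spec(Q_{\theta,\Phi}^\OO)\cap\{\cdots\}$, so it suffices to move the (discrete) spectrum of $Q_{\theta,\Phi}^\OO$ away from $z_0$. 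Fix $z_0$; since $Q_\theta^\OO-z_0$ is Fredholm of index $0$ (it is a black box perturbation of $-\Delta_\theta$ and Lemma~\ref{lem:Ptheta Fredholm} applies), if $z_0\notin\Spec(Q_\theta^\OO)$ we are already done with $\Phi=\id$. So assume $z_0$ is an eigenvalue of $Q_\theta^\OO=Q_{\theta,\id}^\OO$. The family $\Phi\mapsto Q_{\theta,\Phi}^\OO$ depends holomorphically (indeed polynomially, by \eqref{eqn:A Phi}) on the coefficients $a_\alpha$, hence the eigenvalue projection $\Pi(\Phi)=-\frac{1}{2\pi i}\oint_{z_0}(Q_{\theta,\Phi}^\OO-w)^{-1}\,dw$ is well-defined and analytic for $\|\Phi-\id\|_{C^2}$ small, and I want to show $\rank\Pi(\Phi)=0$ for some such $\Phi$.

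The mechanism is a first-variation (Hadamard-type) argument. Pick $0\neq u\in\ker(Q_\theta^\OO-z_0)$ with $u|_{B(0,R_0)}\equiv 0$ — more precisely $u$ vanishes near $\overline{B(0,R_0)}$ and is supported away from the real ball since the deformation only acts on $R_0<|x|<R_1$; by elliptic regularity and boundary regularity $u\in H^2(\Gamma_\theta\setminus\OO)\cap H^1_0$, and $\partial_{\nu_g}u\not\equiv 0$ on $\partial\OO$ by unique continuation for the elliptic operator $Q_\theta$ (otherwise $u$ would extend by zero to a nontrivial solution on $\Gamma_\theta$ vanishing on an open set). For a one-parameter family $\Phi_t$ with $\Phi_0=\id$ and generating vector field $X=\frac{d}{dt}\big|_{t=0}\Phi_t$ supported in $R_0<|x|<R_1$, the standard Hadamard variational formula gives
\[
  \frac{d}{dt}\Big|_{t=0}\lambda(t)\;=\;-\big\langle (X\!\cdot\!\nu_g)\,\partial_{\nu_g}u,\;\partial_{\nu_g}\bar v\big\rangle_{L^2(\partial\OO)}\Big/\langle u,\bar v\rangle,
\]
where $\lambda(t)$ is the eigenvalue branch of $Q_{\theta,\Phi_t}^\OO$ near $z_0$ and $v$ spans $\ker\big((Q_\theta^\OO)^*-\bar z_0\big)$; the bracket $\langle u,\bar v\rangle$ is nonzero when $z_0$ is a semisimple eigenvalue (and by a further small perturbation as in Klopp--Zworski, cf. the use of \cite{klopp} in Lemma~\ref{lem:DtoN}, one may first arrange $z_0$ to be a simple eigenvalue, or argue with the full Jordan block). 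Since $\partial_{\nu_g}u\not\equiv 0$ and $\partial_{\nu_g}v\not\equiv 0$ on $\partial\OO$ (same unique continuation argument for $(Q_\theta^\OO)^*$), we can choose the normal component $X\!\cdot\!\nu_g$ on $\partial\OO$ — realized by an actual $\Phi\in\Diff(\OO)$ by pushing the boundary along $\nu$, leaving $|x|\le R_0$ and $|x|\ge R_1$ fixed — so that this derivative is nonzero. Hence the eigenvalue branch genuinely moves, and for small $t>0$ every eigenvalue of $Q_{\theta,\Phi_t}^\OO$ near $z_0$ has left the point $z_0$; combined with $\|\Phi_t-\id\|_{C^2}=O(t)$ this yields the claim.

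The main obstacle is making the variational argument rigorous in the degenerate (non-semisimple, or higher-multiplicity) case: when $z_0$ is an eigenvalue of $Q_\theta^\OO$ of multiplicity $M\ge 1$ with possible Jordan blocks, one must show that for an appropriate choice of $\Phi$ the sum of multiplicities of eigenvalues near $z_0$ drops to $0$, not merely splits. I would handle this by the same device already invoked in the proof of Lemma~\ref{lem:DtoN}: first apply a small real potential perturbation $V\in\CI(\OO\setminus B(0,R_0);\RR)$, $\|V\|_\infty<\delta/2$, via \cite{klopp} to make all eigenvalues of $Q_{\theta}^\OO$ near $z_0$ simple (note $V$ can be absorbed into $c(x)$, so this stays within the black box class and does not affect the deformation machinery), and then apply the first-variation argument above to each simple eigenvalue in turn, using that a finite set of simple eigenvalues can be simultaneously moved off $z_0$ by a single small $\Phi$ (choose $X\!\cdot\!\nu_g$ avoiding the finitely many linear obstructions). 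A secondary technical point is verifying that $\Phi\in\Diff(\OO)$ as defined in \eqref{def:Diffeomorphism} — a global diffeomorphism of $\RR^n$ fixing $|x|\le R_0$ and $|x|\ge R_1$ — suffices to realize an arbitrary small normal perturbation of $\partial\OO$; this is standard (flow the normal vector field extended suitably and cut off), and the resulting $C^2$-smallness is controlled by the $C^2$-size of the generating field.
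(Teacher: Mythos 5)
Your first-variation mechanism is close in spirit to the paper's, but there is a genuine gap in how you propose to reduce to the simple-eigenvalue case, and it cannot be patched by the device you invoke. You want to apply Klopp--Zworski with a real potential $V\in\CI(\OO\setminus B(0,R_0);\RR)$ to first make the eigenvalues of $Q_\theta^\OO$ near $z_0$ simple. Two things fail. First, a potential supported in $\OO\setminus B(0,R_0)$ lives \emph{inside} the obstacle, whereas $Q_\theta^\OO$ acts on $L^2(\Gamma_\theta\setminus\OO)$; such a $V$ does not change $Q_\theta^\OO$ at all (the $V$ used in the proof of Lemma~\ref{lem:DtoN} perturbs the interior reference operator $P^\sharp$, which is irrelevant here). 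Second, and more fundamentally, the lemma asks for a $\Phi\in\Diff(\OO)$ with $z_0\notin\Spec(Q_\theta^{\Phi(\OO)})$, not $z_0\notin\Spec(Q_\theta^{\Phi(\OO)}+V)$; adding a potential produces a different operator, and nothing lets you pass from the perturbed statement back to the original. In Lemma~\ref{lem:DtoN} the potential trick is legitimate because the conclusion there is a Rouch\'e count that is stable under small perturbations of the operator. Here you are trying to \emph{prove} that a specific class of perturbations --- obstacle deformations --- can move the spectrum, so appealing to a different class of perturbations to set up the problem is circular. You would need a Klopp--Zworski-type generic-simplicity theorem \emph{for the family $\Phi\mapsto Q_\theta^{\Phi(\OO)}$ itself}, which is not available off the shelf.

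The paper avoids generic simplicity entirely. It argues by induction on $m_\OO(z_0)$ through the dichotomy \eqref{good case} versus \eqref{bad case}; in the bad case it reduces to a locally constant maximal Jordan-block size $k_0$, takes $u=(Q_\theta^\OO-z_0)^{k_0-1}\Pi_\OO(D)w\neq 0$, and differentiates the eigenvalue equation along a localized normal flow $\varphi_h^t$. The key structural point is that the integration by parts in \eqref{eqn:IBP Qtheta} is done with the \emph{bilinear} (transpose) pairing $\int u\,Q_\theta u'(0)$ rather than the Hermitian one: since $Q_\theta$ is formally symmetric for this pairing, no adjoint eigenvector $v$ appears, and one lands on $\int_{\partial\OO}\chi_h\,(\partial_{\nu_g}u)^2\,dS=0$ involving $u$ alone. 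Letting $h\to 0+$ gives $\partial_{\nu_g}u(x_0)=0$ pointwise for each $x_0\in\partial\OO$ with no discussion of how the zero sets of two different boundary traces might overlap, and unique continuation finishes. Your Hermitian version can be rescued for a simple eigenvalue (both $\partial_{\nu_g}u$ and $\partial_{\nu_g}\bar v$ are nonvanishing on open dense subsets, by unique continuation applied to each), but the bilinear pairing is the cleaner route, and --- decisively --- the induction/dichotomy scheme is what lets the same variational identity dispose of Jordan blocks of arbitrary size without any generic-simplicity input.
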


\begin{proof}
We may assume that $z_0\in\Spec(Q_\theta^\OO)$, otherwise we can take $\Phi=\id$. Suppose that $Q_\theta^\OO$ has exactly one eigenvalue in $D(z_0,2r)$. For $D:=D(z_0,r)$ we define
\begin{equation}
\label{eqn:Proj QthetaOO}
    \Pi_\OO(D) := -\frac{1}{2\pi i} \int_{\partial D} (Q_\theta^\OO - \zeta)^{-1} d\zeta,\quad m_\OO(D):=\rank\Pi_\OO(D),
\end{equation}
then $m_\OO(D)=m_\OO(z_0)$, where $m_\OO(z_0)$ denotes the multiplicity of $z_0\in\Spec(Q_\theta^\OO)$.

For $\delta>0$ small, we put
\[    
\UU_\delta(\OO) := \{\Phi\in\Diff(\OO) : \|\Phi-\id\|_{C^2(\RR^n\setminus\OO)} < \delta\}.
\]
It follows from \eqref{eqn:A Phi} that $Q_{\theta,\Phi}^\OO - Q_\theta^\OO = O(\|\Phi-\id\|_{C^2}): H^2(\Gamma_\theta\setminus\OO)\to L^2(\Gamma_\theta\setminus\OO)$, thus for $\Phi\in\UU_\delta(\OO)$ with $\delta$ sufficiently small, 
\[
    (Q_{\theta,\Phi}^\OO - \zeta)^{-1} = (Q_{\theta}^\OO - \zeta)^{-1} (I + (Q_{\theta,\Phi}^\OO - Q_\theta^\OO)(Q_{\theta}^\OO - \zeta)^{-1})^{-1},\quad\zeta\in\partial D,
\]
exists and $\sup_{\zeta\in\partial D}\|(Q_{\theta,\Phi}^\OO - \zeta)^{-1} - (Q_\theta^\OO -\zeta)^{-1}\|_{L^2(\Gamma_\theta\setminus\OO)\to L^2(\Gamma_\theta\setminus\OO)} < C(\Omega)\delta $. We define
\begin{equation}
\label{eqn:Proj QthetaPhiOO}
    \Pi_{\Phi}(D) := -\frac{1}{2\pi i} \int_{\partial D} (Q_{\theta,\Phi}^\OO - \zeta)^{-1} d\zeta,\quad m_\Phi(D):=\rank\Pi_\Phi(D),
\end{equation}
then $\Pi_{\Phi}(D)$ and $\Pi_\OO(D)$ have the same rank for any $\Phi\in\UU_\delta(\OO)$ if $\delta$ is sufficiently small. Since $m_\Phi(D) = m_{\Phi(\OO)}(D)$ by \eqref{eqn:QthetaPhiOO}, we conclude that
\begin{equation}
\label{mult stability}
    m_{\Phi(\OO)}(D) \textrm{ is constant for }\Phi\in\UU_\delta(\OO) \textrm{ if $\delta$ is sufficiently small}.
\end{equation}

We note that for every $z_0$ and $\OO$, one of the following cases has to occur:
\begin{equation}
\label{good case}
    \forall\,\delta>0,\quad\exists\,\Phi\in\UU_\delta(\OO)\ \textrm{ such that } m_{\Phi(\OO)}(z_0) < m_{\Phi(\OO)}(D),
    \end{equation}
or
\begin{equation}
\label{bad case}
    \exists\,\delta>0,\ \textrm{such that}\quad\forall\,\Phi\in\UU_\delta(\OO),\ m_{\Phi(\OO)}(z_0) = m_{\Phi(\OO)}(D).
\end{equation}
Assuming \eqref{good case} we can prove the lemma by induction on $m_\OO(z_0)$. If $m_\OO(z_0)=1$, \eqref{mult stability} shows that $m_{\Phi(\OO)}(D)=1$ for $\Phi\in\UU_\delta(\OO)$ with $\delta$ small. It then follows from \eqref{good case} that we can find $\Phi\in\UU_\delta(\OO)$ such that $m_{\Phi(\OO)}(z_0)<1$, i.e. $z_0\notin\Spec(Q_{\theta}^{\Phi(\OO)})$. Assuming that we proved the lemma in the case $m_\OO(z_0)<M$, we now assume that $m_\OO(z_0)=M$. We note that for any $\Phi_1\in\Diff(\OO)$ and $\Phi_2\in\Diff(\Phi_1(\OO))$,
\[    \|\Phi_2\circ\Phi_1 -\id\|_{C^2} \leq  C (\|\Phi_1 -\id\|_{C^2} + \|\Phi_2 -\id\|_{C^2} ),
\]
where $C$ is a constant depending only on the dimension $n$. For any $\delta>0$, \eqref{good case} implies that we can find $\Phi_1\in\Diff(\OO)$ with $\|\Phi_1-\id\|_{C^2}<\delta/2C$ such that $m_{\Phi_1(\OO)}(z_0)<M$. It then follows from our induction hypothesis that there exists $\Phi_2\in\Diff(\Phi_1(\OO))$ with $\|\Phi_2-\id\|_{C^2}< \delta/2C$ such that $z_0\notin \Spec(Q_\theta^{\Phi_2(\Phi_1(\OO))})$. We now take $\Phi=\Phi_2\circ\Phi_1$, then $\Phi\in \UU_\delta(\OO)$ and $z_0\notin \Spec(Q_\theta^{(\Phi(\OO)})$.

It remains to show that \eqref{bad case} is impossible. For that, we shall argue by contradiction, assume that $m_{\OO}(D)=M$ and that \eqref{bad case} holds. For $\Phi\in\UU_\delta(\OO)$, we define
\[  
    k(\Phi) := \min\{ k : (Q_{\theta,\Phi}^\OO - z_0)^k \Pi_\Phi(D) = 0 \},
\]
then $1\leq k(\Phi)\leq M$. It follows from \eqref{eqn:QthetaPhiOO} and \eqref{eqn:Proj QthetaPhiOO} that if $\|\Phi_j - \Phi\|_{C^{2M}}\to 0$ and  $(Q_{\theta,\Phi_j}^\OO - z_0)^k \Pi_{\Phi_j}(D) = 0$, then $(Q_{\theta,\Phi}^\OO - z_0)^k \Pi_\Phi(D) = 0$. We now put
\[ 
    k_0 := \max\{ k(\Phi) : \Phi\in \UU_{\delta/2}(\OO) \}, 
\]
and assume that the maximum is attained at $\Phi_0\in\UU_{\delta/2}(\OO)$ i.e. $k(\Phi_0)=k_0$, then there exists $\delta'>0$ such that $\|\Phi-\Phi_0\|_{C^{2M}} < \delta' \Rightarrow k(\Phi)=k_0$. Henceforth, we can replace our original obstacle $\OO$ with $\Phi_0(\OO)$, decrease $\delta$ and then assume by \eqref{bad case} that
\begin{equation}
\label{bad case 2}
    \begin{gathered}
    ( Q_{\theta,\Phi}^\OO - z_0 )^{k_0} \Pi_{\Phi}(D) = 0,\quad  ( Q_{\theta,\Phi}^\OO - z_0 )^{k_0 - 1} \Pi_{\Phi}(D) \neq 0, \\
    m_{\Phi}(z_0) = \rank \Pi_{\Phi}(D) = M,\quad\forall\,\Phi\in\Diff(\OO),\ \|\Phi-\id\|_{C^{2M}} < \delta. \end{gathered}
\end{equation}

Before proving that \eqref{bad case 2} is impossible we introduce a family of deformations in $\Diff(\OO)$ acting near a fixed point on $\partial\OO$. For any fixed $x_0\in\partial\OO$ and some $h_0>0$ small we can choose a family of functions $\chi_h\in \CI(\partial\OO;[0,\infty))$ depending continuously in $h\in(0,h_0]$ with
\begin{equation}
\label{eqn:chi_h}
    \int_{\partial\OO} \chi_h(x) dS(x) = 1,\quad \supp\chi_h\subset B_{\partial\OO}(x_0,h),\quad\forall\,h\in(0,h_0],
\end{equation}
where $B_{\partial\OO}(x_0,h)$ denotes the geodesic ball on $\partial\OO$ with center $x_0$ and radius $h$. For each $h\in (0,h_0]$, we construct a smooth vector field $V_h\in\CIc(\RR^n;\RR^n)$ with some small constant $\delta_h=\OO(h^{2M+n-1})$ such that
\begin{equation}
\label{eqn:Vh}
    \begin{gathered}
        V_h(x) = \delta_h \chi_h(x) \nu_g (x),\ \forall x\in \partial\OO,\quad \|V_h\|_{C^{2M}}<\epsilon/2, \\ 
        \supp V_h \subset B_{\RR^n}(x_0,Ch)\textrm{ for some }C>0,
    \end{gathered}    
\end{equation}
where $\nu(x)$ is the normal vector at $x\in\partial\OO$ pointing inward. Let $\varphi_h^t:\RR^n\to\RR^n$ be the flow generated by the vector field $V_h$. It follows from \eqref{eqn:Vh} that for every $h\in(0,h_0]$ there exists $t_0>0$ such that
\[    
    \varphi_h^t\in\Diff(\OO),\quad\|\varphi_h^t-\id\|_{C^{2M}} < \delta,\quad\forall\,t\in(-t_0,t_0).
\]

Assuming \eqref{bad case 2} we can find $w\in L^2(\Gamma_\theta\setminus\OO)$ so that 
$u := (Q_{\theta}^\OO - z_0)^{k_0 -1} \Pi_\OO (D)w \neq 0$. For any fixed $x_0\in\partial\OO$ and $h\in(0,h_0]$, we take $\Phi_t:=\varphi_h^{t}$, $t\in(-t_0,t_0)$ and put   
\[
        u(t) := ({\Phi_t}^{-1})^* v(t),\quad  v(t) := (Q_{\theta,\Phi_t}^\OO-z_0)^{k_0 -1} \Pi_{\Phi_t}(D) w.
\] 
In view of \eqref{eqn:QthetaPhiOO}, $(Q_{\theta,\Phi_t}^\OO-z_0) v(t) = 0$ implies that
\begin{equation}
\label{eqn:outgoing u_t}
    (Q_\theta - z_0)u(t) = 0\quad\textrm{in }\Gamma_\theta\setminus\Phi_t(\OO).
\end{equation}
Since $\Phi_t(\OO)\subset\OO$ for $t\geq 0$, we can restrict \eqref{eqn:outgoing u_t} to the region $\Gamma_\theta\setminus\OO$ then differentiate it in $t$, by taking $t=0$, we obtain that
\begin{equation}
\label{eqn:u'(0)}
    (Q_\theta - z_0) u'(0) = 0 \quad\textrm{in }\Gamma_\theta\setminus\OO.
\end{equation}
Recalling that $u(t,x)=v(t,\varphi_h^{-t} x)$ and $u(0)=v(0)=u$, we conclude from the flow equation that $u'(0)=v'(0)-\partial_x u \cdot V_h$, thus by \eqref{eqn:Vh} we have
\begin{equation}
\label{eqn:u'(0) on boundary}
    u'(0) = -\delta_h \chi_h(x) \partial_{\nu_g} u,\quad\textrm{on }\partial\OO.
\end{equation}
We now multiply \eqref{eqn:u'(0)} by $u$ then integrate it on $\Gamma_\theta\setminus\OO$, then
\begin{equation}
\label{eqn:IBP Qtheta}
    \begin{split}
        0 & = \int_{\Gamma_\theta\setminus\OO} u\, (Q_\theta -z_0)u'(0) \\
        & = \int_{\Gamma_\theta\setminus\OO} u'(0)\, (Q_\theta -z_0)u  + \int_{\Gamma_\theta\setminus\OO} \sum_{j,k}\partial_j(u'(0)g^{jk}\partial_k u - u g^{jk}\partial_k u'(0) ) \\
        & = \int_{\partial\OO} (u'(0)\,\partial_{\nu_g} u - u\,\partial_{\nu_g} u'(0))\, dS.
    \end{split}
\end{equation}
It then follows from $u|_{\partial\OO}=0$ and \eqref{eqn:u'(0) on boundary} that 
\[  
0=\int_{\partial\OO}\chi_h(x) (\partial_{\nu_g} u(x))^2 dS(x) ,
\]
sending $h\to 0+$, we conclude from \eqref{eqn:chi_h} that $\partial_{\nu_g}u(x_0) = 0$. We note that $x_0\in\partial\OO$ can be chosen arbitrarily, thus $\partial_{\nu_g} u|_{\partial\OO} \equiv 0$. Putting $\Tilde{u}:= 1_{\OO}\cdot 0 + 1_{\Gamma_\theta\setminus\OO}\cdot u$, the same arguments as in the proof of Lemma \ref{lem:Nepstheta Fredholm} show that $\Tilde{u}\in H^2(\Gamma_\theta)$ and $(Q_\theta - z_0)\Tilde{u}=0$ on $\Gamma_\theta$. But unique continuation results
for second order elliptic differential equations show that $\Tilde{u}\equiv 0$, thus a contradiction.
\end{proof}

Now we consider the behavior of $\Spec(P^\OO)$ under the deformations of $\OO$. In the notation of \S \ref{reference operator}, for $\Phi\in\Diff(\OO)$, the pullback $\Phi^*$ gives an isomorphism between $\HH^{\Phi(\OO)}$ and $\HH^\OO$, which also restricts to an isomorphism between $\DD^{\Phi(\OO)}$ and $\DD^\OO$. Like \eqref{eqn:QthetaPhiOO} we define the deformed operator of $P^\OO$ associate with $\Phi$:
\begin{equation}
\label{eqn:PPhiOO}
    P_\Phi^\OO := \Phi^* P^{\Phi(\OO)} (\Phi^*)^{-1},\quad\textrm{ with domain } \DD^\OO.
\end{equation}
Since $(P^{\Phi(\OO)} + i)^{-1}$ is compact by Lemma \ref{lem:reference operator}, the same holds for $P_\Phi^\OO$, it follows that $P_\Phi^\OO$ has a discrete spectrum. Moreover, $\Spec(P_\Phi^\OO)$ must be identical to $\Spec(P^{\Phi(\OO)})$, which lies in $\RR$ due to the self-adjointness of $P^{\Phi(\OO)}$. 

Before stating the deformation results for $\Spec(P^\OO)$, we notice that unlike Lemma \ref{lem:deform obstacle}, there is a subset of $\Spec(P^\OO)$ which is invariant under the deformations of the obstacle, that is the compactly supported embedded eigenvalues of $P$,
\begin{equation}
\label{defn:SpecComp of P}
\Spec_{\comp} (P) := \{\lambda\in\CC : \exists\,0\neq u\in\DD_{\comp}\ \textrm{such that }(P-\lambda)u=0 \},
\end{equation}
where $\DD_{\comp} := \{ u\in\DD : u|_{\RR^n\setminus B(0,R_0)}\in H^2_{\comp}(\RR^n\setminus B(0,R_0)) \}$.
In view of the unique continuation results
for second order elliptic differential equations, $u$ in \eqref{defn:SpecComp of P} must vanish on $\RR^n\setminus B(0,R_0)$, thus $u\in \DD^\sharp$ for any $\OO$ containing $\overline{B(0,R_0)}$, which implies that $\Spec_{\comp} (P) \subset \Spec(P^\OO)$. The next lemma shows that any eigenvalue of $P^\OO$ other than those compactly supported embedded eigenvalues of $P$ can still be moved by deforming the obstacle:
\begin{lem} 
\label{lem:deform obstacle P sharp}
Suppose that the obstacle $\OO\subset B(0,R_1)$ contains $\overline{B(0,R_0)}$ and $z_0\in\Spec(P^\OO)\setminus\Spec_{\comp}(P)$, then for any $\delta>0$ there exists $\Phi\in\Diff(\OO)$ with $\|\Phi-\id\|_{C^2} < \delta$ such that $z_0 \notin \Spec(P^{\Phi(\OO)})$.
\end{lem}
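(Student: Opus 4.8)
The plan is to follow the proof of Lemma~\ref{lem:deform obstacle} almost verbatim, using the self-adjointness of $P^\OO$ (Lemma~\ref{lem:reference operator}) to simplify the structure and replacing the unique continuation at the end by an extension-by-zero argument. Since $P^\OO$ is self-adjoint, $z_0\in\RR$, the spectrum is discrete, and all the Riesz projections that occur are projections onto genuine eigenspaces (no Jordan blocks), which makes the integer $k_0$ of Lemma~\ref{lem:deform obstacle} equal to $1$. Choose $r>0$ with $D(z_0,2r)\cap\Spec(P^\OO)=\{z_0\}$ and put $D:=D(z_0,r)$. For $\Phi\in\Diff(\OO)$ let $P_\Phi^\OO:=\Phi^*P^{\Phi(\OO)}(\Phi^*)^{-1}$ as in \eqref{eqn:PPhiOO}; exactly as in \eqref{eqn:A Phi} the difference $A_\Phi:=P_\Phi^\OO-P^\OO$ is a second order differential operator with coefficients supported in $B(0,R_1)\setminus\overline{B(0,R_0)}$ and $\|A_\Phi\|_{H^2\to L^2}=O(\|\Phi-\id\|_{C^2})$. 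With $\Pi_\Phi(D):=-\tfrac1{2\pi i}\int_{\partial D}(P_\Phi^\OO-\zeta)^{-1}d\zeta$ and $m_\Phi(D):=\rank\Pi_\Phi(D)$, the same perturbation argument that gives \eqref{mult stability} shows $m_{\Phi(\OO)}(D)=m_\Phi(D)$ is constant on $\UU_\delta(\OO)$ for $\delta$ small. As in Lemma~\ref{lem:deform obstacle}, for every $z_0$ and $\OO$ one of \eqref{good case}, \eqref{bad case} holds with $P^\OO$ in place of $Q_\theta^\OO$; in the good case I would conclude by the same induction on $m_\OO(z_0)$ (the base case, and the composition-of-diffeomorphisms step, are unchanged, and the induction hypothesis applies since $\Spec_{\comp}(P)$ does not depend on $\OO$). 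So the real work is to rule out the bad case.

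Assume \eqref{bad case}. Because $P_\Phi^\OO$ is similar, via $\Phi^*$, to the self-adjoint $P^{\Phi(\OO)}$, the bad case forces $(P_\Phi^\OO-z_0)\Pi_\Phi(D)=0$ and $\rank\Pi_\Phi(D)=M:=m_\OO(z_0)\ge1$ for all $\Phi$ near the identity. Fix $0\neq u\in\ker(P^\OO-z_0)=\Ran\Pi_\OO(D)$. For a fixed $x_0\in\partial\OO$ build, exactly as in the proof of Lemma~\ref{lem:deform obstacle} (but now needing only $C^2$ bounds), the bump vector fields $V_h=\delta_h\chi_h\nu_g$ on $\partial\OO$ with $\supp V_h\subset B(x_0,Ch)\subset\RR^n\setminus\overline{B(0,R_0)}$ and their flows $\Phi_t=\varphi_h^t\in\Diff(\OO)$, $|t|<t_0$, and set $v(t):=\Pi_{\Phi_t}(D)u\in\DD^\sharp$, $u(t):=(\Phi_t^{-1})^*v(t)$. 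Then $v(0)=u$, $t\mapsto v(t)$ is smooth into $\DD^\sharp$, and $u(t)\in\ker(P^{\Phi_t(\OO)}-z_0)$, so $u(t)$ is a smooth family of Dirichlet eigenfunctions of $Q$ on the domains $\Phi_t(\OO)$ with $u(0)=u$ and constant eigenvalue $z_0$. Since the deformations $\Phi_t$ only move $\partial\OO$ inside $\RR^n\setminus\overline{B(0,R_0)}$, away from the abstract black box, Hadamard's variational formula for this Dirichlet problem applies, and because $z_0$ is stationary it gives $0=\dot z_0 = \pm\, c\,\delta_h\!\int_{\partial\OO}\chi_h(x)\,(\partial_{\nu_g}u(x))^2\,dS(x)$ with $c>0$. (Equivalently, differentiating $(P^\OO+A_{\Phi_t}-z_0)v(t)=0$ at $t=0$ gives $(P^\OO-z_0)\dot v(0)=-\dot A_0 u$, and pairing with $u$ and using that $P^\OO$ is self-adjoint with $z_0\in\RR$ yields $\langle\dot A_0u,u\rangle_{\HH^\sharp}=0$, which unwinds to the same boundary integral since $\dot A_0$ is supported in $B(x_0,Ch)$.) Hence $\int_{\partial\OO}\chi_h(\partial_{\nu_g}u)^2\,dS=0$ for all small $h$, and letting $h\to0+$ using \eqref{eqn:chi_h} gives $\partial_{\nu_g}u(x_0)=0$; as $x_0\in\partial\OO$ is arbitrary, $\partial_{\nu_g}u|_{\partial\OO}\equiv0$.

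To finish the contradiction, recall that $u|_{\partial\OO}=0$ (the Dirichlet condition built into $\DD^\sharp$), so together with $\partial_{\nu_g}u|_{\partial\OO}=0$ the full Euclidean gradient of $u$ vanishes on $\partial\OO$. Therefore the extension by zero $\widetilde u:=1_{B(0,R_0)}u\oplus1_{\OO\setminus B(0,R_0)}u\oplus0$ has its $\RR^n\setminus B(0,R_0)$ component equal to an $H^2$ function, compactly supported in $\overline\OO\setminus B(0,R_0)$, that vanishes to first order across $\partial\OO$. Writing $\widetilde u=\psi\widetilde u+(1-\psi)\widetilde u$ with $\psi\in\CIc(\OO)$ equal to $1$ near $\overline{B(0,R_0)}$, one has $\psi\widetilde u=\psi u\in\DD$ by \eqref{eqn:D sharp} and $(1-\psi)\widetilde u\in\DD$ by \eqref{eqn:domain D}, so $\widetilde u\in\DD_{\comp}$; and gluing the two solutions of $(Q-z_0)\cdot=0$ with matching Cauchy data along $\partial\OO$ (the same computation as in the proof of Lemma~\ref{lem:Nepstheta Fredholm}) shows $(P-z_0)\widetilde u=0$. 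Since $\widetilde u\neq0$, this means $z_0\in\Spec_{\comp}(P)$, contradicting the hypothesis. Hence \eqref{bad case} is impossible and the lemma follows.

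The main obstacle is the bad case, and inside it the identification of the first variation of $z_0$ with a nonzero multiple of $\int_{\partial\OO}\chi_h(\partial_{\nu_g}u)^2\,dS$: one must either run the classical Hadamard shape-derivative computation in the pulled-back picture (unwinding $A_{\Phi_t}=\Phi_t^*Q(\Phi_t^*)^{-1}-Q$, differentiating in $t$, and integrating by parts while tracking Jacobian factors) or argue it abstractly, as indicated, by pairing the differentiated eigenequation with $u$ and invoking the self-adjointness of $P^\OO$. The point that makes the presence of the abstract black box harmless is that the deformations, hence the variation $\dot A_0$, are supported in $\RR^n\setminus\overline{B(0,R_0)}$, so the whole variational computation is local near $\partial\OO$ and never involves the inner region; self-adjointness of $P^\OO$ replaces the global Green's identity used in Lemma~\ref{lem:deform obstacle}, which here would pick up an unwanted contribution from the artificial sphere $\partial B(0,R_0)$. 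Everything else — the dichotomy, the induction in the good case, and the $H^2$-gluing in the last paragraph — is routine once Lemmas~\ref{lem:deform obstacle}, \ref{lem:reference operator} and \ref{lem:Nepstheta Fredholm} are available.
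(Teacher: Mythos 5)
Your overall strategy coincides with the paper's: set up the Riesz projections $\Pi_\Phi^P(D)$, establish stability of the multiplicity $m_{\Phi(\OO)}^P(D)$, reduce by the same dichotomy and induction to excluding the bad case \eqref{bad case}, then use the bump vector fields $V_h$ to show $\partial_{\nu_g}u|_{\partial\OO}\equiv 0$ and conclude $z_0\in\Spec_{\comp}(P)$ by extension by zero, exactly as in the paper. The one place where you diverge is the variational computation itself. The paper works with $u(t)=(\Phi_t^{-1})^*v(t)$ on the \emph{moving} domain $\Phi_t(\OO)$, restricts to $\OO$ for $t\leq 0$, differentiates, and then pairs the resulting equation \eqref{eqn:u'(0) in obstacle} with $u$. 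This forces the introduction of the auxiliary cutoffs $\psi_1,\psi_2$ in \eqref{eqn:psi1 psi2}, because $u'(0)$ fails the Dirichlet condition on $\partial\OO$ (it equals $-\delta_h\chi_h\partial_{\nu_g}u$ there) and hence does not lie in $\DD^\sharp$, so one cannot simply invoke self-adjointness of $P^\OO$ on $u'(0)$. You instead stay entirely in the pulled-back picture: $v(t)=\Pi_{\Phi_t}^P(D)u\in\DD^\sharp$ for all $t$, so $\dot v(0)\in\DD^\sharp$ (by smoothness of $t\mapsto(P_{\Phi_t}^\OO-\zeta)^{-1}$ into $\mathcal{L}(\HH^\sharp,\DD^\sharp)$), and pairing $(P^\OO-z_0)\dot v(0)=-\dot A_0 u$ with $u$ gives $\langle\dot A_0 u,u\rangle_{\HH^\sharp}=0$ directly from self-adjointness, with no cutoffs. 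The price you pay is the unwinding of $\langle\dot A_0 u,u\rangle$: using $\dot A_0=[\mathcal{L}_{V_h},Q]$ (where $\mathcal{L}_{V_h}=V_h\cdot\nabla$), the fact that $Qu=z_0u$ on $\supp V_h$, and $u|_{\partial\OO}=0$, one gets $\dot A_0 u=-(Q-z_0)(V_h\cdot\nabla u)$ near $x_0$ and then by Green's identity $\langle\dot A_0 u,u\rangle=\mp\delta_h\int_{\partial\OO}\chi_h|\partial_{\nu_g}u|^2\,dS$, recovering the paper's boundary integral. You flag this unwinding but do not carry it out; it is a short local computation, but it is where the content of your abstract pairing resides, so the proposal would benefit from spelling it out. (Your observation that $k_0=1$ automatically, i.e.\ no Jordan blocks, matches the remark in the paper's proof that $(P^{\Phi(\OO)}-z_0)\Pi_{\Phi(\OO)}^P(D)=0$.) Modulo that one elided computation, your argument is correct, and the fixed-domain formulation is a genuine streamlining of the paper's treatment of this step.
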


\begin{proof}
The proof is similar to Lemma \ref{lem:deform obstacle} except that we need a different approach from \eqref{eqn:IBP Qtheta} since the integration by parts is not available in the black box. Suppose that $z_0\in\Spec(P^\OO)$ with multiplicity $m_\OO^P (z_0)$ and that $P^\OO$ has exactly one eigenvalue in $D(z_0,2r)$. For $D:=D(z_0,r)$ we put
\[ 
    \Pi_\OO^P (D) := -\frac{1}{2\pi i} \int_{\partial D} (P^\OO - \zeta)^{-1} d\zeta,\quad m_\OO^P (D):=\rank\Pi_\OO^P (D).
\]
Using \eqref{eqn:P sharp} and \eqref{eqn:A Phi} we can deduce that $\partial D\owns \zeta \mapsto (P_\Phi^\OO - \zeta)^{-1}$ exists for $\Phi\in\UU_\delta(\OO)$ with $\delta$ small enough, then we define
\[
    \Pi_\Phi^P (D) := -\frac{1}{2\pi i} \int_{\partial D} (P_\Phi^\OO - \zeta)^{-1} d\zeta,\quad m_\Phi^P (D):=\rank\Pi_\Phi^P (D) = m_{\Phi(\OO)}^P (D).
\]
We remark that $m_\OO^P(D)$ is also invariant under small deformations of obstacles:
\begin{equation}
\label{mult stability 2}
    m_{\Phi(\OO)}^P (D) \textrm{ is constant for }\Phi\in\UU_\delta(\OO) \textrm{ if $\delta$ is sufficiently small}.
\end{equation}
In view of the proof of Lemma \ref{lem:deform obstacle}, it is enough to exclude the following case:
\begin{equation}
\label{bad case 3}
    \exists\,\delta>0,\ \textrm{such that}\quad\forall\,\Phi\in\UU_\delta(\OO),\ m_{\Phi(\OO)}^P (z_0) = m_{\Phi(\OO)}^P (D).
\end{equation}
Again we argue by contradiction, assume that \eqref{bad case 3} holds and $m_\OO^P(D) = M \geq 1$. We remark that unlike the proof of Lemma \ref{lem:deform obstacle}, the self-adjointness of $P^{\Phi(\OO)}$ implies that $(P^{\Phi(\OO)} - z_0) \Pi_{\Phi(\OO)}^P (D) = 0$ thus $(P_\Phi^\OO - z_0)\Pi_\Phi^P (D) = 0$ for any $\Phi\in\UU_\delta(\OO)$. We now choose $w\in\HH^\OO$ such that $u:=\Pi_{\OO}^P (D) w\neq 0$. For any fixed $x_0\in \partial\OO$ and $h\in (0,h_0]$, we set $\Phi_t :=\varphi_h^t$ where $\varphi_h^t$ is the flow generated by $V_h$ given in \eqref{eqn:Vh}, there exists $t_0>0$ such that $\Phi_t\in\UU_\delta(\OO)$ for all $-t_0<t<t_0$. Let
\[
    v(t) := \Pi_{\Phi_t}^P (D) w\in\DD^\OO, \quad u(t):=(\Phi_t^{-1})^* v(t), 
\]
we have $(P_{\Phi_t}^\OO - z_0)v(t) = 0$, thus $(P^{\Phi_t(\OO)} - z_0) u(t) = 0$. Recalling \eqref{eqn:P sharp} we obtain that for some $\psi\in\CIc(\OO)$, $\psi=1$ near $\overline{B(0,R_0)}$ and $t_0>0$ small enough,
\begin{equation}
\label{eqn:u(t) in obstacle}
    \forall\,t\in(-t_0,t_0),\quad P(\psi u(t)) + Q ((1-\psi)u(t)) - z_0 u(t) = 0\quad \textrm{in } \Phi_t(\OO).
\end{equation}
Since $\Phi_t(\OO)\supset \OO$ for $t\leq 0$, we can restrict \eqref{eqn:u(t) in obstacle} to $\OO$ and differentiate it in $t$, by taking $t=0$, we have
\begin{equation}
\label{eqn:u'(0) in obstacle}
    P(\psi u'(0)) + Q((1-\psi)u'(0)) - z_0 u'(0) = 0\quad\textrm{in }\OO.
\end{equation}
Next we compute the inner product of the left hand side and $u$ on the Hilbert space $\HH^\OO$ defined by \eqref{eqn:H sharp}. For that, choose $\psi_j\in\CIc(\OO)$, $\psi_j=1$ near $\overline{B(0,R_0)}$, so that  
\begin{equation}
\label{eqn:psi1 psi2}
    \psi_1 = 1 \textrm{ near }\supp\psi,\quad \psi = 1 \textrm{ near }\supp\psi_2.
\end{equation}
Then we have, using the self-adjointness of $P$,
\[
    \langle P(\psi u'(0)) , u\rangle_{\HH^\OO} = \langle P(\psi u'(0)) , \psi_1 u \rangle_{\HH} = \langle \psi u'(0) , P(\psi_1 u) \rangle_{\HH},
\]
and $\langle Q((1-\psi)u'(0)) , u \rangle_{\HH^\OO} = \langle Q((1-\psi)u'(0)) , (1-\psi_2) u \rangle_{L^2(\OO)}$. Recalling \eqref{eqn:u'(0) on boundary}, integration by parts as in \eqref{eqn:IBP Qtheta} shows that
\[
\begin{split}
    &{\ \quad} \langle Q((1-\psi)u'(0)) , (1-\psi_2) u \rangle_{L^2(\OO)} - \langle (1-\psi)u'(0) , Q((1-\psi_2) u) \rangle_{L^2(\OO)} \\
    & = \int_\OO \sum_{j,k} \partial_j ((1-\psi)u'(0)g^{jk}\partial_k ((1-\psi_2)\Bar{u}) - (1-\psi_2)\Bar{u} g^{jk}\partial_k ((1-\psi)u'(0)))\\
    & = \int_{\partial\OO} -u'(0)\partial_{\nu_g}\Bar{u} + \Bar{u}\partial_{\nu_g}u'(0) = \int_{\partial\OO} \delta_h \chi_h |\partial_{\nu_g} u|^2 .
\end{split}
\]
It follows from \eqref{eqn:P sharp} and \eqref{eqn:psi1 psi2} that
\[
    \langle \psi u'(0) , P(\psi_1 u) \rangle_{\HH} = \langle u'(0) , \psi (P^\OO u - Q((1-\psi_1)u)) \rangle_{\HH^\OO} = \langle u'(0) , \psi P^\OO u \rangle_{\HH^\OO};
\]
and that
\[
\begin{split}
    \langle (1-\psi)u'(0) , Q((1-\psi_2) u) \rangle_{L^2(\OO)}
    & = \langle u'(0) , (1-\psi)(P^\OO u - P(\psi_2 u)) \rangle_{\HH^\OO} \\ 
    & = \langle u'(0) , (1 -\psi) P^\OO u \rangle_{\HH^\OO}.
\end{split}
\]
We now conclude from \eqref{eqn:u'(0) in obstacle} and all the calculation above that
\[
    0 = \langle u'(0) , (P^\OO - z_0) u \rangle_{\HH^\OO} + \int_{\partial\OO} \delta_h \chi_h |\partial_{\nu_g} u|^2 = \int_{\partial\OO} \delta_h \chi_h |\partial_{\nu_g} u|^2 .
\]
It follows that $\partial_{\nu_g} u (x_0) = 0$. Since $x_0\in\partial\OO$ can be chosen arbitrarily, we obtain that $\partial_{\nu_g} u|_{\partial\OO} \equiv 0$. Putting $\Tilde{u}:= 1_{\OO}u + 1_{\RR^n\setminus\OO}\cdot 0$, the same arguments as in the proof of Lemma \ref{lem:Nepstheta Fredholm} show that $\Tilde{u}\in \DD$ and $(P - z_0)\Tilde{u}=0$, which would imply that $z_0\in\Spec_{\comp}(P)$, a contradiction.
\end{proof}

\section{Proof of convergence}
\label{section:poc}

Before proving the convergence of eigenvalues of $P_\epsilon$ to resonances as $\epsilon\to 0+$, we recall a basic estimate of decay of the Green function of $Q_\theta^\OO$ off the diagonal $\{ (x,x): x\in\Gamma_\theta\setminus\OO \}$. For a detailed account see Shubin \cite{shubin} and references given there.  
\begin{lem}
\label{lem:green function}
Suppose that the obstacle $\OO\subset B(0,R_1)$ contains $\overline{B(0,R_0)}$ and that $z_0\notin\Spec(Q_\theta^\OO)$ with $-2\theta < \arg z_0 < 3\pi/2 + 2\theta$. The Schwartz kernel of the resolvent $(Q_\theta^\OO - z_0)^{-1} : L^2(\Gamma_\theta\setminus\OO)\to L^2(\Gamma_\theta\setminus\OO)$ is denoted by $G(z_0;x_\theta,y_\theta)$, where $x_\theta=f_\theta(x)$ is the parametrization on $\Gamma_\theta$. Then there exists $\beta>0$ such that for every $\delta>0$ there exists $C_\delta>0$ such that
\[
    |G(z_0;f_\theta(x),f_\theta(y))| \leq C_\delta\, e^{-\beta|x-y|}\quad\textrm{if}\quad |x-y|>\delta.
\]
\end{lem}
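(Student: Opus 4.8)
The estimate is of Combes--Thomas (Agmon) type, so the plan is to conjugate $Q_\theta^\OO-z_0$ by an exponential weight with small Lipschitz constant, invert the conjugated operator by a Neumann series using $z_0\notin\Spec(Q_\theta^\OO)$, and then pass from the resulting weighted $L^2$ bound on the resolvent to a pointwise bound on its Schwartz kernel by elliptic regularity. First I would pass to the parametrization: $f_\theta$ is a smooth diffeomorphism of $\RR^n\setminus\OO$ onto $\Gamma_\theta\setminus\OO$ with Jacobian bounded above and below, and by \eqref{eqn:Qelliptic}, \eqref{P_theta elliptic}, \eqref{P_theta at infinity} and the properties of $g_\theta$ it transports $Q_\theta^\OO$ to a uniformly elliptic second order operator on $\RR^n\setminus\OO$ with $\CIb$ coefficients and principal symbol valued in a fixed proper subsector of $\CC$; it then suffices to prove the bound for the kernel $G$ of the resolvent of this operator, with $|x-y|$ the Euclidean distance, and I keep writing $Q_\theta^\OO$, $G$. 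The point of this reduction is that all constants below become uniform, and in particular the admissible Lipschitz constant will be independent of the base point and of $\delta$.

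For the conjugation step: for $w\in\CIb(\RR^n;\RR)$ with $\|\nabla w\|_\infty+\|\nabla^2 w\|_\infty$ small, multiplication by $e^{\pm w}$ preserves $\DD(Q_\theta^\OO)=H^2(\RR^n\setminus\OO)\cap H_0^1(\RR^n\setminus\OO)$, and a short computation shows that $R_w:=e^wQ_\theta^\OO e^{-w}-Q_\theta^\OO$ is a first order operator with $L^\infty$ coefficients of size $O(\|\nabla w\|_\infty+\|\nabla^2 w\|_\infty)$, the implicit constant depending only on fixed $\CIb$ seminorms of the coefficients of $Q_\theta^\OO$. Since $z_0\notin\Spec(Q_\theta^\OO)$, elliptic regularity makes $(Q_\theta^\OO-z_0)^{-1}\colon L^2\to H^2$ bounded, so I can fix $\beta>0$ (depending on $z_0,\theta,\OO$, not on $\delta$) with $\|R_w(Q_\theta^\OO-z_0)^{-1}\|_{L^2\to L^2}\le\tfrac12$ whenever $\|\nabla w\|_\infty\le\beta$ and $\|\nabla^2 w\|_\infty\le\beta$; then
\[
    e^w(Q_\theta^\OO-z_0)^{-1}e^{-w}=(Q_\theta^\OO-z_0)^{-1}\bigl(I+R_w(Q_\theta^\OO-z_0)^{-1}\bigr)^{-1}
\]
is bounded on $L^2$ by $2M_0$, $M_0:=\|(Q_\theta^\OO-z_0)^{-1}\|_{L^2\to L^2}$, uniformly over such $w$.

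Next, given $x,y\in\RR^n\setminus\OO$ with $|x-y|>\delta$, I would apply this to a mollification $w$ of the function $x'\mapsto\beta(|x-y|-|x'-x|)_+$ — it is admissible and, up to a fixed additive error $O(\beta)$ absorbed into constants, $w\le\beta\delta$ on $B(y,\delta/8)$ and $w\ge\beta|x-y|-\beta\delta$ on $B(x,\delta/8)$ — and choose $\psi_x,\psi_y\in\CIc(\RR^n)$ with disjoint supports, $\supp\psi_x\subset B(x,\delta/8)$, $\supp\psi_y\subset B(y,\delta/8)$, $\psi_x\equiv1$ near $x$, $\psi_y\equiv1$ near $y$. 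Then $\psi_x(Q_\theta^\OO-z_0)^{-1}\psi_y=(\psi_xe^{-w})\bigl(e^w(Q_\theta^\OO-z_0)^{-1}e^{-w}\bigr)(e^w\psi_y)$ gives $\|\psi_x(Q_\theta^\OO-z_0)^{-1}\psi_y\|_{L^2\to L^2}\le C_\delta e^{-\beta|x-y|}$. To turn this into a pointwise bound I would use that $G$ is smooth off the diagonal and solves $(Q_\theta^\OO-z_0)_{x'}G=0$ with $G|_{\partial\OO}=0$ in the first variable and $((Q_\theta^\OO)^*-\bar z_0)_{y'}\overline{G}=0$ with Dirichlet condition in the second: for $g\in L^2$, $u:=(Q_\theta^\OO-z_0)^{-1}(\psi_y g)$ solves the homogeneous equation on $B(x,\delta/8)$, so interior and Dirichlet boundary elliptic estimates ($\partial\OO$ smooth) together with Sobolev embedding give $\|u\|_{C^0(B(x,\delta/32))}\le C_\delta\|\psi_x u\|_{L^2}\le C_\delta e^{-\beta|x-y|}\|g\|_{L^2}$, i.e.\ $\sup_{x'\in B(x,\delta/32)}\|\psi_y\,G(z_0;x',\cdot)\|_{L^2}\le C_\delta e^{-\beta|x-y|}$; applying the same estimate to $\overline{G(z_0;x',\cdot)}$ on $B(y,\delta/16)$ (the formal adjoint is again elliptic with $\CIb$ coefficients) then yields $|G(z_0;x,y)|\le C_\delta e^{-\beta|x-y|}$.

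I expect the main obstacle to be bookkeeping rather than anything deep: the two places that need care are keeping the admissible Lipschitz constant $\beta$ and the resolvent norm $M_0$ uniform — in particular $\beta$ independent of the base point and of $\delta$, which is exactly what the uniform ellipticity and $\CIb$ coefficient structure guaranteed by \eqref{P_theta elliptic}--\eqref{P_theta at infinity} provide — and carrying out the elliptic regularity of the last step up to $\partial\OO$, handled by Dirichlet boundary elliptic estimates on the smooth compact hypersurface $\partial\OO$.
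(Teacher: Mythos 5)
Your proposal is correct and follows essentially the same route as the paper: after transporting $Q_\theta^\OO$ to a uniformly elliptic operator on $\RR^n\setminus\OO$ via $f_\theta$ (with the same Jacobian observation), the paper simply cites Shubin's Appendix~1, which is precisely the Combes--Thomas/Agmon exponential-weight argument you carry out, including the trick of defining the weight on all of $\RR^n$ and restricting to sidestep the incompleteness of $\RR^n\setminus\OO$. The only difference is that you spell out the conjugation, uniform Neumann-series step, and elliptic $L^2\to C^0$ bootstrap that the paper outsources to the reference.
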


\begin{proof}
Identifying $\Gamma_\theta$ and $\RR^n$ by means of $f_\theta$, the pullback $f_\theta^*$ gives an isomorphism between $L^2(\Gamma_\theta\setminus\OO)$ and $L^2(\RR^n\setminus\OO)$ since there exists $C>0$ such that
\[ 
    C^{-1} < |\Det df_\theta(x)| = |x|^{1-n}|g_\theta(|x|)|^{n-1} |g_\theta '(|x|)| < C,\quad\textrm{for all }x. 
\]
Let $\Tilde{Q}_\theta^\OO := f_\theta^* Q_\theta^\OO (f_\theta^*)^{-1} : L^2(\RR^n\setminus\OO)\to L^2(\RR^n\setminus\OO)$ then $\Tilde{Q}_\theta^\OO$ is elliptic and equipped with the domain $H^2(\RR^n\setminus\OO)\cap H_0^1(\RR^n\setminus\OO)$. Moreover, $(\Tilde{Q}_\theta^\OO - z_0)^{-1}$ exists and we denote its Schwartz kernel by $\Tilde{G}(z_0;x,y)$, $x,y\in\RR^n\setminus\OO$, i.e.  $\Tilde{G}(z_0;x,y)=[(\Tilde{Q}_\theta^\OO - z_0)^{-1}\delta_y(\cdot)](x)$ where $\delta_y$ is the Dirac function supported at $y$. 

The same arguments as in \cite[Appendix 1]{shubin} show that there exists $\beta>0$ such that for every $\delta>0$ there exists $C_\delta>0$ such that
\[
    |\Tilde{G}(z_0;x,y)| \leq C_\delta\, e^{-\beta|x-y|}\quad\textrm{if}\quad |x-y|>\delta.
\]
We remark that the assumption in \cite[Appendix 1.1]{shubin} that the manifold $M$ is complete can be dropped if we introduce $\Tilde{d}(x,y)$, the substitute with smoothness properties for the distance $|x-y|$, on the whole $\RR^n$ then restrict it to $\RR^n\setminus\OO$. The remaining arguments in \cite[Appendix 1.2]{shubin} is still valid if we replace $M$ by $\RR^n\setminus\OO$. 

Using $(\Tilde{Q}_\theta^\OO - z_0)^{-1} = f_\theta^* (Q_\theta^\OO - z_0)^{-1} (f_\theta^*)^{-1}$ we obtain that 
\[
    G(z_0;f_\theta(x),f_\theta(y)) = (\Det df_\theta(y))^{-1} \Tilde{G}(z_0;x,y),\quad x,y\in\RR^n\setminus\OO,
\]
the desired estimate of $G(z_0;x_\theta,y_\theta)$ then follows from the estimate of $\Tilde{G}(z_0;x,y)$.
\end{proof}

We now state a more precise version of Theorem \ref{t:1}:
\begin{thm}
Suppose that $\Omega\Subset \{ z :\,-2\theta_0 < \arg z < 3\pi/2 + 2\theta_0 \}$. Then exists $\delta_0=\delta_0(\Omega)>0$ such that $\forall\,0<\delta<\delta_0$, $\exists\,\epsilon_\delta>0$ such that 
\begin{equation}
\label{eqn:convergence 1}    
0<\epsilon<\epsilon_\delta \implies \Spec(P_\epsilon)\cap \Omega_\delta \subset \bigcup_{j=1}^J D(z_j,\delta), 
\end{equation}
where $z_1,\cdots,z_J$ are the resonances of $P$ in $\Omega$ and $\Omega_\delta := \{z\in\Omega : \dist(z,\partial\Omega)>\delta \}$. Furthermore, for each resonance $z_j$ with the multiplicity $m(z_j)$ given by \eqref{eqn:multiplicity z0},   
\begin{equation}
\label{eqn:convergence 2}
    \#\,\Spec(P_\epsilon) \cap D(z_j,\delta) = m(z_j),\quad\forall\,0<\epsilon<\epsilon_\delta,
\end{equation}
where the eigenvalue in $\Spec(P_\epsilon)$ is counted with multiplicity defined in \eqref{eqn:Pepstheta mult}.
\end{thm}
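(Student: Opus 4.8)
The plan is to reduce everything to the complex--scaled operators and then to compare $P_{\epsilon,\theta}$ with $P_{0,\theta}=P_\theta$ through the Dirichlet-to-Neumann operator $\widehat{\NN}_{\epsilon,\theta}$ of \S\ref{section:N operator}. Fix $\theta\in(0,\theta_0)$ with $\overline\Omega\subset\{-2\theta<\arg z<3\pi/2+2\theta\}$; note $\overline\Omega\subset\{-\pi/4<\arg z<7\pi/4\}$ since $\theta_0\le\pi/8$. For $\epsilon>0$, Lemmas \ref{lem:agreement of resolvent} and \ref{lem:Pepstheta mult} identify $\Spec(P_\epsilon)$ with $\Spec(P_{\epsilon,\theta})$ in $\Omega$, with equal multiplicities, while the resonances of $P$ in $\Omega$ are by definition the eigenvalues $z_1,\dots,z_J$ of $P_\theta=P_{0,\theta}$ there (finitely many, by discreteness) with the multiplicities \eqref{eqn:multiplicity z0}; so it is enough to compare $P_{\epsilon,\theta}$ with $P_{0,\theta}$. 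For each $j$ choose, using Lemmas \ref{lem:deform obstacle} and \ref{lem:deform obstacle P sharp} applied in succession, an obstacle $\OO_j$ --- a small deformation of a fixed one, with $\chi\equiv1$ near $\overline{\OO_j}$ --- such that $z_j\notin\Spec(P^{\OO_j})\cup\Spec(Q^{\OO_j}_\theta)$ (resonances lying in $\Spec_{\comp}(P)$ are the exception here and are treated directly: their eigenfunctions are supported where $\chi\equiv1$, hence are annihilated by $(1-\chi)x^2$ and remain eigenfunctions of $P_\epsilon$, with the same multiplicity, for every $\epsilon$). Since $\Spec(P^{\OO_j})$ and $\Spec(Q^{\OO_j}_\theta)$ are discrete, fix $r_j>0$ so small that $\overline{D(z_j,2r_j)}\subset\Omega$ contains no resonance other than $z_j$ and meets neither of these two spectra; set $\delta_0:=\min_j r_j$.

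For \eqref{eqn:convergence 1}, fix $0<\delta<\delta_0$. The set $K_\delta:=\overline{\Omega_\delta}\setminus\bigcup_j D(z_j,\delta)$ is a compact subset of the sector disjoint from $\Spec(P_\theta)$, so Lemma \ref{lem:Pepstheta resolvent norm}, applied to a neighbourhood of $K_\delta$, gives $\epsilon_\delta>0$ with $\Spec(P_{\epsilon,\theta})\cap K_\delta=\emptyset$, hence $\Spec(P_\epsilon)\cap\Omega_\delta\subset\bigcup_j D(z_j,\delta)$, for $0<\epsilon<\epsilon_\delta$. Applying the same lemma to each annulus $\{\delta\le|z-z_j|\le r_j\}$ (also disjoint from $\Spec(P_\theta)$) and shrinking $\epsilon_\delta$, we get $\#(\Spec(P_{\epsilon,\theta})\cap D(z_j,\delta))=\#(\Spec(P_{\epsilon,\theta})\cap D(z_j,r_j))$; together with the multiplicity agreement of Lemma \ref{lem:Pepstheta mult}, \eqref{eqn:convergence 2} thus reduces to showing $\#(\Spec(P_{\epsilon,\theta})\cap D(z_j,r_j))=m(z_j)$ for $\epsilon$ small.

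For this I would use Lemma \ref{lem:DtoN} with the obstacle $\OO_j$. The spectrum $\Spec(P^{\OO_j})$ does not depend on $\epsilon$, and the analogue of Lemma \ref{lem:Pepstheta resolvent norm} for $Q^{\OO_j}_{\epsilon,\theta}$ --- a black box perturbation of $H_{\epsilon,\theta}$ with an obstacle, to which the proof of that lemma applies verbatim --- keeps $\Spec(Q^{\OO_j}_{\epsilon,\theta})$ out of $\overline{D(z_j,2r_j)}$ while giving resolvent bounds $L^2\to H^2$ uniform in $\epsilon$; so for all small $\epsilon\ge0$ Lemma \ref{lem:DtoN} applies on a neighbourhood of $\overline{D(z_j,2r_j)}$ and gives
\[
\#\big(\Spec(P_{\epsilon,\theta})\cap D(z_j,r_j)\big)=\frac{1}{2\pi i}\,\tr\oint_{\partial D(z_j,r_j)}\widehat{\NN}_{\epsilon,\theta}(w)^{-1}\,\partial_w\widehat{\NN}_{\epsilon,\theta}(w)\,dw ,
\]
the left-hand side being $m(z_j)$ when $\epsilon=0$, since then the only pole of $\widehat{\NN}_{0,\theta}^{-1}$ in $D(z_j,r_j)$ is the resonance $z_j$.

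It remains to show that the right-hand side is independent of $\epsilon$ for $\epsilon$ small, which I would obtain from the Gohberg--Sigal--Rouch\'e theorem in the form used in the proof of Lemma \ref{lem:DtoN}, once $\sup_{w\in\partial D(z_j,r_j)}\|\widehat{\NN}_{\epsilon,\theta}(w)-\widehat{\NN}_{0,\theta}(w)\|_{H^{3/2}(\partial\OO_j)\to H^{3/2}(\partial\OO_j)}\to0$ as $\epsilon\to0+$. Since $\NN_P^{\In}$ is $\epsilon$-free and, choosing $E^{\out}\varphi$ in \eqref{eqn:N out with extension} supported in the region where $\chi\equiv1$, the function $(Q_{\epsilon,\theta}-w)E^{\out}\varphi=(Q_\theta-w)E^{\out}\varphi$ is $\epsilon$-free with compact support near $\partial\OO_j$, the resolvent identity gives
\[
\NN^{\out}_{\epsilon,\theta}(w)\varphi-\NN^{\out}_{0,\theta}(w)\varphi=-i\epsilon\,\partial_{\nu_g}\Big((Q^{\OO_j}_{\epsilon,\theta}-w)^{-1}(1-\chi)x_\theta^2\,(Q^{\OO_j}_\theta-w)^{-1}(Q_\theta-w)E^{\out}\varphi\Big).
\]
By Lemma \ref{lem:green function} the function $(Q^{\OO_j}_\theta-w)^{-1}(Q_\theta-w)E^{\out}\varphi$ decays like $e^{-\beta|x|}$, so multiplication by the unbounded weight $(1-\chi)x_\theta^2$ still yields an $L^2$ function of norm $O(\|\varphi\|_{H^{3/2}})$ uniformly for $w\in\partial D(z_j,r_j)$; composing with the uniformly bounded $(Q^{\OO_j}_{\epsilon,\theta}-w)^{-1}:L^2\to H^2$ and then applying $\langle D_{\partial\OO_j}\rangle^{-1}$ produces the desired $O(\epsilon)$ bound. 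As $\widehat{\NN}_{0,\theta}(w)$ is invertible on $\partial D(z_j,r_j)$, the two contour integrals agree for $\epsilon$ small, so $\#(\Spec(P_\epsilon)\cap D(z_j,\delta))=\#(\Spec(P_{\epsilon,\theta})\cap D(z_j,r_j))=m(z_j)$, which is \eqref{eqn:convergence 2}. The crux of the argument --- and essentially the only place where the analyticity and decay assumptions on the coefficients of $Q$ are used --- is the Green function estimate of Lemma \ref{lem:green function}: it tames the unbounded factor $x_\theta^2$ and makes $(Q^{\OO_j}_\theta-w)^{-1}(Q_\theta-w)E^{\out}\varphi$ lie in $\Dom(Q^{\OO_j}_{\epsilon,\theta})$, legitimising the resolvent identity above.
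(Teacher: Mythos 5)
Your overall strategy coincides with the paper's: reduce to the complex--scaled operator $P_{\epsilon,\theta}$, rule out spurious eigenvalues outside $\bigcup_j D(z_j,\delta)$ via the uniform resolvent bound of Lemma~\ref{lem:Pepstheta resolvent norm}, deform an artificial obstacle to make $\widehat{\NN}_{\epsilon,\theta}$ well-defined near each $z_j$, and finish by Gohberg--Sigal--Rouch\'e using the Green--function estimate of Lemma~\ref{lem:green function} to control the unbounded weight $x_\theta^2$. Using one obstacle $\OO_j$ per resonance instead of a single obstacle that works for all $z_j$, and applying the off-diagonal decay to a compactly supported input rather than running a Schur test, are both harmless and valid variants of what the paper does.

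The gap is in your parenthetical dismissal of resonances $z_j\in\Spec_{\comp}(P)$. You write that ``their eigenfunctions are supported where $\chi\equiv1$, hence \dots\ remain eigenfunctions of $P_\epsilon$, with the same multiplicity, for every $\epsilon$,'' but this presumes that the entire generalized eigenspace of $P_\theta$ at such a $z_j$ consists of compactly supported functions, and that is false in general. One can have $m(z_j)>\dim V_j$ where $V_j=\{u\in\DD_{\comp}:(P-z_j)u=0\}$: the multiplicity of a resonance can split into a compactly supported bound-state part and a genuinely outgoing resonant part. For such a $z_j$ the DtoN argument genuinely fails because $z_j\in\Spec(P^{\OO})$ for \emph{every} admissible obstacle (Lemma~\ref{lem:deform obstacle P sharp} explicitly excludes $\Spec_{\comp}(P)$), so your proof does not account for the $m(z_j)-\dim V_j$ eigenvalues of $P_\epsilon$ that should cluster near $z_j$ as $\epsilon\to0+$. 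The paper closes this by the orthogonal decomposition $\HH = V_0\oplus\Tilde{\HH}_{R_0}\oplus L^2(\RR^n\setminus B(0,R_0))$, $V_0=\bigoplus_j V_j$, replacing $P$ by the restriction $\Tilde{P}=(I-\Pi_0)P$: then $V_0$ is $P_\epsilon$-invariant with $P_\epsilon|_{V_0}=P|_{V_0}$, one has the bookkeeping identities $m(z_j)=m_{\Tilde P}(z_j)+\dim V_j$ and $\#\Spec(P_\epsilon)\cap D(z_j,\delta)=\#\Spec(\Tilde P_\epsilon)\cap D(z_j,\delta)+\dim V_j$, and $\Tilde P$ has no compactly supported embedded eigenvalues in $\Omega$, so the obstacle-deformation and DtoN machinery applies to $\Tilde P$ without obstruction. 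You need this (or an equivalent) reduction; as written, your argument only establishes \eqref{eqn:convergence 2} under the extra hypothesis $\Spec_{\comp}(P)\cap\Omega=\emptyset$.
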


\begin{proof}
First we put $\delta_0=\frac{1}{2}\min_{1\leq j\leq J} \dist(z_j,\partial\Omega)$ and fix $\theta\in[0,\theta_0)$ such that $\Omega\Subset \{ z :\,-2\theta < \arg z < 3\pi/2 + 2\theta \}$. To prove \eqref{eqn:convergence 1} we argue by contradiction. Suppose that there exist some $\delta<\delta_0$ and a sequence $\epsilon_k\to 0 +$ such that
\[    \exists\, z_k\in\Spec(P_{\epsilon_k})\cap\Omega_\delta\setminus \bigcup_{j=1}^J D(z_j,\delta),\quad k=1,2,\cdots
\]
Then there exists a subsequence $z_{n_k}\to z_0$, as $k\to\infty$, for some $z_0\in\overline{\Omega_\delta}\setminus\bigcup_{j=1}^J D(z_j,\delta)$. Since $z_0\in\Omega$, we see that $z_0$ is not a resonance, thus $P_\theta - z_0$ is invertible by definition. We may assume that $D(z_0,r)$ is disjoint with $\Spec(P_\theta)$ for some $r>0$, it then follows from Lemma \ref{lem:Pepstheta resolvent norm} that $\Spec(P_{\epsilon,\theta})\cap D(z_0,r) = \emptyset$ for $\epsilon$ small enough. However, Lemma \ref{lem:Pepstheta mult} shows that $\Spec(P_{\epsilon_{n_k},\theta})= \Spec(P_{\epsilon_{n_k}})\owns z_{n_k} \to z_0$ while $\epsilon_{n_k}\to 0 +$, which gives a contradiction.

It remains to prove \eqref{eqn:convergence 2}. For each resonance $z_j$, let 
\[
    V_j:=\{ u\in\DD_{\comp} : (P-z_j)u=0 \},
\]
then $V_j$ is finite dimensional and $V_j\neq \{0\}$ if and only if $z_j\in\Spec_{\comp}(P)$. We remark that $V_j$ is a subspace of $\HH_{R_0}$ given in \eqref{eqn:Hilbert space}, as a consequence of the unique continuation results for second order elliptic equations. The self-adjointness of $P$ implies that $V_1 \perp \cdots\perp V_J$ in the Hilbert space $\HH$. Putting $V_0:=V_1\oplus\cdots\oplus V_J$, $\HH$ admits the following orthogonal decomposition:
\begin{equation}
\label{eqn:orthogonal decomposition tilde H}
    \HH = V_0\oplus \Tilde{\HH}_{R_0} \oplus L^2(\RR^n\setminus B(0,R_0)).
\end{equation}
Let $\Pi_0:\HH\to V_0$ be the orthogonal projection. Since $V_0$ is an invariant subspace under $P$, we can introduce the restriction of $P$ as follows: 
\[\Tilde{P}:\Tilde{\HH}_{R_0} \oplus L^2(\RR^n\setminus B(0,R_0))\to \Tilde{\HH}_{R_0} \oplus L^2(\RR^n\setminus B(0,R_0)),\quad \Tilde{P}u:=(I-\Pi_0)Pu.\]
If we replace $\HH_{R_0}$ with $\Tilde{\HH}_{R_0}$ and replace $P$ by $\Tilde{P}$, which is also self-adjoint with domain $\Tilde{\DD}:=(I-\Pi_0)\DD$, it is easy to see that the assumptions \eqref{eqn:P} -- \eqref{eqn:Q defn} are still satisfied. Recalling the definition of resonances introduced in \S \ref{complex scaling}, any resonance of $\Tilde{P}$ must also be a resonance of $P$ and we have
\[  
    m(z_j)=\rank\oint_{z_j} (z-\Tilde{P}_\theta)^{-1} dz + \dim V_j.
\]
Note that $V_j\neq\{0\}$ implies that $z_j\in\Spec(P_\epsilon)$ for every $\epsilon>0$. Putting $\Tilde{P}_\epsilon := \Tilde{P} - i\epsilon(1-\chi(x))x^2$, it follows that 
\[
    \#\,\Spec(P_\epsilon) \cap D(z_j,\delta) = \#\,\Spec(\Tilde{P}_\epsilon) \cap D(z_j,\delta) + \dim V_j,\quad\forall\,\epsilon>0,
\]
while both sides are counted with multiplicities. Hence it is enough to establish \eqref{eqn:convergence 2} for $\Tilde{P}$. In other words, it suffices to prove \eqref{eqn:convergence 2} in the case that $P$ has no compactly supported embedded eigenvalues in $\Omega$.

Now we assume that $\Spec_{\comp}(P)\cap\Omega=\emptyset$. Lemma \ref{lem:deform obstacle} and \ref{lem:deform obstacle P sharp} show that there exists an obstacle $\OO\subset B(0,R_1)$ containing $\overline{B(0,R_0)}$ such that $\chi$ in \eqref{eqn:Peps} is equal to $1$ near $\OO$ and that $z_j\notin\Spec(P^\OO)\cup\Spec(Q_\theta^\OO)$, $j=1,\cdots,J$. Then we can decrease $\delta_0$ such that $\Spec(P^\OO)$ and $\Spec(Q_\theta^\OO)$ are disjoint with $\bigcup_{j=1}^J D(z_j,2\delta_0)$. For each $\delta\in(0,\delta_0)$, we can also decrease $\epsilon_\delta$ in \eqref{eqn:convergence 1} such that 
\[
    \forall\,0\leq \epsilon< \epsilon_\delta,\quad \bigcup_{j=1}^J D(z_j,2\delta)\cap\Spec(Q_{\epsilon,\theta}^\OO) = \emptyset.
\]
This follows from Lemma \ref{lem:Pepstheta resolvent norm} applied with $P_\theta = Q_\theta^\OO$ and $\Omega=\bigcup_{j=1}^J D(z_j,2\delta)$. Hence the Dirichlet-to-Neumann operators $\widehat{\NN}_{\epsilon,\theta}(z),\ 0\leq\epsilon<\epsilon_\delta$ introduced in \S \ref{section:N operator}, are well-defined for $z\in\bigcup_{j=1}^J D(z_j,2\delta)$. In view of \eqref{eqn:convergence 1}, Lemma \ref{lem:Pepstheta mult} and \ref{lem:DtoN} we obtain that $\partial D(z_j,\delta)\owns w\mapsto \widehat{\NN}_{\epsilon,\theta}(w)^{-1}$ exists and that for all $0<\epsilon<\epsilon_\delta$, $j=1,\cdots,J$,
\begin{equation}
\label{eqn:count eigenvalues using NN}
    \#\,\Spec(P_\epsilon) \cap D(z_j,\delta) = \frac{1}{2\pi i}\tr \int_{\partial D(z_j,\delta)} \widehat{\NN}_{\epsilon,\theta}(w)^{-1} \partial_w \widehat{\NN}_{\epsilon,\theta}(w)dw.
\end{equation}
In order to apply the Gohberg--Sigal--Rouch\'e theorem, we need the estimate:
\begin{equation}
\label{eqn:Rouche estimate}
    \forall\,0<\epsilon<\epsilon_\delta,\quad
    \|\widehat{\NN}_{\epsilon,\theta}(w) - \widehat{\NN}_{\theta}(w)\|_{H^{3/2}(\partial\OO)\to H^{3/2}(\partial\OO)} < 1,\quad w\in\partial D(z_j,\delta),
\end{equation}
here we write $\widehat{\NN}_\theta(\cdot)=\widehat{\NN}_{0,\theta}(\cdot)$ for simplicity. To obtain this estimate, we first choose $E^\out$ in \eqref{eqn:N out with extension} such that $\chi=1$ near $\supp E^\out \varphi$ for any $\varphi\in H^{3/2}(\partial\OO)$, then \eqref{eqn:N out with extension} reduces to $\NN_{\epsilon,\theta}^\out (z) \varphi = \partial_{\nu_g} (E^\out \varphi - (Q_{\epsilon,\theta}^\OO - z)^{-1}(Q-z)E^\out \varphi)$. Therefore,
\[
    (\widehat{\NN}_{\epsilon,\theta}(w) - \widehat{\NN}_{\theta}(w))\varphi
     = \langle D_{\partial\OO} \rangle^{-1} \partial_{\nu_g}((Q_{\theta}^\OO - w)^{-1}-(Q_{\epsilon,\theta}^\OO - w)^{-1})(Q-w)E^\out \varphi.
\]
Choosing $\psi\in\CIc(\Gamma_\theta\setminus\OO)$ such that $\psi=1$ near $\supp E^\out \varphi$, $\forall\varphi\in H^{3/2}(\partial\OO)$ and that $\chi=1$ near $\supp\psi$, \eqref{eqn:Rouche estimate} then follows from the following estimate: for $w\in\partial D(z_j,\delta)$,
\begin{equation}
\label{eqn:Rouche estimate 1}
    ((Q_{\theta}^\OO - w)^{-1}-(Q_{\epsilon,\theta}^\OO - w)^{-1})\psi = O_\delta(\epsilon):L^2(\Gamma_\theta\setminus\OO)\to H^2(\Gamma_\theta\setminus\OO).
\end{equation}
To obtain \eqref{eqn:Rouche estimate 1}, we denote the Schwartz kernel of the operator $(1-\chi)x_\theta^2 (Q_{\epsilon,\theta}^\OO - w)^{-1} \psi$ by $K(w;x_\theta,y_\theta)$. In the notation of Lemma \ref{lem:green function}, we have
\[ 
    K(w;f_\theta(x),f_\theta(y)) = (1-\chi(x))f_\theta(x)^2 G(w;f_\theta(x),f_\theta(y))\psi(y). 
\]
It follows from Lemma \ref{lem:green function} that there exists $\beta_\delta>0$ such that for all $w\in \partial D(z_j,\delta)$, $j=1,\cdots,J$, $|K(w;f_\theta(x),f_\theta(y))|\leq C|x|^2 e^{-\beta_\delta|x-y|}\psi(y)$, thus
\[
    \sup_{x_\theta} \int_{\Gamma_\theta\setminus\OO} |K(w;x_\theta,y_\theta)||dy_\theta|\leq C_\delta,\quad \sup_{y_\theta} \int_{\Gamma_\theta\setminus\OO} |K(w;x_\theta,y_\theta)||dx_\theta|\leq C_\delta.
\]
The Schur test shows that $(1-\chi)x_\theta^2 (Q_{\epsilon,\theta}^\OO - w)^{-1} \psi = O_\delta(1):L^2(\Gamma_\theta\setminus\OO)\to L^2(\Gamma_\theta\setminus\OO)$. Hence we can write
\[
    ((Q_{\theta}^\OO - w)^{-1}-(Q_{\epsilon,\theta}^\OO - w)^{-1})\psi = -i\epsilon(Q_{\epsilon,\theta}^\OO - w)^{-1} (1-\chi)x_\theta^2 (Q_{\theta}^\OO - w)^{-1}\psi.
\]
It remains to show that for $\epsilon_\delta>0$ small enough,
\[
    (Q_{\epsilon,\theta}^\OO - w)^{-1} = O_\delta(1): L^2(\Gamma_\theta\setminus\OO)\to H^2(\Gamma_\theta\setminus\OO),\quad w\in\bigcup_{j=1}^J \partial D(z_j,\delta),\ 0<\epsilon<\epsilon_\delta.
\]
This follows from Lemma \ref{lem:Pepstheta resolvent norm} with $P_\theta=Q_\theta^\OO$ and $\Omega=\bigcup_{j=1}^J \partial D(z_j,\delta)$. Using \eqref{eqn:Rouche estimate 1} we can decrease $\epsilon_\delta$ such that \eqref{eqn:Rouche estimate} holds for $j=1,\cdots,J$. Now we apply the Gohberg--Sigal--Rouch\'e theorem to conclude that for all $0<\epsilon<\epsilon_\delta$ and $j=1,\cdots,J$,
\[
    \frac{1}{2\pi i}\tr \int_{\partial D(z_j,\delta)} \widehat{\NN}_{\epsilon,\theta}(w)^{-1} \partial_w \widehat{\NN}_{\epsilon,\theta}(w)dw = \frac{1}{2\pi i}\tr \int_{\partial D(z_j,\delta)} \widehat{\NN}_{\theta}(w)^{-1} \partial_w \widehat{\NN}_{\theta}(w)dw.
\]
Finally, using Lemma \ref{lem:DtoN}, \eqref{eqn:count eigenvalues using NN} and the equation above, we obtain \eqref{eqn:convergence 2}.
\end{proof}

\def\arXiv#1{\href{http://arxiv.org/abs/#1}{arXiv:#1}}


\begin{thebibliography}{0}

\bibitem[AgCo71]{AgCo} J. Aguilar and J.~M. Combes,
	\emph{A class of analytic perturbations for one-body Schr\"odinger Hamiltonians,\/}
	Comm. Math. Phys. \textbf{22}(1971), 269--279.

\bibitem[BaCo71]{BaCo} E. Balslev and J.~M. Combes,
	\emph{Spectral properties of many-body Schr\"odinger operators with dilation analytic interactions,\/}
	Comm. Math. Phys. \textbf{22}(1971), 280--294.

\bibitem[DaRi17]{dang2017} N.~V. Dang and G. Rivi\`{e}re,
    \emph{Pollicott-Ruelle spectrum and Witten Laplacians,\/}
    \arXiv{1709.04265}, J. Eur. Math. Soc., to appear.

\bibitem[Da99]{Dav} E.~B. Davies,
        \emph{Pseudospectra, the harmonic oscillator and complex resonances,\/}
        Proc. R. Soc. Lond. A \textbf{455}(1999), 585--599.
        
\bibitem[Dr17]{Drouot}
A. Drouot,
\emph{Stochastic Stability of Pollicott--Ruelle Resonances,\/}
Comm. Math. Phys. \textbf{356}(2017), 357--396.
        
\bibitem[DyZw15]{dyatlov2015}
S. Dyatlov and M. Zworski,
\emph{Stochastic stability of Pollicott--Ruelle resonances,\/}
Nonlinearity \textbf{28}(2015), 3511--3534.

\bibitem[DyZw19]{res}
S. Dyatlov and M. Zworski,
\emph{Mathematical theory of scattering resonances,\/}
Graduate Studies in Mathematics \textbf{200}, AMS 2019.

\bibitem[FLN11]{Frenkel}
E. Frenkel, A. Losev and N. Nekrasov,
\emph{Instantons beyond topological theory. I,\/}
J. Inst. Math. Jussieu, \textbf{10}(2011), 463--565.

\bibitem[GaZw19]{galkowski2019}
J. Galkowski and M. Zworski,
\emph{Viscosity limits for 0th order pseudodifferential operators,\/}
\arXiv{1912.09840}.

\bibitem[GoSi71]{gohberg1971operator}
IC U. Gohberg and E. I. Sigal,
\emph{An operator generalization of the logarithmic residue theorem and the theorem of Rouch{\'e},\/}
Math. U.S.S.R. Sbornik \textrm{13}(1971), 607--629.

\bibitem[HSV13]{HSV} M. Hitrik, J. Sj\"ostrand, and J. Viola,
	\emph{Resolvent Estimates for Elliptic Quadratic Differential Operators,\/}
	Analysis \& PDE \textbf{6}(2013), 181--196.
	
\bibitem[H{\"o}III]{hormander}
L. H{\"o}rmander,
\emph{The Analysis of Linear Partial Differential Operators III: Pseudo-differential operators,\/}
Springer Science \& Business Media, 2007.

\bibitem[J*14]{Jag} T-C. Jagau, D. Zuev, K.~B. Bravaya, E. Epifanovsky, and A.I. Krylov,
	\emph{A Fresh Look at Resonances and Complex Absorbing Potentials: Density Matrix-Based Approach,\/}
	J. Phys. Chem. Lett. \textbf{5}(2014), 310--315.
	
\bibitem[KaNa20]{Kameoka2020}
K. Kameoka and S. Nakamura,
\emph{Resonances and viscosity limit for the Wigner-von Neumann type Hamiltonian,\/}
\arXiv{2003.07001}.
	
\bibitem[KlZw95]{klopp}
F. Klopp and M. Zworski,
\emph{Generic simplicity of resonances,\/}
Helv. Phys. Acta \textbf{68}(1995), 531--538.
	
\bibitem[LeUh89]{LeUh1989}
J.~M. Lee and G. Uhlmann,
\emph{Determining anisotropic real-analytic conductivities by boundary measurements,\/}
Communications on Pure and Applied Mathematics \textbf{42}(1989), 1097--1112.
  
\bibitem[NoZw09]{NZ1} 
S. Nonnenmacher and M. Zworski,
\emph{Quantum decay rates in chaotic scattering,\/}
Acta Math. \textbf{203}(2009), 149--233.

\bibitem[NoZw15]{NZ2}
S. Nonnenmacher and M. Zworski,
\emph{Decay of correlations for normally hyperbolic trapping,\/}
Invent. Math. \textbf{200}(2015), 345--438.

\bibitem[Pe04]{pereira}
M. Pereira,
\emph{Generic simplicity of eigenvalues for a Dirichlet problem of the bilaplacian operator,\/}
Electron. J. Differential Equations \textbf{2004}(2004), 1--21.

\bibitem[RiMe95]{RiMe} U.~V. Riss and H.~D. Meyer,
	\emph{Reflection-Free Complex Absorbing Potentials,\/}
	J.~Phys.~B
	\textbf{28} \\(1995), 1475--1493.

\bibitem[SeMi92]{semi}
T. Seideman and W.~H. Miller, {\em
Calculation of the cumulative reaction probability via a discrete variable representation with absorbing boundary conditions,}
J. Chem. Phys. {\bf 96}(1992), 4412--4422.

\bibitem[Sh92]{shubin}
M. A. Shubin,
\emph{Spectral theory of elliptic operators on noncompact manifolds,\/} 
Ast{\'e}risque \textbf{207}(1992), 35--108.

\bibitem[Si79]{simon}
B. Simon,
\emph{The definition of molecular resonance curves by the method of exterior complex scaling,\/} 
Phys. Lett. A \textbf{71}(1979), 211--214.

\bibitem[Sj97]{sj97}
J. Sj{\"o}strand,
\emph{A trace formula and review of some estimates for resonances,\/}
in {\em Microlocal analysis and spectral theory}, volume 490 of {\em NATO ASI series C}, pages 377--437. Kluwer,
1997.
  
\bibitem[SjZw91]{SZ1} J. Sj\"ostrand and M. Zworski,
	\emph{Complex scaling and the distribution of scattering poles,\/}
	J. Amer. Math. Soc. \textbf{4}(1991), 729--769.
	
\bibitem[St94]{stefanov1994}
P. Stefanov,
\emph{Stability of resonances under smooth perturbations of the boundary,\/}
Asymptotic analysis \textbf{9}(1994), 291--296.
	
\bibitem[St05]{stefanov2005}
P. Stefanov,
\emph{Approximating resonances with the complex absorbing potential method,\/}
Comm. Partial Differential Equations \textbf{30}(2005), 1843--1862.

\bibitem[Va13]{V} A. Vasy,
{\em  Microlocal analysis of asymptotically hyperbolic and Kerr-de
  Sitter spaces}, with an appendix by Semyon Dyatlov.
 \arXiv{1012.4391}, Invent. Math.,
{\bf 194}(2013), 381--513.

\bibitem[Xi20]{xiong2020}
Haoren Xiong,
\emph{Resonances as viscosity limits for exponentially decaying potentials,\/}
\arXiv{2005.01257}, J. Math. Phys., to appear.

\bibitem[Zw18]{Zw-vis}
Maciej Zworski,
\emph{Scattering resonances as viscosity limits,\/}
in {\em Algebraic and Analytic Microlocal Analysis}, M. Hitrik, D. Tamarkin, B. Tsygan, and S. Zelditch, eds. Springer, 2018. 


\end{thebibliography}
\end{document}